\ifdefined\pdfoutput\else
  \errmessage{Wrong TeX engine: select pdfLaTeX, or run build-paper.cmd}
\fi
\makeatletter
\def\input@path{{template/}}
\makeatother
\documentclass[a4paper,USenglish]{lipics-v2021}
\pdfoutput=1
\hideLIPIcs
\usepackage{mathtools}
\usepackage{bm}
\usepackage{booktabs}
\usepackage{tabularx}
\usepackage{xcolor}
\usepackage{tikz}
\newcommand{\kmain}{k_0}% number of main rows (formerly m)
\newcommand{\Pmod}{\Pi}% CRT modulus (formerly P)
\newcommand{\Ncl}{N_c}% number of clauses (formerly C)
\newcommand{\slotreq}[1]{\gamma_{#1}}% slot requirement (formerly m_i)
\mathtoolsset{showonlyrefs}
\title{Computational Complexity and Exact Algorithms for Pressure-Driven Hydrogen Charging}
\titlerunning{Complexity and Exact Algorithms for Hydrogen Charging}
\author{Nanase Sato}{Nagoya University}{sato.nanase.w5@s.mail.nagoya-u.ac.jp}{}{}
\author{Wei Wu}{Shizuoka University}{goi@shizuoka.ac.jp}{https://orcid.org/0000-0002-0443-3642}{}
\author{Mutsunori Yagiura}{Nagoya University}{yagiura@i.nagoya-u.ac.jp}{https://orcid.org/0000-0003-0970-9414}{}
\authorrunning{N. Sato, W. Wu and M. Yagiura}
\Copyright{Nanase Sato, Wei Wu and Mutsunori Yagiura}
\ccsdesc[500]{Theory of computation~Design and analysis of algorithms}
\ccsdesc[300]{Theory of computation~Problems, reductions and completeness}
\keywords{hydrogen charging, cascade filling, computational complexity, majorization, exact exponential algorithms, exponential time hypothesis}
\category{}
\relatedversion{}
\EventEditors{}
\EventNoEds{0}
\EventLongTitle{44th International Symposium on Theoretical Aspects of Computer Science (STACS 2027)}
\EventShortTitle{STACS 2027}
\EventAcronym{STACS}
\EventYear{2027}
\EventDate{March 8--12, 2027}
\EventLocation{G\"ottingen, Germany}
\EventLogo{}
\SeriesVolume{}
\ArticleNo{}

\begin{document}
\nolinenumbers
\maketitle
\begin{abstract}
We study the problem of deciding the feasibility of pressure-driven hydrogen charging from supply cylinders to demand cylinders without replenishment, asking whether an order of transfers allows every demand cylinder to reach its target pressure.
Hydrogen can be transferred from a cylinder to another only when the supply cylinder has higher pressure than the demand cylinder.
The charging to each demand cylinder must be completed before starting another, and each transfer continues until the connected cylinders reach equal pressures or the demand cylinder reaches its target.
We consider equal-capacity supply cylinders and arbitrary positive rational demand cylinder capacities.

Let \(n\) be the number of demand cylinders and \(L\) the binary input length.
With one supply cylinder, feasibility is decidable in polynomial time; with two, it is NP-complete even with cylinder capacities equal to $1$ and nonnegative integer pressures.
We prove that feasibility is preserved even if the supply cylinders are always used in non-decreasing order of their current pressures when filling each demand cylinder.
We also show that majorization of residual supply cylinder pressure states justifies discarding majorized states in dynamic programming over sets of completed demand cylinders without losing any feasible remaining demand cylinder order.
Using these structural results, we design exact algorithms for deciding feasibility and analyze their running times; we also establish complexity bounds for the problem.
With two supply cylinders, one state per completed demand cylinder set suffices, yielding \(2^n\operatorname{poly}(L)\) time and space.
With three, the exponential time hypothesis (ETH) excludes deterministic $2^{o(n\log n)}\operatorname{poly}(L)$-time algorithms.
This lower bound is tight in the exponent because the rule for order of supply cylinders allows feasibility to be decided in \(n!\operatorname{poly}(L)\) time by enumerating only demand cylinder orders.
\end{abstract}

%Enumerating demand cylinder orders takes $n!\operatorname{poly}(L)$ time.

%%%%%%%%%%%%%%%%%%%%%%%%%%%%%%%%%%%%%%%%%%%%%%%%%%%%%%%%%%%%%%%%%%%%
\clearpage

\section{Introduction}
\label{sec:introduction}
Hydrogen refueling stations fill vehicle tanks using several supply cylinders at different pressures.
Research on these systems has examined filling control and storage configurations, focusing on refueling time and energy consumption~\cite{bai2021}.
A common filling method uses lower-pressure supply cylinders before higher-pressure ones.
Studies have also examined when to switch between these cylinders to improve refueling performance~\cite{caponi2022,oh2024}.

We study the problem of deciding the feasibility of pressure-driven hydrogen charging under an isothermal ideal-gas model without replenishment.
Each cylinder has a specified capacity and initial pressure; each demand cylinder also has a target pressure.
To reflect station operations in which vehicles are filled one at a time~\cite{baetge2026,brown2022,li2021}, we require each demand cylinder to reach its target pressure before starting to charge another demand cylinder.
Hydrogen flows only from a higher-pressure supply cylinder to a lower-pressure demand cylinder.
Each transfer continues until the pair reaches equal pressures or the demand cylinder reaches its target pressure, whichever occurs first.
We ask whether a transfer sequence fills every demand cylinder to its target.

Completing the same set of demand cylinders consumes a fixed total amount of hydrogen, but different filling orders can leave different supply cylinder pressure distributions.
These distributions affect whether all remaining demand cylinders can reach their target pressures.
This dependence on the residual pressure distribution complicates this problem.

%%%%%%%%%%%%%%%%%%%%%%%%%%%%%%%%%%%%%%%%%%%%%%%%%%%%%%%%%%%%%%%%%%%%
\paragraph*{Our results.}
We develop a state reduction method based on majorization for the feasibility problem.
We also establish complexity bounds for different numbers of supply cylinders.

For equal-capacity supply cylinders, if a demand cylinder can be filled from a supply cylinder state, using supply cylinders in non-decreasing order of their current pressures also succeeds.
We prove this fact and that this rule leaves a residual supply cylinder pressure vector majorizing that of every other successful supply cylinder order.
Majorization compares how unevenly a fixed total residual pressure is distributed~\cite{marshall2011}.
We also prove that a majorizing state preserves every feasible remaining demand cylinder order, with majorization maintained after each demand cylinder is filled.
These results justify searching only over demand cylinder orders and discarding majorized states.

Based on these findings, we show the following.
Let \(n\) be the number of demand cylinders and \(L\) the binary input length.
With one supply cylinder, sorting by non-increasing target pressure decides feasibility in polynomial time.
With two supply cylinders, conservation fixes the residual pressure total for each completed demand cylinder set; majorization totally orders its residual states.
The dynamic program retains at most one representative state for each of the $2^n$ possible sets of completed demand cylinders, using $2^n\operatorname{poly}(L)$ time and space.
Even when all cylinders have unit capacity, the problem is NP-complete for nonnegative integer pressures, and the exponential time hypothesis (ETH) rules out algorithms running in $2^{o(n)}\operatorname{poly}(L)$ time.
The problem is strongly NP-complete for rational pressures even when all cylinders have unit capacity.

With three supply cylinders, ETH excludes deterministic $2^{o(n\log n)}\operatorname{poly}(L)$-time algorithms, even with unit-capacity supply cylinders and positive integer pressures.
By the supply cylinder ordering result above, enumerating demand cylinder orders yields an \(n!\operatorname{poly}(L)=2^{O(n\log n)}\operatorname{poly}(L)\)-time algorithm, establishing the optimal deterministic exponent \(\Theta(n\log n)\) under ETH.
Table~\ref{tab:results} summarizes our time bounds and hardness for the cases with $m = 1, 2, 3$ supply cylinders.
\begin{table}[!t]
\centering\small
\caption{Our running-time upper bounds and hardness results for charging feasibility problems with $m$ equal-capacity supply cylinders and arbitrary positive rational demand cylinder capacities.}
\label{tab:results}
\begin{tabularx}{\linewidth}{@{}l>{\raggedright\arraybackslash}p{3.0cm}>{\raggedright\arraybackslash}p{4.1cm}>{\raggedright\arraybackslash}X@{}}
\toprule
$m$ & Our upper bound & ETH lower bound\newline (deterministic algorithms) & NP-completeness\\
\midrule
$1$ & $O(n\log n)$ operations\newline (Thm.~\ref{thm:one-supply}) & --- & --- \\
\addlinespace
$2$ & $2^n\operatorname{poly}(L)$ time\newline (Thm.~\ref{thm:two-supply-dp}) & No $2^{o(n)}\operatorname{poly}(L)$ time\newline (Cor.~\ref{cor:target-unit-eth}) & NP-complete (Thm.~\ref{thm:two-supply-unit})\newline Strongly NP-complete\newline (unary rational pressures; Cor.~\ref{cor:rational-strong-np})\\
\addlinespace
$3$ & $n!\operatorname{poly}(L)$ time\newline (Thm.~\ref{thm:fixed-order-standardization}) & No $2^{o(n\log n)}\operatorname{poly}(L)$ time\newline (Thm.~\ref{thm:three-supply-eth}) & NP-complete (Thm.~\ref{thm:two-supply-unit} instance plus a\newline supply cylinder at initial pressure $0$) \\
\bottomrule
\end{tabularx}
\end{table}

%%%%%%%%%%%%%%%%%%%%%%%%%%%%%%%%%%%%%%%%%%%%%%%%%%%%%%%%%%%%%%%%%%%%
\paragraph*{Related work.}
Scheduling with non-renewable resources studies job sequencing under resource constraints, with each job's requirements specified in advance as fixed constants~\cite{bentert2023,berczi2024,carlier1982}.
In our model, on the other hand, even though each demand cylinder's total hydrogen requirement is fixed, withdrawals from individual supply cylinders can vary depending on current pressures and the order of supply cylinders.
Thus, completing the same demands in different orders can leave different residual distributions, affecting feasibility.

%%%%%%%%%%%%%%%%%%%%%%%%%%%%%%%%%%%%%%%%%%%%%%%%%%%%%%%%%%%%%%%%%%%%
\section{Model and Basic Properties}

%%%%%%%%%%%%%%%%%%%%%%%%%%%%%%%%%%%%%%%%%%%%%%%%%%%%%%%%%%%%%%%%%%%%
\subsection{Supply Cylinder Model and Transfers}
Let $\mathcal S=\{1,\ldots,m\}$ and $D=\{1,\ldots,n\}$ be the sets of supply cylinders and demand cylinders, respectively.
Supply cylinders have a capacity and initial pressure; demand cylinders also have a target pressure:
\begin{align*}
 \text{Supply cylinder }i\in\mathcal S\colon\quad &(v_i^{(S)},\bar p_i^{(S)}),\quad v_i^{(S)}>0,\quad \bar p_i^{(S)}\ge0,\\
 \text{Demand cylinder }j\in D\colon\quad &(v_j^{(D)},\bar p_j^{(D)},\widetilde p_j^{(D)}),\quad v_j^{(D)}>0,\quad 0\le\bar p_j^{(D)}\le\widetilde p_j^{(D)}.
\end{align*}
All capacities and pressures are binary-encoded rationals, with total binary input length \(L\).
Under the isothermal ideal-gas assumption, the amount of hydrogen is proportional to the product of capacity and pressure.
Only transfers from supply cylinders to demand cylinders are allowed, with no replenishment or leakage.
Let \(p_i^{(S)}\) and \(p_j^{(D)}\) be the current pressures, initially \(\bar p_i^{(S)}\) and \(\bar p_j^{(D)}\).
Our model requires each transfer to continue until the two pressures are equal or the demand cylinder reaches its target, increasing demand cylinder pressure by
\[
 \Delta p_{ij}^{(D)}:=\min\left\{\max\left\{
 \frac{v_i^{(S)}(p_i^{(S)}-p_j^{(D)})}{v_i^{(S)}+v_j^{(D)}},0\right\},
 \widetilde p_j^{(D)}-p_j^{(D)}\right\},
\]
and the pressure update is
\begin{equation}
 (p_i^{(S)},p_j^{(D)})\longmapsto
 \left(p_i^{(S)}-\frac{v_j^{(D)}}{v_i^{(S)}}\Delta p_{ij}^{(D)},\,
 p_j^{(D)}+\Delta p_{ij}^{(D)}\right).
 \label{eq:charging-update}
\end{equation}
Only transfers with \(\Delta p_{ij}^{(D)}>0\) are legal.
Stopping at a demand cylinder's target is called \emph{target stopping}.

%%%%%%%%%%%%%%%%%%%%%%%%%%%%%%%%%%%%%%%%%%%%%%%%%%%%%%%%%%%%%%%%%%%%
\subsection{Schedules and the Decision Problem}
\label{sec:schedules-and-problems}

A \emph{schedule} is a finite sequence of legal transfers \(\pi=((i_1,j_1),\ldots,(i_u,j_u))\), with \(|\pi|=u\); pair \((i_k,j_k)\) specifies a transfer from supply cylinder \(i_k\) to demand cylinder \(j_k\).
A \emph{filling} comprises the transfers to one demand cylinder, which is \emph{completed} upon reaching its target.
Each demand cylinder must be completed before another is filled; supply cylinders may be switched or skipped.
A schedule completing every demand cylinder is \emph{feasible}.
The \emph{target hydrogen charging problem} (THCP) asks whether such a schedule exists.
For example, take equal-capacity cylinders with supply cylinder pressures $(4,10)$ and demand cylinder (initial, target) pressures $(0,5)$ and $(0,4)$.
Table~\ref{tab:order-example} compares three filling orders.
\par\smallskip
\noindent\begin{minipage}{\linewidth}
\centering\small
\captionsetup{hypcap=false}
\captionof{table}{Supply states after the first fill; default supply order: $1,2$.}
\label{tab:order-example}
\begin{tabularx}{\linewidth}{@{}lcc>{\raggedright\arraybackslash}X@{}}
\toprule
\shortstack[l]{Demand\\cylinder order} & \shortstack{Supply order\\for first fill} & Supply cylinder state & Second demand cylinder pressure\\
\midrule
$1,2$ & $1\to2$ & $(2,7)$ & $4$ (target reached)\\
$2,1$ & $1\to2$ & $(2,8)$ & $9/2<5$\\
$1,2$ & $2$ only & $(4,5)$ & at most $7/2<4$ (either order)\\
\bottomrule
\end{tabularx}
\end{minipage}
\par\smallskip

%%%%%%%%%%%%%%%%%%%%%%%%%%%%%%%%%%%%%%%%%%%%%%%%%%%%%%%%%%%%%%%%%%%%
\subsection{Basic Properties}
\label{sec:basic-properties}

%%%%%%%%%%%%%%%%%%%%%%%%%%%%%%%%%%%%%%%%%%%%%%%%%%%%%%%%%%%%%%%%%%%%
\paragraph*{Conservation.}
Update~\eqref{eq:charging-update} preserves the capacity-weighted pressure sum.
Thus, throughout any schedule,
\[
 \sum_{i\in\mathcal S}v_i^{(S)}p_i^{(S)}+\sum_{j\in D}v_j^{(D)}p_j^{(D)}
 =\sum_{i\in\mathcal S}v_i^{(S)}\bar p_i^{(S)}+\sum_{j\in D}v_j^{(D)}\bar p_j^{(D)}.
\]

%%%%%%%%%%%%%%%%%%%%%%%%%%%%%%%%%%%%%%%%%%%%%%%%%%%%%%%%%%%%%%%%%%%%
\paragraph*{Pressure monotonicity.}
Supply cylinder pressures never increase, and demand cylinder pressures never decrease.
A pair cannot be reused after equilibration or demand cylinder completion; hence \(|\pi|\le mn\).

\begin{lemma}
\label{lem:exact-simulation}
For binary rational inputs, every pressure in a legal schedule has polynomial bit length.
A proposed schedule of at most $mn$ transfers can be verified in polynomial time.
Consequently, THCP belongs to $\mathrm{NP}$.
\end{lemma}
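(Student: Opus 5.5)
The plan is to control bit lengths through an invariant: every pressure in a legal schedule is a rational with a fixed common-ish denominator form whose size grows only additively per transfer. The difficulty is that the update~\eqref{eq:charging-update} involves division by $v_i^{(S)}+v_j^{(D)}$, so naive iteration could square denominators. We use the fact that each demand cylinder $j$ participates in at most $m$ transfers and each supply cylinder in at most $n$. Also, target stopping sets $p_j^{(D)}$ to the input value $\widetilde p_j^{(D)}$, which resets its bit length.

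First, I would write an equilibrating transfer explicitly. The common final pressure is the weighted average
\[
 q=\frac{v_i^{(S)}p_i^{(S)}+v_j^{(D)}p_j^{(D)}}{v_i^{(S)}+v_j^{(D)}}.
\]
For a target-stopping transfer, the demand pressure becomes $\widetilde p_j^{(D)}$ and the supply pressure becomes $p_i^{(S)}-(v_j^{(D)}/v_i^{(S)})(\widetilde p_j^{(D)}-p_j^{(D)})$. In both cases the amounts $v_i^{(S)}p_i^{(S)}$ and $v_j^{(D)}p_j^{(D)}$ are linear combinations of current amounts and input amounts. The coefficients are products of factors $v/(v+v')$ and similar ratios of input capacities.

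The key step is to bound the depth of these products. I would argue by induction over the schedule that each current amount is an affine combination of input amounts $v\bar p$ and $v\widetilde p$. Its coefficients are products of at most $O(mn)$ factors, each drawn from the polynomially many ratios of input capacities, such as $v_j^{(D)}/(v_i^{(S)}+v_j^{(D)})$. Since $|\pi|\le mn$ and each such factor has bit length $O(L)$, every coefficient has bit length $O(mnL)$. A sum of at most $2^{mn}$ terms could be problematic, but the number of distinct terms is limited: the linear map per step is a $2\times2$ update on the vector of $m+n$ amounts. So the state vector after $k$ steps is $M_k\cdots M_1 x_0 + c$, where the $M_t$ are rational matrices with $O(L)$-bit entries.

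A cleaner bound uses a common denominator. All entries of the $M_t$ have denominators dividing $Q=\prod_{i,j}(v_i^{(S)}+v_j^{(D)})$ times the product of all capacity denominators, a number of bit length $\mathrm{poly}(L)$. Then $Q^{k}$ times the state is integral after clearing input denominators. The numerators are bounded in absolute value by $(\text{max entry})^{k}(m+n)^{k}$ times the input magnitude. For $k\le mn$, both denominator and numerator have $\mathrm{poly}(L)$ bits, and dividing by capacities to return to pressures keeps this polynomial. The main obstacle is checking that the max/min branching does not break this affine structure. Handling it by case analysis, equilibration versus target stop, each giving a fixed affine map, resolves it.

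Verification is then direct. Simulate the at most $mn$ transfers with exact rational arithmetic, each in polynomial time by the bit bound. At each step check $\Delta p_{ij}^{(D)}>0$, and check that the demand cylinder being filled is not switched before completion. At the end, check that all demand cylinders reached their targets. Together with the bound $|\pi|\le mn$ from pressure monotonicity, a feasible schedule serves as a polynomial-size certificate, so THCP is in $\mathrm{NP}$.
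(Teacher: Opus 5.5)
Your proposal is correct and is essentially the paper's argument: each transfer is one of two affine maps (equilibration or target stopping) with input-determined coefficients; a single polynomial-bit common denominator (your $Q$, the paper's $\Gamma$), raised to at most the $(mn+1)$-st power, clears all denominators while magnitudes stay bounded; and a schedule of at most $mn$ transfers is then checked by exact simulation. The differences are cosmetic. You track amounts $vp$ and bound numerators by matrix-product growth, whereas the paper uses that pressures never exceed the largest input pressure. One small repair: $\prod_{i,j}(v_i^{(S)}+v_j^{(D)})$ is in general not an integer, so for your divisibility claim to hold literally you should first scale all capacities to integers, which is harmless because only capacity ratios enter the pressure updates.
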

\begin{proof}
The equilibration and target-stopping formulas in~\eqref{eq:charging-update} are affine, with input-determined rational coefficients and constants of polynomial bit length.
Let $\Gamma$ be the product of the denominators of all input rationals, coefficients, and constants.
There are $O(mn)$ formulas; hence $\log_2\Gamma=\operatorname{poly}(m,n,L)$.
Initially, pressures have denominator $\Gamma$; update coefficients and constants have denominators dividing $\Gamma$.
A transfer changes pressures with denominator $\Gamma^k$ to pressures with denominator $\Gamma^{k+1}$; after $u$ transfers their denominator is $\Gamma^{u+1}$.
Because $u\le mn$ and pressures remain bounded by the largest input pressure, numerators and denominators have polynomial bit length.
A feasible schedule is therefore a polynomial-size certificate verifiable by exact simulation.
\end{proof}

%%%%%%%%%%%%%%%%%%%%%%%%%%%%%%%%%%%%%%%%%%%%%%%%%%%%%%%%%%%%%%%%%%%%
\section{Complexity with One or Two Supply Cylinders}
\label{sec:complexity}

%%%%%%%%%%%%%%%%%%%%%%%%%%%%%%%%%%%%%%%%%%%%%%%%%%%%%%%%%%%%%%%%%%%%
\begin{theorem}
\label{thm:one-supply}
With one supply cylinder, THCP is feasible if and only if a demand cylinder order of non-increasing target pressure is feasible.
This condition can be checked using \(O(n\log n)\) rational arithmetic operations and comparisons.
\end{theorem}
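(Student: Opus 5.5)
With one supply cylinder, a filling of demand cylinder $j$ is a single transfer. It succeeds if and only if the equilibrium pressure is at least the target, that is,
\[
 \frac{v^{(S)}p^{(S)}+v_j^{(D)}\bar p_j^{(D)}}{v^{(S)}+v_j^{(D)}}\ \ge\ \widetilde p_j^{(D)}.
\]
Since this is an average of $p^{(S)}$ and $\bar p_j^{(D)}\le \widetilde p_j^{(D)}$, the condition is equivalent to
\[
 p^{(S)}-\widetilde p_j^{(D)}\ \ge\ \frac{v_j^{(D)}}{v^{(S)}}\bigl(\widetilde p_j^{(D)}-\bar p_j^{(D)}\bigr).
\]
The degenerate case $\bar p_j^{(D)}=\widetilde p_j^{(D)}$ needs no transfer and is trivially fine. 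A repeated transfer is impossible after equilibration, so this single-transfer criterion is exact. By conservation, the supply pressure after completing a set $T$ of demands is $\bar p^{(S)}-\sum_{j\in T}a_j$, where $a_j:=v_j^{(D)}(\widetilde p_j^{(D)}-\bar p_j^{(D)})/v^{(S)}$, independently of the order. So an order $j_1,\dots,j_n$ is feasible if and only if for every $k$,
\[
 \bar p^{(S)}-\sum_{l<k}a_{j_l}\ \ge\ \widetilde p_{j_k}^{(D)}+a_{j_k},
 \quad\text{equivalently}\quad
 \bar p^{(S)}-\sum_{l\le k}a_{j_l}\ \ge\ \widetilde p_{j_k}^{(D)}.
\]
This reduces THCP with one supply cylinder to a scheduling problem with deadlines: each job consumes $a_j\ge0$, and after it finishes the remaining level must be at least $\widetilde p_j^{(D)}$.

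Next I would use an exchange argument. Take a feasible order with adjacent demands $j,j'$ in positions $k,k+1$ and $\widetilde p_j^{(D)}<\widetilde p_{j'}^{(D)}$, and swap them. Let $R$ be the level before position $k$. Feasibility of the original order gives $R-a_j-a_{j'}\ge \widetilde p_{j'}^{(D)}$. After the swap, the constraint for $j'$ becomes $R-a_{j'}\ge\widetilde p_{j'}^{(D)}$, which holds because $a_j\ge0$. The constraint for $j$ becomes $R-a_{j'}-a_j\ge \widetilde p_{j'}^{(D)}>\widetilde p_j^{(D)}$. All other constraints are unchanged, since the prefix sums outside the swapped pair are the same. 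Repeating these adjacent swaps (bubble sort) yields a feasible order sorted by non-increasing target. Ties do not matter: when targets are equal, the later constraint in the pair is identical for both orders and the earlier one is weaker. So any non-increasing order is feasible whenever some order is. This gives the equivalence, using only the easy direction that a feasible order gives a feasible schedule.

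For the algorithm, I would sort by target using $O(n\log n)$ comparisons, then compute the $a_j$ and the running prefix sums and check each inequality with $O(n)$ rational operations. The main point needing care is the reduction step: showing that within a filling no schedule can do better than one equilibrating transfer. This follows from pressure monotonicity, since a pair cannot be reused after equilibration or completion. Correct handling of target stopping is also needed, because the constraint uses the target-stopped consumption $a_j$, not the equilibrium consumption.
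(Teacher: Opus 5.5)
Your proof is correct and follows the paper's argument: both reduce each one-supply filling to a single transfer, use conservation to obtain the prefix-sum condition $\bar p^{(S)}-\sum_{l\le k}a_{j_l}\ge\widetilde p_{j_k}^{(D)}$ (with demands already at target set aside), and treat this as single-machine scheduling with processing times $a_j$ and deadlines $\bar p^{(S)}-\widetilde p_j^{(D)}$. The only difference is that the paper cites Jackson's earliest-due-date rule, whereas you prove it inline by the standard adjacent-exchange argument, including ties.
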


\begin{proof}
Discard demand cylinders already at their targets.
Set $w_j:=v_j^{(D)}(\widetilde p_j^{(D)}-\bar p_j^{(D)})/v_1^{(S)}$.
For a demand cylinder order, let $C_j$ be the cumulative pressure decrease of the supply cylinder through $j$.
By~\eqref{eq:charging-update}, the order is feasible exactly when $C_j\le\bar p_1^{(S)}-\widetilde p_j^{(D)}$ for every $j$.
Interpreting demand cylinders as jobs, their processing times are $w_j$ and deadlines are $\bar p_1^{(S)}-\widetilde p_j^{(D)}$.
Jackson's earliest-due-date rule~\cite{jackson1955} therefore implies that the order of non-increasing target pressure is feasible whenever some order is.
Sorting and checking prefix sums take $O(n\log n)$ operations.
\end{proof}

%%%%%%%%%%%%%%%%%%%%%%%%%%%%%%%%%%%%%%%%%%%%%%%%%%%%%%%%%%%%%%%%%%%%
\begin{theorem}
\label{thm:two-supply-unit}
With two supply cylinders and all capacities equal to one, THCP is NP-complete even for nonnegative integer pressures.
\end{theorem}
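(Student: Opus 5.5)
Membership in $\mathrm{NP}$ is given by Lemma~\ref{lem:exact-simulation}, so the work is the hardness direction. I would reduce from \textsc{Partition}. An instance is positive integers $a_1,\ldots,a_k$ with $\sum_j a_j=2A$, and the question is whether some subset sums to $A$. The guiding observation is how one unit-capacity transfer behaves. A supply cylinder at pressure $p$ feeding a demand cylinder at pressure $q<p$ with target $q+w$ stops at the target exactly when $p\ge q+2w$. In that case the supply cylinder loses exactly $w$. Otherwise both cylinders equilibrate at $(p+q)/2$. So if supply pressures stay well above every target during a phase, each filling simply subtracts a fixed amount from whichever supply cylinder is used. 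This turns the feasibility question into a choice of which cylinder pays for each item.

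The construction has two phases, with both supply cylinders starting at a common integer pressure $B$.

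\emph{Item gadgets.} For each $a_j$ create a unit demand cylinder with initial pressure $0$ and target $a_j$. Choose $B$ large, say $B\ge 2A+2\max_j a_j$ plus the slack needed for the final phase. Then, whatever happened earlier, the first supply cylinder tried always has pressure at least $2a_j$ and fills item $j$ alone by target stopping. Hence after all items are filled, the supply state is $(B-X,\,B-(2A-X))$, where $X$ is the sum of the items charged to cylinder~$1$.

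\emph{Balance test.} Add demand cylinders that are feasible only if both residual pressures are at least $B-A$, that is, only if $X=A$. A natural candidate is two unit demand cylinders, each with initial pressure $h$ and target $h+\delta$, where $h$ and $\delta$ are tuned so that each supply cylinder must independently deliver roughly its own share. Completeness follows once this is verified: a partition gives the balanced state $(B-A,B-A)$, and both test cylinders are then filled. For the converse I must also argue that filling the test cylinders before or between the items never helps. Choosing test pressures far above the item scale ($h$ close to $B$) should ensure that interleaving only lowers supply pressures without changing which item amounts are charged.

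The main obstacle is the balance test. A single supply cylinder alone is easy to analyze, but a lower-pressure supply cylinder can partially equilibrate with a test cylinder and the higher one can then finish it. This cascade means a mildly unbalanced state might still succeed. Worse, by the majorization intuition in the introduction, unbalanced states tend to be \emph{more} useful, not less, so a naive threshold test points the wrong way. My first attempt would be to make both test cylinders start just below $B-A$ with tiny increments $\delta$, and to use integer pressures scaled by a power of two. Then any supply cylinder below $B-A$ contributes nothing useful: it is below the demand cylinder's initial pressure, so no legal transfer exists. A supply cylinder above $B-A$ but in the deficient configuration can serve only one test cylinder. The second test cylinder would then be left with a supply cylinder below its initial pressure, and so cannot be completed. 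This pins the outcome to $X=A$.

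If this first gadget leaks, I would instead use several test cylinders with staggered initial pressures, forming a ladder that forces each supply cylinder to stay at or above $B-A$. Finally, I would multiply all pressures by a common power of two so that every equilibration midpoint that occurs is an integer. This keeps the instance within nonnegative integer pressures, as the statement requires, and gives a polynomial-size instance.
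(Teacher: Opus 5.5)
There is a genuine gap: the balance test cannot work, and the problem goes deeper than the particular gadget. Charging every item to cylinder~1 is always legal, since your choice $B\ge 2A+2\max_j a_j$ guarantees it. This reaches the state $(B-2A,B)$ for \emph{every} \textsc{Partition} instance. From there, cylinder~2 at pressure $B$ fills both test cylinders by target stopping, because each consumes only $\delta$ and $B\ge h+3\delta$. Your claim that a cylinder above $B-A$ ``can serve only one test cylinder'' is therefore false, and the reduction outputs a yes-instance for every input.

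The ladder fallback fails for a structural reason. We have $(B-2A,B)\succeq_{\mathrm{maj}}(B-A,B-A)$. By Theorems~\ref{thm:fixed-order-standardization} and~\ref{thm:propagation} (unit capacities are equal capacities), any sequence of test demands that can be completed from the balanced state can also be completed, by cascade, from every state majorizing it. Feasibility of whatever follows the items is thus upward closed under majorization, so no set of demands filled after the items can certify $X=A$. You spotted this ``wrong direction'', but your gadget does not escape it, and no post-hoc test can.

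The missing idea is to place the test \emph{in the middle} of the sequence and to let the items impose pressure-level thresholds rather than only amounts. The paper uses supplies at $7B$ and $11B$, where $\sum_i a_i=2B$. Item $d_i$ has (initial, target) pressures $(6B-a_i,6B)$, and there is a gate $d_*=(0,7B)$.

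The gate can only be completed by $s_1$ followed by $s_2$. If these are at pressures $x$ and $y\le 11B$ at that moment, completion requires $(y+x/2)/2\ge 7B$, hence $x\ge 6B$. The gate then leaves $s_1$ at $x/2\le 7B/2$, below every item's initial pressure $6B-a_i\ge 4B$. So all later items are completed by $s_2$, which must therefore end at pressure at least $6B$.

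Conservation fixes the final supply sum at $9B$. This gives $x/2+6B\le 9B$, i.e.\ $x\le 6B$. Hence $x=6B$ and $y=11B$, so $s_2$ was unused before the gate. The items preceding the gate were therefore charged to $s_1$ alone and sum to exactly $B$.

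Your items start at pressure $0$ under a huge $B$, so they constrain only amounts, and no such two-sided squeeze is available. Separately, rescaling so that every equilibrium pressure is an integer is unnecessary: the statement constrains only the input pressures.
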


\begin{proof}
Lemma~\ref{lem:exact-simulation} establishes NP membership.
We reduce from \textsc{Partition}.
Let \(a_1,\ldots,a_k\) be the positive integers in the input instance.
If their sum is odd, the reduction returns a fixed no-instance.
Otherwise, let \(\sum_{i=1}^k a_i=2B\).
The instance has unit-capacity supply cylinders \(s_1,s_2\) at pressures \(7B,11B\), ordinary demand cylinders \(d_i=(6B-a_i,6B)\), and a gate demand cylinder \(d_*=(0,7B)\), specified by (initial, target) pressures.
It has \(k+1\) demand cylinders, all with unit capacity; construction takes polynomial time and uses nonnegative integer pressures.
Completing demand cylinder \(d_i\) decreases the sum of supply cylinder pressures by \(a_i\).
We show that completing \(d_*\) requires using \(s_1\) at pressure \(6B\), followed by \(s_2\) at its initial pressure \(11B\).
Thus, the ordinary demand cylinders completed before \(d_*\) use only \(s_1\) and have total weight \(B\).

%%%%%%%%%%%%%%%%%%%%%%%%%%%%%%%%%%%%%%%%%%%%%%%%%%%%%%%%%%%%%%%%%%%%
\paragraph*{Completeness (\(\Leftarrow\)).}
Suppose \(I\subseteq\{1,\ldots,k\}\) satisfies \(\sum_{i\in I}a_i=B\).
Figure~\ref{fig:partition-completeness} shows a feasible schedule: each group of ordinary demand cylinders has total weight $B$ and uses one supply cylinder, whose pressure remains at least $6B$ after every fill, in any order.
\par\smallskip
\noindent\begin{minipage}{\linewidth}
\centering\small
\begin{tikzpicture}[x=1cm,y=1cm,font=\small,>=stealth]
\node (initial) at (0,0) {$(7B,11B)$};
\node (beforegate) at (3.5,0) {$(6B,11B)$};
\node (aftergate) at (7.5,0) {$(3B,7B)$};
\node (final) at (11,0) {$(3B,6B)$};
\draw[->] (initial) -- node[above] {$d_i,\ i\in I$} node[below] {using $s_1$} (beforegate);
\draw[->] (beforegate) -- node[above] {$d_*:\ s_1\to s_2$} node[below] {$0\to3B\to7B$} (aftergate);
\draw[->] (aftergate) -- node[above] {$d_i,\ i\notin I$} node[below] {using $s_2$} (final);
\end{tikzpicture}
\captionsetup{hypcap=false}
\captionof{figure}{Pressures of $s_1,s_2$; pressure of $d_*$ below the middle arrow.}
\label{fig:partition-completeness}
\end{minipage}
\par\smallskip

%%%%%%%%%%%%%%%%%%%%%%%%%%%%%%%%%%%%%%%%%%%%%%%%%%%%%%%%%%%%%%%%%%%%
\paragraph*{Soundness ($\Rightarrow$).}
Suppose the THCP instance is feasible.
Neither supply cylinder alone can complete \(d_*\), and the order \(s_2,s_1\) raises its pressure to at most \(25B/4<7B\).
Hence, filling \(d_*\) requires transfers from \(s_1\) followed by \(s_2\).
Let $x$ and $y$ be the pressures of $s_1$ and $s_2$, respectively, immediately before they connect to $d_*$ in that order.
After equilibration with $s_1$, both $d_*$ and $s_1$ have pressure $x/2$.
Because equilibration with \(s_2\) would yield \((y+x/2)/2\), completing \(d_*\) requires \((y+x/2)/2\ge7B\).
Because \(y\) cannot exceed \(s_2\)'s initial pressure \(11B\), this implies \(x\ge6B\).

Because \(a_i\le2B\) implies \(x/2\le7B/2\le6B-a_i\), \(s_1\) cannot transfer hydrogen to any further demand cylinder and ends at \(x/2\).
After completing the gate demand cylinder at target \(7B\), \(s_2\) has pressure at least \(7B\).
If unused thereafter, this is its final bound. Otherwise, each transfer must complete its demand cylinder, because equilibration below \(6B\) leaves it unfinished.
Thus \(s_2\) ends at pressure at least \(6B\).

Completing all demand cylinders raises their total pressure by \(2B+7B=9B\), leaving final supply cylinder pressure sum \(18B-9B=9B\) by conservation.
Because \(s_1\) ends at \(x/2\) and \(s_2\) ends at pressure at least \(6B\), we have \(x/2+6B\le9B\), implying \(x\le6B\).
Together with \(x\ge6B\), this yields \(x=6B\).
The gate condition \((y+x/2)/2\ge7B\) then implies \(y\ge11B\).
Because \(y\le11B\), we conclude that \(y=11B\).
Because \(y=11B\) and supply cylinder pressures never increase, \(s_2\) was not used before \(d_*\).
The demand cylinders preceding \(d_*\) were therefore completed using only \(s_1\) and have total weight \(7B-6B=B\), yielding a solution to \textsc{Partition}.
\end{proof}

\begin{corollary}
\label{cor:rational-strong-np}
THCP is strongly NP-complete even with exactly two supply cylinders and all capacities equal to one.
This holds for nonnegative rational pressures whose reduced numerators and denominators are encoded in unary.
\end{corollary}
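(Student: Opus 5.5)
The idea is to rerun the reduction of Theorem~\ref{thm:two-supply-unit} from a strongly NP-complete source, so that all numbers stay polynomially bounded in magnitude. The two-cylinder gate trick works for a weighted two-way split, but with polynomially bounded weights that problem is easy. We therefore need a source such as \textsc{3-Partition}, with $3q$ integers $a_i$ in $(T/4,T/2)$ summing to $qT$, given in unary. Membership in $\mathrm{NP}$ is Lemma~\ref{lem:exact-simulation}. It remains to show that a unary-encoded rational instance with two unit-capacity supply cylinders can encode this strongly hard problem.

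The construction I would try keeps the unit-capacity ordinary demand cylinders $d_i=(c-a_i,c)$. Completing such a cylinder lowers the supply pressure sum by exactly $a_i$, whichever cylinder serves it, provided the supplying cylinder stays above $c$. In place of a single gate $d_*$, I would use $q-1$ gate demand cylinders $g_1,\ldots,g_{q-1}$ with rational targets. Each $g_t$ can be completed only by equilibrating first with $s_1$ and then with $s_2$, and only when $s_1$ is at a prescribed pressure $x_t$ and $s_2$ at a prescribed pressure $y_t$.

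Equilibration with a unit supply halves pressure differences. The gate condition $(y+x/2)/2\ge$ target therefore introduces denominators that are powers of $2$. To keep the numerators and denominators polynomial, I would rescale so that all relevant pressures are integers times $2^{-O(\log q)}$. Alternatively, I would design each gate so that it leaves $s_1$ reset to a fresh level, for example by choosing gate initial pressures so that $s_1$ equalizes at a controlled value. The soundness argument would then mimic the proof of Theorem~\ref{thm:two-supply-unit}. Conservation pins the total consumption, and the inequalities $x_t\ge\cdot$ and final-bound $\le\cdot$ squeeze each $x_t$ to equality. This forces the ordinary demands between consecutive gates to have total weight exactly $T$. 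The bounds $T/4<a_i<T/2$ then force triples.

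The main obstacle is the gate chaining. After a gate, $s_1$ drops to $x_t/2$, which kills it for further ordinary demands in the original proof. Each stage would then need a fresh pressure reservoir, which two supply cylinders cannot provide. If this cannot be repaired, the fallback is to reread the corollary: it is stated for rational pressures with unary numerators and denominators. This suggests that the intended proof may be simpler, namely a rescaling of the Theorem~\ref{thm:two-supply-unit} instance by $1/B$. With that rescaling the pressures become $7,11,6$ and $6-a_i/B$, which have small numerators and denominators relative to a unary encoding of the $a_i$. I would check whether the reduced fractions $(6B-a_i)/B$ are polynomially bounded in the unary size of the source instance. They are only if the source itself has polynomially bounded numbers, so the source must be strongly NP-hard.

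This points to a multi-gate \textsc{3-Partition} encoding as the real content. I would first attempt it with gates whose targets are chosen so that $s_1$ is reused after partial equilibration, and verify the squeeze inequalities stage by stage.
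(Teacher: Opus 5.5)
There is a genuine gap: neither of your routes is completed. Your main route is a multi-gate \textsc{3-Partition} encoding with two supply cylinders, and it runs into the obstacle you name yourself. After a gate, $s_1$ sits at $x/2\le 7B/2$, which is below the initial pressure of every ordinary demand cylinder. With two supplies and no replenishment, there is no evident way to re-arm it for the next stage. You give no gate targets $x_t,y_t$ and no soundness argument, so this part is a plan rather than a proof, and it is unclear that it can work at all. Your $1/B$ rescaling fallback does not help either: as you note, rescaling integer \textsc{Partition} leaves the numerators as large as the original $a_i$.

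The missing idea lies in the premise you used to dismiss the direct route. ``With polynomially bounded weights that problem is easy'' holds for integer \textsc{Partition}. It fails for \emph{rational} \textsc{Partition} with unary reduced numerators and denominators. There, the common denominator of many small distinct denominators can be exponentially large, so the pseudo-polynomial dynamic program no longer applies. Wojtczak~\cite{wojtczak2018} shows this version is strongly NP-hard. His reduction takes bounded-occurrence \textsc{All-the-Same-SAT} with $r$ variables to $2r$ positive rational weights summing to $2r$, with polynomially bounded reduced numerators and denominators. This is why the corollary is phrased for rational pressures.

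With that source, no new gadget is needed. The reduction and soundness argument of Theorem~\ref{thm:two-supply-unit} use only positivity of the weights and their sum $2B$. For example, $a_i\le 2B$ gives $x/2\le 7B/2\le 6B-a_i$. So the reduction applies verbatim with $B=r$. Write $a_i=u_i/v_i$ in lowest terms. The only nonintegral pressures are $6r-a_i=(6rv_i-u_i)/v_i$, with positive numerator at most $6rv_i$ and denominator at most $v_i$. All other pressures are integers at most $11r$, and there are $2r+1$ demand cylinders. The unary output size is therefore polynomial, and NP membership is Lemma~\ref{lem:exact-simulation}.
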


\begin{proof}
The reduction in Theorem~\ref{thm:two-supply-unit} requires only positive weights summing to $2B$ and therefore also applies to rational \textsc{Partition}.
Wojtczak~\cite[proofs of Theorems~4 and~5]{wojtczak2018} reduces bounded-occurrence \textsc{All-the-Same-SAT} with $r$ variables to rational \textsc{Partition} on $2r$ positive weights summing to $2r$.
Their reduced numerators and denominators are polynomially bounded in the source input size.
Apply Theorem~\ref{thm:two-supply-unit} with $B=r$ and let $a_i=u_i/v_i$ be in lowest terms.
The only potentially nonintegral pressures are $6r-a_i=(6rv_i-u_i)/v_i$, with positive reduced numerators at most $6rv_i$ and denominators at most $v_i$.
All other pressures are integers bounded by $11r$, and there are $2r+1$ demand cylinders.
The unary output length is polynomial, proving strong NP-hardness; NP membership follows from Lemma~\ref{lem:exact-simulation}.
\end{proof}

\begin{corollary}
\label{cor:target-unit-eth}
Under ETH, THCP admits no \(2^{o(n)}\operatorname{poly}(L)\)-time algorithm, even with two supply cylinders, unit capacities, and nonnegative integer pressures.
\end{corollary}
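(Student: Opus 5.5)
The plan is to compose the reduction of Theorem~\ref{thm:two-supply-unit} with a linear-size ETH-hard source for \textsc{Partition}. The THCP instance built there has $n=k+1$ demand cylinders, where $k$ is the number of \textsc{Partition} weights. It uses unit capacities and nonnegative integer pressures, and its encoding length $L$ is polynomial in the \textsc{Partition} encoding length. A $2^{o(n)}\operatorname{poly}(L)$-time algorithm for THCP would therefore give a $2^{o(k)}\operatorname{poly}(|x|)$-time algorithm for \textsc{Partition} on $k$ integers. It then suffices to exhibit a chain from 3-SAT to \textsc{Partition} in which the number $k$ of output integers is linear in the number of variables plus clauses of the source formula. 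The integers may be exponentially large in binary; that is harmless because $L$ enters only through $\operatorname{poly}(L)$.

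For the source I would start from ETH together with the Sparsification Lemma: 3-SAT on $N$ variables and $M=O(N)$ clauses admits no $2^{o(N)}$-time algorithm. I would then apply the standard textbook reduction from 3-SAT to \textsc{Subset Sum}, with digit-based integers in a base large enough to prevent carries. This produces two integers per variable and two slack integers per clause, so $O(N+M)=O(N)$ integers, with polynomial bit length. Next I would apply the usual \textsc{Subset Sum}-to-\textsc{Partition} reduction, adding two integers $2S-t$ and $S+t$ (with $S$ the total and $t$ the target), which preserves linearity of the count. Composing everything, $k=O(N)$ and hence $n=O(N)$. A $2^{o(n)}\operatorname{poly}(L)$ THCP algorithm would then decide sparse 3-SAT in $2^{o(N)}\operatorname{poly}(N)$ time, contradicting ETH. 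The odd-sum case of Theorem~\ref{thm:two-supply-unit}, which outputs a fixed no-instance, is constant-size and does not matter.

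The main obstacle is ensuring that the number of \textsc{Partition} elements, and not just the bit length, is linear in $N$; the sparsification step is essential for that. Alternatively, one could cite a known result that \textsc{Subset Sum}/\textsc{Partition} on $k$ items has no $2^{o(k)}$ algorithm under ETH. I would also check that the Partition reduction's added elements keep all $a_i$ positive, which holds when $0<t<S$; trivial cases can be handled directly.
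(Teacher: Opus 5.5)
Your proposal is correct and follows essentially the paper's route: ETH with sparsification (which the paper cites directly as a $2^{o(q)}$ bound in variables plus clauses) gives Subset Sum with $O(q)$ weights, then Partition with two padding weights, then the construction of Theorem~\ref{thm:two-supply-unit} with $n=k+1$ and $\operatorname{poly}(L)=2^{o(q)}$. Your padding pair $2S-t,\,S+t$ differs only cosmetically from the paper's $t+1,\,W_{\rm tot}-t+1$ and works equally well.
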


\begin{proof}
Under ETH, 3SAT admits no \(2^{o(q)}\)-time algorithm, where \(q\) is the total number of variables and clauses~\cite[Theorem~8]{kpw2019}.
The standard reduction produces an equivalent \textsc{Subset Sum} instance with \(k=O(q)\) nonnegative weights and a target \(t\), all with \(O(q)\) bits~\cite[Section~3.3]{kpw2019}.
For the standard reduction to \textsc{Partition}~\cite[p.~100]{karp1972}, let \(W_{\rm tot}\) be the total weight.
Remove zero weights, handle targets outside \((0,W_{\rm tot})\) in polynomial time, and append \(t+1\) and \(W_{\rm tot}-t+1\).
Each half must sum to \(W_{\rm tot}+1\), forcing the original weights alongside \(W_{\rm tot}-t+1\) to sum to \(t\).
The resulting equivalent instance has at most \(k+2=O(q)\) positive weights, each with \(O(q)\) bits.

Theorem~\ref{thm:two-supply-unit} produces an equivalent THCP instance with one demand cylinder per weight and one gate demand cylinder: \(n=O(q)\), pressures of \(O(q)\) bits, and \(L=O(q^2)\).
A \(2^{o(n)}\operatorname{poly}(L)\)-time algorithm would then solve 3SAT in \(2^{o(q)}\) time, contradicting ETH.
\end{proof}

%%%%%%%%%%%%%%%%%%%%%%%%%%%%%%%%%%%%%%%%%%%%%%%%%%%%%%%%%%%%%%%%%%%%
\section{Cascade Dominance}
\label{sec:cascade-dominance}

Throughout this section, all supply cylinders have the same capacity $v^{(S)}$.

%%%%%%%%%%%%%%%%%%%%%%%%%%%%%%%%%%%%%%%%%%%%%%%%%%%%%%%%%%%%%%%%%%%%
\subsection{Majorization and Cascade Operators}
\label{subsec:majorization-operators}

%%%%%%%%%%%%%%%%%%%%%%%%%%%%%%%%%%%%%%%%%%%%%%%%%%%%%%%%%%%%%%%%%%%%
\paragraph*{Majorization.}
We represent residual supply cylinder states in non-decreasing order.
For sorted states $P=(p_1,\ldots,p_m)$ and $Q=(q_1,\ldots,q_m)$, we use the standard majorization order~\cite{marshall2011}:
\[
 P\succeq_{\mathrm{maj}}Q\quad\Longleftrightarrow\quad
 \sum_{i=1}^m p_i=\sum_{i=1}^m q_i
 \quad\text{and}\quad
 \sum_{i=1}^h p_i\le\sum_{i=1}^h q_i\quad(1\le h<m).
\]
In Section~\ref{sec:schedules-and-problems}, the supply cylinder states $(2,7)$ and $(4,5)$ after filling demand cylinder 1 have equal totals and are componentwise incomparable. However, $(2,7)\succeq_{\mathrm{maj}}(4,5)$ because $2\le4$, and only $(2,7)$ permits completing demand cylinder 2.
Theorems~\ref{thm:single-demand-dominance} and~\ref{thm:propagation} establish the general preference: filling a demand cylinder in non-decreasing supply cylinder pressure order yields a residual state majorizing that of any other successful order. Replacing the initial state by a majorizing state preserves feasibility under this rule.
For a real threshold $\lambda$ and $S=(x_1,\ldots,x_m)$, define $(a)_+:=\max\{a,0\}$ and $\Phi_\lambda(S):=\sum_i(x_i-\lambda)_+$, the sum of pressure excesses above $\lambda$.
For any states \(P\) and \(Q\) with the same sum,
\begin{equation}
 P\succeq_{\mathrm{maj}}Q\quad\Longleftrightarrow\quad
 \Phi_\lambda(P)\ge\Phi_\lambda(Q)\quad(\forall\lambda\in\mathbb R).
 \label{eq:maj-phi}
\end{equation}
This is a standard characterization of majorization~\cite[Proposition~4.B.3]{marshall2011}.

%%%%%%%%%%%%%%%%%%%%%%%%%%%%%%%%%%%%%%%%%%%%%%%%%%%%%%%%%%%%%%%%%%%%
\paragraph*{Cascade filling.}
Cascade filling scans supply cylinders in non-decreasing current pressure order to fill one demand cylinder.
It skips cylinders at or below the demand cylinder pressure and stops at the target or after scanning all cylinders.
Define the capacity ratio $r_j:=v_j^{(D)}/v^{(S)}$ and let $\alpha_j:=r_j/(1+r_j)$ and $\beta_j:=1/(1+r_j)$.
For a sorted supply cylinder state $S=(x_1,\ldots,x_m)$, the demand cylinder pressure update at supply cylinder $i$ before target stopping is
\[
 p_j^{(D)}\longmapsto\begin{cases}p_j^{(D)},&x_i\le p_j^{(D)},\\
 \alpha_j p_j^{(D)}+\beta_jx_i,&x_i>p_j^{(D)}
 \end{cases}.
\]
The update is non-decreasing in $p_j^{(D)}$.
Let \(F_j(S)\) denote the residual supply cylinder state after cascade filling of demand cylinder \(j\).
For a supply cylinder permutation \(\sigma\), let \(G_{(j,\sigma)}(S)\) denote the sorted residual state after filling demand cylinder \(j\) in \(\sigma\)-order.
Either operator returns \(\bot\) if its filling procedure fails.
Whenever both orders complete demand cylinder $j$, their residual supply cylinder pressure sums agree by conservation.
Pressure monotonicity prevents reuse within a fill; thus appending unused cylinders extends any successful fill to a permutation with the same residual state.

\begin{lemma}
\label{lem:cascade-supply-order}
Suppose equal-capacity supply cylinders are indexed by non-decreasing pressure at the start of a fill.
Cylinders with the same pressure at that time may be indexed in either order.
Cascade filling leaves their residual pressures non-decreasing by index.
\end{lemma}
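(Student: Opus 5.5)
We prove the claim by tracking the cascade one cylinder at a time. Index the supply cylinders so that $x_1\le x_2\le\cdots\le x_m$ at the start of the fill, and let $y_1,\ldots,y_m$ be their residual pressures. Let $q_0:=\bar p_j^{(D)}$, or more generally the demand pressure when the fill starts. Let $q_i$ be the demand pressure after cylinder $i$ is scanned. The cascade ensures that every cylinder $i$ falls into exactly one of three classes.

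The three classes are as follows.
\begin{enumerate}
\item \emph{Skipped.} Cylinder $i$ has $x_i\le q_{i-1}$, or the target was already reached before it. Then $y_i=x_i$.
\item \emph{Equilibrated.} Cylinder $i$ is connected and the demand pressure does not reach the target. Then $y_i=q_i=\alpha_j q_{i-1}+\beta_j x_i$. This follows from update~\eqref{eq:charging-update} with equal capacities, because $y_i=x_i-r_j(q_i-q_{i-1})$ and equilibration gives $y_i=q_i$.
\item \emph{Target-stopped.} Cylinder $i$ is connected and the target is reached. This happens for at most one cylinder, and it is the last cylinder used. Then $y_i=x_i-r_j(\widetilde p_j^{(D)}-q_{i-1})$. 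Because target stopping occurs before equilibration, $y_i\ge \widetilde p_j^{(D)}=q_i$.
\end{enumerate}

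The key invariant is that for every $i$,
\[
 y_i \le q_i \le \ldots \quad\text{and}\quad y_i\le x_{i'}\ \text{for } i'>i\text{ unscanned},
\]
or more precisely, $\max_{h\le i} y_h\le \max\{q_i,\,\text{value of last connected cylinder}\}$. I would organize the proof around the order in which the classes appear. Equilibrated cylinders come first among the connected ones. At most one target-stopped cylinder follows them, and after it all cylinders are skipped with $y=x$ unchanged.

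The comparison between two consecutive indices $i<i'$ then splits into cases by their classes.
\begin{enumerate}
\item \emph{Both skipped.} Then $y_i=x_i\le x_{i'}=y_{i'}$.
\item \emph{$i$ skipped before the demand started rising, $i'$ connected.} A skip before any connection means $x_i\le q_{i-1}\le q_{i'-1}$. If $i'$ is equilibrated, then $y_{i'}=q_{i'}\ge q_{i'-1}\ge x_i$. If $i'$ is target-stopped, then $y_{i'}\ge q_{i'}\ge q_{i'-1}$.
\item \emph{$i$ skipped in the middle, after some equilibrations.} This cannot happen. Since $q$ is non-decreasing and each equilibration gives $q_{i-1}=y_{i-1}\le$ the earlier $x$ values, which are $\le x_i$, we get $x_i\ge q_{i-1}$. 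Equality $x_i=q_{i-1}$ is possible when pressures tie. In that case $y_i=x_i=q_{i-1}\le q_{i'}\le y_{i'}$, which is still fine.
\item \emph{$i$ equilibrated, $i'$ connected.} Then $y_i=q_i\le q_{i'}\le y_{i'}$.
\item \emph{$i$ connected, $i'$ skipped after the target.} For $i$ target-stopped, $y_i\le x_i\le x_{i'}=y_{i'}$, because supply pressures never increase. The same bound $y_i\le x_i\le x_{i'}$ holds if $i$ is equilibrated.
\item \emph{$i$ connected, $i'$ skipped because $x_{i'}\le q_{i'-1}$.} For $i$ equilibrated, $x_{i'}\ge x_i>q_{i-1}$. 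Using $q_i=\alpha_j q_{i-1}+\beta_j x_i<x_i$ inductively, the demand pressure stays strictly below the current cylinder pressure, so $q_{i'-1}<x_{i'}$ whenever $x_{i'}>x_{\text{prev}}$. Ties give equality, which is harmless as in case~3.
\end{enumerate}

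The main obstacle is handling ties and the skipped cylinders cleanly, and in particular cases~3 and~6. I would resolve this with one lemma: after any connection, $q_i\le y_i\le x_i$ holds, and $q$ is non-decreasing. Each case then reduces to a chain $y_i\le q_{i'-1}$ or $y_i\le x_i\le x_{i'}$. For ties, both orders yield the same multiset by symmetry of identical cylinders, so either indexing satisfies the claim.
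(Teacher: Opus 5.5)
Your proposal is correct and follows essentially the paper's argument. The shared structure is: initially skipped cylinders stay at pressures at most the starting demand pressure; equilibrated cylinders end at the increasing demand pressure; at most one cylinder stops at the target, with residual in $[\widetilde p_j^{(D)},x_i]$; and all later cylinders are untouched, with pressures at least $x_i$. You simply organize this as a case check on consecutive indices.

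There are two harmless slips. First, your displayed ``key invariant'' $y_i\le q_i$ fails for the target-stopped cylinder, though you correctly use $q_i\le y_i\le x_i$ later. Second, the tie-induced equality $x_{i'}=q_{i'-1}$ in your cases 3 and 6 cannot occur: $\alpha_j>0$ gives $q_i<x_i\le x_{i+1}$ after every equilibration, so after the first connection every cylinder is used until the target is reached.
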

\begin{proof}
If no supply cylinder is used, the supply cylinder state is unchanged.
Otherwise, the cylinders skipped at the start keep their pressures unchanged.
Each equilibration below the target leaves the supply cylinder at the new demand cylinder pressure.
Because the demand cylinder pressure increases with each transfer, the residual pressures of these cylinders are non-decreasing in the order of use.
If supply cylinder \(i\) completes demand cylinder \(j\), its residual pressure is at least the target \(\widetilde p_j^{(D)}\) and at most its pressure before the transfer, \(p_i^{(S)}\).
Earlier cylinders have residual pressures below the target, and unused later cylinders have pressures at least \(p_i^{(S)}\).
Thus the residual supply cylinder pressures remain non-decreasing.
\end{proof}

%%%%%%%%%%%%%%%%%%%%%%%%%%%%%%%%%%%%%%%%%%%%%%%%%%%%%%%%%%%%%%%%%%%%
\subsection{Cascade Dominance for One Demand Cylinder}
\label{sec:cascade-exchange}
Interchange arguments based on majorization are used in stochastic scheduling~\cite{changyao1993}.
We prove cascade dominance by exchanging adjacent supply cylinders to place the one with lower pressure first.
Such an exchange can raise the demand cylinder pressure after the pair and reduce the pair's total residual pressure.
A higher demand cylinder pressure requires less hydrogen from the remaining cylinders.
The next lemma quantifies the compensation via \(\Phi_\lambda\).

%%%%%%%%%%%%%%%%%%%%%%%%%%%%%%%%%%%%%%%%%%%%%%%%%%%%%%%%%%%%%%%%%%%%
\begin{lemma}
\label{lem:tail-compensation}
Fix a demand cylinder \(j\) with capacity ratio \(r_j:=v_j^{(D)}/v^{(S)}\) and an ordered sequence \(\tau\) of distinct equal-capacity supply cylinders.
Compare two fills from starting pressures \(0\le t_1\le t_2\le\widetilde p_j^{(D)}\), fixing the demand cylinder capacity and target, initial supply cylinder pressures, and order \(\tau\).
Let \(R(t)\) denote the residual supply cylinder pressure vector after filling \(j\) from pressure \(t\), with unused cylinders unchanged.
If filling from \(t_1\) completes \(j\), then filling from \(t_2\) also completes it and $\Phi_\lambda(R(t_2))-\Phi_\lambda(R(t_1)) \ge r_j\bigl((t_2-\lambda)_+-(t_1-\lambda)_+\bigr)$ for every \(\lambda\in\mathbb R\).
\end{lemma}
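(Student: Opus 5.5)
We argue by induction along the order \(\tau=(\tau_1,\ldots,\tau_\ell)\), tracking the two fills simultaneously. Let \(d_k(t)\) be the demand cylinder pressure after processing \(\tau_1,\ldots,\tau_k\) from starting pressure \(t\), with \(d_0(t)=t\). The invariant is \(d_k(t_1)\le d_k(t_2)\). It holds at \(k=0\). It propagates because each single-cylinder step, including target stopping and skipping (the map \(p\mapsto\min\{\max\{p,\alpha_jp+\beta_jx\},\widetilde p_j^{(D)}\}\)), is non-decreasing in \(p\). Completion from \(t_1\) means \(d_\ell(t_1)=\widetilde p_j^{(D)}\), which forces \(d_\ell(t_2)=\widetilde p_j^{(D)}\). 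Hence filling from \(t_2\) also completes \(j\).

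For the \(\Phi_\lambda\) inequality, we prove a stepwise potential inequality and telescope. Define for each step \(k\) the potential
\[
\Psi_k(t):=\Phi_\lambda\bigl(R_k(t)\bigr)+r_j\bigl(d_k(t)-\lambda\bigr)_+,
\]
where \(R_k(t)\) is the supply vector after \(k\) steps. The claim follows once we show \(\Psi_\ell(t_2)-\Psi_\ell(t_1)\ge\Psi_0(t_2)-\Psi_0(t_1)\). At the end both demand pressures equal the target, so the \(r_j\) terms cancel. At the start, \(R_0\) is the same for both fills. Both sides then reduce to the stated inequality.

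Since cylinders are distinct and each is touched once, only coordinate \(\tau_k\) changes at step \(k\). It therefore suffices to study the single-cylinder function
\[
h(p):=\bigl(x-\lambda\bigr)_+ - \bigl(x'(p)-\lambda\bigr)_+ - r_j\bigl((p'(p)-\lambda)_+-(p-\lambda)_+\bigr),
\]
where \(x\) is the common pre-step pressure of \(\tau_k\) (it is the same in both fills since it is untouched before), and \(x'(p),p'(p)\) are the post-step supply and demand pressures. The required stepwise inequality is \(h(d_{k-1}(t_2))\le h(d_{k-1}(t_1))\), i.e., \(h\) is non-increasing in \(p\). Note \(h(p)\) is the net decrease of \(\Phi_\lambda\) on \(\tau_k\) minus the \(r_j\)-weighted increase of demand excess. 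Conservation gives \(x-x'=r_j(p'-p)\).

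The main obstacle is verifying this monotonicity of \(h\) across all regimes: skip (\(x\le p\)), equilibration (\(x'=p'=\alpha_jp+\beta_jx\)), and target stopping (\(p'=\widetilde p_j^{(D)}\), \(x'=x-r_j(\widetilde p_j^{(D)}-p)\)), each combined with the positions of \(\lambda\). We would first try to show \(h\) is continuous and piecewise linear in \(p\) (the regime boundaries agree), and then check that its slope is \(\le0\) on each piece.

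\begin{itemize}
\item In the skip regime, \(h=0\).
\item In the equilibration regime, both \(x'\) and \(p'\) equal \(q=\alpha_jp+\beta_jx\), with \(dq/dp=\alpha_j\). The slope is \(-\alpha_j\mathbf 1[q>\lambda]-r_j\alpha_j\mathbf 1[q>\lambda]+r_j\mathbf 1[p>\lambda]\). Using \(\alpha_j(1+r_j)=r_j\), this equals \(r_j(\mathbf 1[p>\lambda]-\mathbf 1[q>\lambda])\le0\), since \(q\ge p\).
\item In the target regime, \(dx'/dp=r_j\) and \(p'\) is constant. The slope is \(-r_j\mathbf 1[x'>\lambda]+r_j\mathbf 1[p>\lambda]\le0\), since \(x'\ge\widetilde p_j^{(D)}\ge p\).
\end{itemize}

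Continuity at the regime switches is immediate, since there \(x'=x\), \(p'=p\), or the target value coincides with the equilibrium value. Hence \(h\) is non-increasing on all of \(\mathbb R\). Telescoping over \(k\) then yields the lemma.
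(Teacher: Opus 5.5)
Your proof is correct, and it takes a different route from the paper's. The paper first derives componentwise monotonicity \(R(t_2)\ge R(t_1)\), since each cylinder transfers no more from the higher start. It then splits on \(\lambda\):
\begin{itemize}
\item For \(\lambda\le t_1\), cylinders initially below \(\lambda\) are skipped in both fills and all others stay at or above \(\lambda\). So \(\Phi_\lambda\) is affine on the relevant residuals, and conservation gives the exact value \(r_j(t_2-t_1)\).
\item For \(\lambda\ge t_2\), componentwise monotonicity suffices.
\item For \(t_1<\lambda<t_2\), it passes through the fill started at \(\lambda\) itself and reuses the first case.
\end{itemize}
You instead argue contact by contact with the potential \(\Phi_\lambda(R_k)+r_j(d_k-\lambda)_+\). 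This is exactly the \(\Psi_\lambda\) the paper introduces later, in the exchange step of Theorem~\ref{thm:single-demand-dominance}. You reduce everything to the monotone coupling \(d_k(t_1)\le d_k(t_2)\) and to the per-contact loss \(h\) being non-increasing in the incoming demand pressure. Your slope computations are right: they use \(\alpha_j(1+r_j)=r_j\), \(q\ge p\), and \(x'\ge\widetilde p_j^{(D)}\) under target stopping, which is equivalent to \(q\ge\widetilde p_j^{(D)}\). Your continuity checks at the two regime switches are also right.

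The paper's argument is shorter and shows the structure behind the bound, namely affine behaviour of \(\Phi_\lambda\) above the threshold, giving equality when \(\lambda\le t_1\). Yours treats all \(\lambda\) uniformly and never needs a componentwise comparison of residuals. It also proves something stronger: the gap \(\Psi_k(t_2)-\Psi_k(t_1)\) is non-decreasing in \(k\) whether or not the fills complete, so completion is used only to force completion from \(t_2\) and to cancel the terminal demand terms.

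One small correction: you only need \(h\) to be non-increasing on \([0,\widetilde p_j^{(D)}]\). Your map \(\min\{\max\{p,q\},\widetilde p_j^{(D)}\}\) is not the physical update for \(p>\widetilde p_j^{(D)}\), so the claim ``on all of \(\mathbb R\)'' should be restricted to that interval.
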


\begin{proof}
The demand cylinder pressure update is non-decreasing in incoming pressure. Higher pressure requires less hydrogen to reach the target and permits less transfer before equilibration.
Each supply cylinder therefore transfers no more from \(t_2\) than from \(t_1\); along \(\tau\), this implies completion from \(t_2\) and \(R(t_2)\ge R(t_1)\) componentwise.
\begin{itemize}
\item If \(\lambda\le t_1\), cylinders initially below \(\lambda\) are skipped in both fills, whereas all others retain pressure at least \(\lambda\).
Completion from \(t_1\) removes \(r_j(t_2-t_1)\) more total supply cylinder pressure than from \(t_2\), yielding \(\Phi_\lambda(R(t_2))-\Phi_\lambda(R(t_1))=r_j(t_2-t_1)\).
\item If \(\lambda\ge t_2\), the right-hand side is zero.
The componentwise inequality \(R(t_2)\ge R(t_1)\) established above implies \(\Phi_\lambda(R(t_2))\ge\Phi_\lambda(R(t_1))\), proving the claimed inequality.
\item If \(t_1<\lambda<t_2\), filling from \(\lambda\) also completes the demand cylinder, with \(R(\lambda)\ge R(t_1)\) componentwise.
By monotonicity and the first case with starting pressures \(\lambda\) and \(t_2\), we have \(\Phi_\lambda(R(t_2))-\Phi_\lambda(R(t_1))\ge\Phi_\lambda(R(t_2))-\Phi_\lambda(R(\lambda))=r_j(t_2-\lambda)\). \qedhere
\end{itemize}
\end{proof}

%%%%%%%%%%%%%%%%%%%%%%%%%%%%%%%%%%%%%%%%%%%%%%%%%%%%%%%%%%%%%%%%%%%%
\begin{theorem}
\label{thm:single-demand-dominance}
Suppose all supply cylinders have equal capacity and an order \(\sigma\) completes demand cylinder \(j\) from state \(S=(x_1,\ldots,x_m)\).
Then cascade filling from \(S\) also completes \(j\), and its residual state satisfies \(F_j(S)\succeq_{\mathrm{maj}}G_{(j,\sigma)}(S)\).
\end{theorem}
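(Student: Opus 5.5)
We plan to prove the theorem by an adjacent-exchange (bubble-sort) argument on supply cylinder orders, with Lemma~\ref{lem:tail-compensation} controlling what happens after each exchange. First we normalize $\sigma$. By the remark preceding Lemma~\ref{lem:cascade-supply-order}, appending unused cylinders does not change the residual state, so we may take $\sigma$ to be a full permutation of $\mathcal S$. Cascade filling is then the $\sigma$-fill for the permutation $\sigma^*$ that sorts by non-decreasing pressure. A cylinder at or below the current demand pressure transfers nothing, so skipping it is the same as scanning it. It therefore suffices to show the following exchange step. If $\sigma$ has an adjacent pair $(a,b)$ with $x_a>x_b$ and $\sigma$ completes $j$, then the order $\sigma'$ obtained by swapping them also completes $j$, and $G_{(j,\sigma')}(S)\succeq_{\mathrm{maj}}G_{(j,\sigma)}(S)$. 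Iterating this removes all inversions and reaches $\sigma^*$. Transitivity of $\succeq_{\mathrm{maj}}$ then gives the statement. We use the characterization~\eqref{eq:maj-phi}, since conservation makes the two residual totals equal whenever both fills complete.

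For the exchange step, split the fill into prefix, pair, and tail. The prefix is identical in both orders, so both reach the pair with the same demand pressure $p$ and unchanged pair pressures $x_a>x_b$. If target stopping occurs within the prefix, nothing changes. Next we analyse the two-cylinder block directly.
\begin{itemize}
\item In order $(a,b)$, the demand pressure goes to $p_1=\alpha p+\beta x_a$ if $x_a>p$. Cylinder $b$ is then skipped whenever $x_b\le p_1$.
\item In order $(b,a)$, it goes to $\alpha p+\beta x_b$ and then to $\alpha(\alpha p+\beta x_b)+\beta x_a$. This is at least $p_1$.
\item Target stopping caps both at $\widetilde p_j^{(D)}$.
\end{itemize}
Let $t_1\le t_2$ be the outgoing demand pressures of the original and the swapped block. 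The swapped block uses a total supply decrease of $r_j(t_2-p)$, versus $r_j(t_1-p)$ for the original. We then check that the pair residuals satisfy the following. For every $\lambda$, $\Phi_\lambda$ restricted to the pair drops by at most $r_j\bigl((t_2-\lambda)_+-(t_1-\lambda)_+\bigr)$ compared with the original. This is a short case analysis on the thresholds $\lambda$ versus $p$, $x_b$, the equilibrium pressures, and $x_a$. In the generic case the original residuals are $(p_1,x_b)$ and the swapped residuals are lower-equilibrium values. We will try to reduce the check to the observation that each residual in the swapped block is at least its demand pressure at that moment.

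Now we apply Lemma~\ref{lem:tail-compensation} to the common tail $\tau$ with starting pressures $t_1\le t_2$. The original tail completes $j$ from $t_1$, so the swapped tail completes from $t_2$. The tail's $\Phi_\lambda$ gains at least $r_j\bigl((t_2-\lambda)_+-(t_1-\lambda)_+\bigr)$, which compensates the pair loss. Summing the prefix (equal), pair, and tail contributions gives $\Phi_\lambda(G_{(j,\sigma')}(S))\ge\Phi_\lambda(G_{(j,\sigma)}(S))$ for all $\lambda$, hence majorization by~\eqref{eq:maj-phi}. We also handle the case where target stopping happens inside the pair in the swapped order. There the tail is unused, i.e. $t_2=\widetilde p_j^{(D)}$. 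The lemma still applies with the tail fill from $t_2$ being trivial, or we can compare directly.

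We expect the main obstacle to be the pair inequality, i.e. bounding the loss of $\Phi_\lambda$ on the two swapped cylinders by exactly $r_j\bigl((t_2-\lambda)_+-(t_1-\lambda)_+\bigr)$ across all regimes. The regimes are: $b$ skipped or not in the original order, $x_a\le p$, and target stopping at the first or second cylinder. We would first verify it at $\lambda\le\min$ (where it is an equality by conservation) and at $\lambda\ge t_2$ (where the pair's excess in the swapped order should be no smaller). For intermediate $\lambda$ we would use piecewise linearity of $\Phi_\lambda$ in $\lambda$ and check the breakpoints.
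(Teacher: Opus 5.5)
Your framework is the paper's own. It uses an adjacent exchange of an inverted pair, an identical prefix, the comparison $t_1\le t_2$ of post-pair demand pressures, and Lemma~\ref{lem:tail-compensation} on the common suffix. Via~\eqref{eq:maj-phi} it then reduces everything to the pair inequality $\Phi_\lambda(L_{\mathrm{sw}})+r_j(t_2-\lambda)_+\ge\Phi_\lambda(L_{\mathrm{or}})+r_j(t_1-\lambda)_+$, where $L_{\mathrm{sw}}$ and $L_{\mathrm{or}}$ are the pair residuals in the swapped and original orders. The gap is that this inequality, the only step of the theorem that is not bookkeeping, is announced as a ``short case analysis'' but never carried out. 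Moreover, the observation you propose to reduce it to (each residual in the swapped block is at least the demand pressure at that moment) concerns the swapped order only. The right-hand side depends on the original order's residuals. Already for $\lambda\le t_1$ you must show that the original order's smaller pair residual is at least $\lambda$ whenever the swapped order's is. Likewise, consider your expectation that for $\lambda\ge t_2$ the swapped pair's excess ``should be no smaller''. When both orders complete inside the pair ($t_1=t_2=\widetilde p_j^{(D)}$), this is exactly the claim $L_{\mathrm{sw}}\succeq_{\mathrm{maj}}L_{\mathrm{or}}$, which needs the same lower bound on the original order's smaller residual.

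The missing idea is a comparison of the smaller pair residuals. Assume $p<x_b<x_a$ and $p<\widetilde p_j^{(D)}$ (otherwise the exchange changes nothing), and put $q=\alpha_jp+\beta_jx_b$ and $w_{\mathrm{rem}}=r_j(\widetilde p_j^{(D)}-p)$.
In the swapped order the smaller residual is $b$'s and equals $\max\{q,x_b-w_{\mathrm{rem}}\}$: either $b$ completes the fill, ending at $x_b-w_{\mathrm{rem}}\ge q$, or it equilibrates at $q>x_b-w_{\mathrm{rem}}$.
In the original order both residuals are at least $q$. Cylinder $a$ ends at or above $\alpha_jp+\beta_jx_a>q$, and $b$ either stays at $x_b>q$ or ends at or above that demand pressure. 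Both residuals are also at least $x_b-w_{\mathrm{rem}}$, because no cylinder loses more than $w_{\mathrm{rem}}$. Hence $\min L_{\mathrm{sw}}\le\min L_{\mathrm{or}}$.
Moreover, in either order the larger pair residual is at least the post-pair demand pressure $t'$, with equality if the fill is unfinished. For $\lambda\le t_1$, conservation then gives $\Phi_\lambda(L)+r_j(t'-\lambda)_+=c+(\lambda-\min L)_+$ with the same constant $c$ in both orders. The difference is therefore $(\lambda-\min L_{\mathrm{sw}})_+-(\lambda-\min L_{\mathrm{or}})_+\ge0$.
For $\lambda>t_1$, there are two possibilities. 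Either the original fill is unfinished after the pair, so its side vanishes. Or both orders complete with equal pair sums, and the smaller-residual comparison yields pair majorization.
Your breakpoint plan could be executed, but only by proving exactly these residual bounds in every regime.
(Minor: your argument for $t_1\le t_2$ compares only with $\alpha_jp+\beta_jx_a$. When $x_b$ exceeds this value, the original order also uses $b$, and you need the additional computation that the uncapped difference is $\beta_j^2(x_a-x_b)\ge0$.)
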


\begin{proof}
In a successful fill, let $a<b$ be the pressures of an adjacent inverted pair, $t$ the demand cylinder pressure before the pair, and $T=\widetilde p_j^{(D)}$.
Let $t_\uparrow,t_\downarrow$ be the demand cylinder pressures after the pair and $L_\uparrow,L_\downarrow$ its sorted residual supply cylinder pressure vectors, for ascending and descending order, respectively.
Exchange the pair as in Figure~\ref{fig:cascade-exchange}.
\par\smallskip
\noindent\begin{minipage}{\linewidth}
\centering\small
\begin{tikzpicture}[x=1cm,y=1cm,font=\small,>=stealth]
\node[anchor=west] at (0,0) {Descending};
\node[anchor=west] at (0,-.6) {Ascending};
\node (pd) at (2.9,0) {same prefix};
\node (pu) at (2.9,-.6) {same prefix};
\node[draw,minimum width=.45cm,inner sep=2pt] (bd) at (5,0) {$b$};
\node[draw,minimum width=.45cm,inner sep=2pt] (ad) at (6.2,0) {$a$};
\node[draw,minimum width=.45cm,inner sep=2pt] (au) at (5,-.6) {$a$};
\node[draw,minimum width=.45cm,inner sep=2pt] (bu) at (6.2,-.6) {$b$};
\node (sd) at (8.7,0) {same suffix};
\node (su) at (8.7,-.6) {same suffix};
\draw[->] (pd) -- node[above,inner sep=1pt] {$t$} (bd);
\draw[->] (bd) -- (ad);
\draw[->] (ad) -- node[above,inner sep=1pt] {$t_\downarrow$} (sd);
\draw[->] (pu) -- node[above,inner sep=1pt] {$t$} (au);
\draw[->] (au) -- (bu);
\draw[->] (bu) -- node[above,inner sep=1pt] {$t_\uparrow$} (su);
\end{tikzpicture}
\captionsetup{hypcap=false,skip=3pt}
\captionof{figure}{Supply cylinder exchange; the common suffix compensates pair losses via Lemma~\ref{lem:tail-compensation}.}
\label{fig:cascade-exchange}
\end{minipage}
\par\smallskip

%\subparagraph*{Step 1: completion.}
First, if \(t=T\) or \(a\le t\), the exchange changes neither fill; assume \(t<T\) and \(t<a<b\).
Let \(q_a=(r_jt+a)/(1+r_j)\) and \(q_b=(r_jt+b)/(1+r_j)\) be the first-connection equilibrium pressures.
Without target stopping, the demand cylinder pressures after the pair satisfy
\[
 \begin{gathered}
 \widehat t_\uparrow=\frac{r_jq_a+b}{1+r_j},\qquad
 \widehat t_\downarrow=q_b+\frac{(a-q_b)_+}{1+r_j},\qquad
 \widehat t_\uparrow-\widehat t_\downarrow
 =\frac{\min\{b-a,r_j(a-t)\}}{(1+r_j)^2}\ge0.
 \end{gathered}
\]
Stopping at \(T\) replaces each pressure by its minimum with \(T\); hence \(t_\uparrow\ge t_\downarrow\).
By Lemma~\ref{lem:tail-compensation}, the ascending order also completes demand cylinder \(j\).

Next, both orders complete demand cylinder \(j\), leaving equal supply cylinder pressure sums by conservation. The unchanged prefix contributes nothing to the final \(\Phi_\lambda\) difference.
Let \(\Delta_\lambda\) be this difference, measured as ascending minus descending, and define \(\Psi_\lambda(L,t'):=\Phi_\lambda(L)+r_j(t'-\lambda)_+\).
Applying Lemma~\ref{lem:tail-compensation} to the common suffix from demand cylinder pressures \(t_\uparrow\ge t_\downarrow\) yields
$\Delta_\lambda\ge\Phi_\lambda(L_\uparrow)-\Phi_\lambda(L_\downarrow)+r_j\bigl((t_\uparrow-\lambda)_+-(t_\downarrow-\lambda)_+\bigr)=\Psi_\lambda(L_\uparrow,t_\uparrow)-\Psi_\lambda(L_\downarrow,t_\downarrow).$
By~\eqref{eq:maj-phi}, it suffices to prove that this \(\Psi_\lambda\) difference is nonnegative for every \(\lambda\in\mathbb R\).

%\subparagraph*{Step 3: smaller pair residuals.}
To bound \(\Delta_\lambda\), we compare the smaller residual supply cylinder pressures.
Let \(p_\uparrow=\min L_\uparrow\), \(p_\downarrow=\min L_\downarrow\), and \(w_{\mathrm{rem}}=r_j(T-t)\), the required total supply cylinder pressure decrease from \(t\).
In the ascending order, the lower-pressure cylinder ends at \(a-w_{\mathrm{rem}}\) if it completes the demand cylinder, and at \(q_a\) otherwise.
It has the smaller residual in either case; hence \(p_\uparrow=\max\{q_a,a-w_{\mathrm{rem}}\}\).
In the descending order, the higher-pressure cylinder ends at pressure at least \(q_b\).
The lower-pressure cylinder remains at \(a\) if unused, and ends at pressure at least \(q_b\) if used.
Because \(a>q_a\) and \(q_b>q_a\), we obtain \(p_\downarrow\ge q_a\).
Each cylinder's pressure decrease is at most \(w_{\mathrm{rem}}\), and both initial pressures are at least \(a\).
Thus \(p_\downarrow\ge a-w_{\mathrm{rem}}\), and combining these bounds yields
\(p_\downarrow\ge\max\{q_a,a-w_{\mathrm{rem}}\}=p_\uparrow\).

%\subparagraph*{Step 4: the $\Psi_\lambda$ inequality.}
Finally, we use this comparison to prove \(\Delta_\lambda\ge0\) for every \(\lambda\in\mathbb R\).
In either order, the larger residual supply cylinder pressure is at least the demand cylinder pressure, with equality if filling remains unfinished.
\begin{itemize}
\item For \(\lambda\le t_\downarrow\), let \(L=(p_1,p_2)\), with \(p_1\le p_2\), be the pair residual state in either order.
Let \(t'\) be the corresponding demand cylinder pressure after the pair.
Because \(p_2\ge t'\ge\lambda\), we have \((p_2-\lambda)_+=p_2-\lambda\) and \((t'-\lambda)_+=t'-\lambda\).
The smaller residual \(p_1\) may lie below \(\lambda\), and its term remains \((p_1-\lambda)_+\).
Substituting into the definition of \(\Psi_\lambda\) and using
\((p_1-\lambda)_+=p_1-\lambda+(\lambda-p_1)_+\), we obtain
\[
\begin{aligned}
 \Psi_\lambda(L,t')
 =(p_1-\lambda)_++(p_2-\lambda)+r_j(t'-\lambda)
 =p_1+p_2+r_jt'-(2+r_j)\lambda+(\lambda-p_1)_+.
\end{aligned}
\]
Conservation implies \(p_1+p_2+r_jt'=a+b+r_jt\), which is identical in both orders.
The common terms therefore cancel.
Substituting \(p_1=p_\uparrow\) in ascending order and \(p_1=p_\downarrow\) in descending order yields $\Delta_\lambda \ge(\lambda-p_\uparrow)_+-(\lambda-p_\downarrow)_+\ge0,$ because \(p_\uparrow\le p_\downarrow\).
\item For \(\lambda>t_\downarrow\), if \(t_\downarrow<T\), the larger descending supply cylinder residual equals \(t_\downarrow\).
Both residual supply cylinder pressures and the demand cylinder pressure are therefore below \(\lambda\), implying
\(\Psi_\lambda(L_\downarrow,t_\downarrow)=0\).
Hence \(\Delta_\lambda\ge\Psi_\lambda(L_\uparrow,t_\uparrow)\ge0\).
Otherwise, \(t_\uparrow=t_\downarrow=T\), and both orders raise the demand cylinder pressure from \(t\) to \(T\).
The two supply cylinders therefore undergo the same total pressure decrease \(w_{\mathrm{rem}}\), and both pair residual sums equal \(a+b-w_{\mathrm{rem}}\).
Together with \(p_\uparrow\le p_\downarrow\), this proves
\(L_\uparrow\succeq_{\mathrm{maj}}L_\downarrow\).
Because the demand cylinder terms are identical, \eqref{eq:maj-phi} implies
\(\Delta_\lambda\ge\Phi_\lambda(L_\uparrow)-\Phi_\lambda(L_\downarrow)\ge0\).
\end{itemize}
We have proved \(\Delta_\lambda\ge0\) for every \(\lambda\).
The equal final pressure sums therefore imply majorization by~\eqref{eq:maj-phi}.
Repeated adjacent exchanges remove all inversions while preserving completion.
Transitivity of majorization proves the theorem.
\end{proof}

%%%%%%%%%%%%%%%%%%%%%%%%%%%%%%%%%%%%%%%%%%%%%%%%%%%%%%%%%%%%%%%%%%%%
\subsection{Preservation of Majorization under Cascade Filling}
\label{sec:cascade-propagation}

\begin{theorem}
\label{thm:propagation}
For equal-capacity supply cylinders, let \(P=(p_1^{(P)},\ldots,p_m^{(P)})\) and \(Q=(p_1^{(Q)},\ldots,p_m^{(Q)})\) be sorted supply cylinder pressure states satisfying \(P\succeq_{\mathrm{maj}}Q\).
If cascade filling completes demand cylinder \(j\) from \(Q\), then it also completes \(j\) from \(P\), and $F_j(P)\succeq_{\mathrm{maj}}F_j(Q)$.
\end{theorem}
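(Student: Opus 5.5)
We would prove the theorem by first describing cascade filling in closed form through a single threshold, and then comparing \(P\) and \(Q\) through the functions \(\Phi_\lambda\) using~\eqref{eq:maj-phi}. Write \(t_0=\bar p_j^{(D)}\) for the demand pressure at the start of the fill and \(T=\widetilde p_j^{(D)}\) for the target. By Lemma~\ref{lem:cascade-supply-order} and the shape of the update, cascade filling from a sorted state \(S\) acts in three ways. It skips every cylinder at or below \(t_0\). Each cylinder it uses, except possibly the last, is equilibrated, and its residual pressure equals the new demand pressure. The last cylinder it uses either stops at the target or is equilibrated. The fill succeeds exactly when the demand pressure reaches \(T\) before the cylinders run out.

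The first step is completion from \(P\), which we reduce to a single inequality in \(\Phi\). Let \(h(S,t)\) denote the demand pressure reached by cascading \(S\) starting from demand pressure \(t\), without target stopping. The plan is to show that \(h\) is monotone in the majorization order, that is, \(P\succeq_{\mathrm{maj}}Q\) implies \(h(P,t_0)\ge h(Q,t_0)\). We would prove this by induction on \(m\), or alternatively by an explicit formula. Cylinders at or below the current demand pressure contribute nothing. If the current demand pressure is \(u\) and the next cylinder has pressure \(x>u\), the new demand pressure is \(\alpha_j u+\beta_j x\). Unrolling this recursion expresses \(h\) as a sum of \(\beta_j\alpha_j^{m-i}\) times the tail contributions. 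The coefficient \(\alpha_j^{m-i}\) increases with \(i\), so the largest pressures receive the largest weights. A Schur-convexity argument then applies: the sorted weighted sum with increasing weights is Schur-convex, and the truncation at the current demand pressure is handled by a max-type argument. As a fallback, we would reduce to a single Robin Hood transfer. Any \(Q\) is obtained from any \(P\succeq_{\mathrm{maj}}Q\) by finitely many T-transforms, each moving pressure from a higher cylinder to a lower one. It therefore suffices to treat \(P\) and \(Q\) that differ in only two coordinates. For such a pair we would argue exactly as in the proof of Theorem~\ref{thm:single-demand-dominance}, since in both cases two cylinders are compared and the common suffix is then handled by Lemma~\ref{lem:tail-compensation}.

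The second step proves \(F_j(P)\succeq_{\mathrm{maj}}F_j(Q)\) in this T-transform setting. Suppose \(P\) and \(Q\) agree except that \(Q\) has \(c\le d\) where \(P\) has \(c-\varepsilon\) and \(d+\varepsilon\), with the sorted order preserved. By transitivity it suffices to prove \(\Phi_\lambda(F_j(P))\ge\Phi_\lambda(F_j(Q))\) for every \(\lambda\); the sums of the two residual states agree by conservation. We would track the demand pressure along the common prefix, through the two modified positions, and then through the common suffix. The suffix would be handled by Lemma~\ref{lem:tail-compensation}, using the function \(\Psi_\lambda(L,t')=\Phi_\lambda(L)+r_j(t'-\lambda)_+\) from the proof of Theorem~\ref{thm:single-demand-dominance}. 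This reduces the claim to a local inequality for \(\Psi_\lambda\) over the stretch of cylinders lying between the two modified positions. We would then split into the cases \(\lambda\le\) the demand pressure and \(\lambda>\) the demand pressure, as in the earlier proof, and in each case use conservation to cancel the linear terms.

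The main obstacle is that the two modified cylinders need not be adjacent. The cylinders between them are processed from different demand pressures in the runs from \(P\) and from \(Q\), so the local comparison involves a whole block rather than a pair. We would first try the following. Lowering \(c\) means its cylinder transfers less, or is skipped, and leaves a smaller residual. The demand pressure entering the intermediate block is therefore lower under \(P\). Lemma~\ref{lem:tail-compensation} then applies to that block, and it bounds the resulting \(\Phi_\lambda\) loss by \(r_j\) times the pressure deficit. The extra \(\varepsilon\) at \(d\) would have to compensate for this loss. To see that it does, we would combine two facts. First, every residual in the block is at least the demand pressure at that point. Second, the \(\Phi_\lambda\) gain from \(d+\varepsilon\) is at least the deficit propagated through the block, because each intermediate equilibration contracts differences by the factor \(\alpha_j<1\). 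If this accounting fails, we would reorder the argument: apply Theorem~\ref{thm:single-demand-dominance}-style adjacent exchanges to realize each T-transform as a sequence of adjacent transfers, so that only neighbouring cylinders ever differ.
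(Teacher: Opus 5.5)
There is a genuine gap. The decisive inequality is $\Phi_\lambda(F_j(P))\ge\Phi_\lambda(F_j(Q))$ for a single T-transform whose modified positions $a<b$ are not adjacent, and it is never proved. You flag it as the main obstacle and offer two contingent remedies, and neither works as stated.

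The first remedy misapplies Lemma~\ref{lem:tail-compensation}. That lemma assumes the fill from the lower starting pressure \emph{completes} $j$, which typically does not happen within the intermediate block. Its conclusion is also a \emph{lower} bound on $\Phi_\lambda(R(t_2))-\Phi_\lambda(R(t_1))$, that is, on the loss $P$ incurs by entering the block at lower demand pressure. For the extra $\varepsilon$ at position $b$ to compensate, you need an \emph{upper} bound on that loss.

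The second remedy is unsupported. You do not show that a T-transform splits into adjacent transfers that keep every intermediate vector sorted. Even for adjacent positions, the comparison is not the exchange of Theorem~\ref{thm:single-demand-dominance}, which compares two orders of one state. You are comparing two different states under the same cascade order, so that case analysis would have to be redone, including skipping of the lowered cylinder and target stopping inside the stretch.

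Your completion step is sound in outline but only gestured at. It can be made rigorous: for sorted inputs the untruncated arrival equals $\max_s\bigl(\alpha_j^{m-s+1}t_0+\beta_j\sum_{i\ge s}\alpha_j^{m-i}x_i\bigr)$. Each term has nondecreasing coefficients, so it is nonincreasing in $S_1,\ldots,S_{m-1}$ at fixed total.

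The idea that closes the argument, and the one the paper uses, is to bypass T-transforms and $\Phi_\lambda$ entirely and track residual prefix sums. For a sorted state with prefix sums $S_h$, conservation gives the demand pressure before cylinder $h$ as $t_0+(S_{h-1}-S'_{h-1})/r_j$. With $w_j=r_j(\widetilde p_j^{(D)}-t_0)$, the three cases (skip, equilibration below target, target reached) combine into
\[
 S'_h=\max\bigl\{S_h-w_j,\ \min\{S_h,\ \alpha_jS'_{h-1}+\beta_jS_h+\alpha_j t_0\}\bigr\},\qquad S'_0=0.
\]
This is nondecreasing in $S_1,\ldots,S_h$, so $P\succeq_{\mathrm{maj}}Q$ yields $S'_h(P)\le S'_h(Q)$ for every $h$. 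Let $M$ be the common total. At $h=m$, a scan removes at most $w_j$, so $M-w_j\le S'_m(P)\le S'_m(Q)=M-w_j$, which forces completion from $P$. Lemma~\ref{lem:cascade-supply-order} keeps the residuals sorted by index, so these prefix inequalities are exactly $F_j(P)\succeq_{\mathrm{maj}}F_j(Q)$.
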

\begin{proof}
For a sorted supply cylinder state $X=(x_1,\ldots,x_m)$, let $S_h=\sum_{i=1}^h x_i$ be the initial pressure sum of the first $h$ cylinders and $S'_h$ their residual pressure sum after the cascade scan, with unused cylinders unchanged.
These residual sums are defined whether the scan reaches the target or fails after scanning all supply cylinders.
Let $w_j:=r_j(\widetilde p_j^{(D)}-\bar p_j^{(D)})$ be the total decrease in supply cylinder pressures required to complete demand cylinder $j$.

Before cylinder $h$, conservation implies that the demand cylinder pressure is $p=\bar p_j^{(D)}+(S_{h-1}-S'_{h-1})/r_j$.
Table~\ref{tab:cascade-prefix-cases} lists the three cases, using $x_h=S_h-S_{h-1}$ and $\alpha_j/r_j=\beta_j=1-\alpha_j$.
\begin{table}[!htb]
\centering\small
\caption{Residual prefix sums in a cascade scan.}
\label{tab:cascade-prefix-cases}
\begin{tabularx}{\linewidth}{@{}>{\raggedright\arraybackslash}p{.23\linewidth}>{\raggedright\arraybackslash}Xl@{}}
\toprule
Case & Pressure behavior & Residual prefix sum $S'_h$\\
\midrule
Cylinder $h$ skipped before completion
& Sorted order implies all earlier cylinders are also skipped.
& $S_h$\\
Equilibration below target
& Both pressures reach $\alpha_jp+\beta_jx_h$.
& $\begin{aligned}[t]
&S'_{h-1}+\alpha_jp+\beta_jx_h\\
&\quad=\alpha_jS'_{h-1}+\beta_jS_h+\alpha_j\bar p_j^{(D)}
\end{aligned}$\\
Target reached in first $h$ cylinders
& Total pressure decrease is $w_j$.
& $S_h-w_j$\\
\bottomrule
\end{tabularx}
\end{table}
Thus, with $S_0=S'_0=0$,
\[
 S'_h=\max\!\left\{S_h-w_j,\,
 \min\!\left\{S_h,\alpha_jS'_{h-1}+\beta_jS_h+\alpha_j\bar p_j^{(D)}\right\}\right\}.
\]
Because $\alpha_j,\beta_j\ge0$, induction shows that $S'_h$ is non-decreasing in each initial prefix sum $S_1,\ldots,S_h$.
Hence $P\succeq_{\mathrm{maj}}Q$ implies $S'_h(P)\le S'_h(Q)$ for every $h$.

\noindent
Let $M=S_m(P)=S_m(Q)$.
Because the scan never raises the demand cylinder pressure above its target, it removes at most $w_j$ from the total supply cylinder pressure.
Completion from $Q$ and the prefix inequality therefore imply $M-w_j\le S'_m(P)\le S'_m(Q)=M-w_j$.
Thus the scan from $P$ removes exactly $w_j$ and completes demand cylinder $j$ by conservation.
Sortedness, equal totals, and the prefix inequalities imply $F_j(P)\succeq_{\mathrm{maj}}F_j(Q)$.
\end{proof}

\begin{theorem}
\label{thm:fixed-order-standardization}
For equal-capacity supply cylinders, if a fixed demand cylinder order $(j_1,\ldots,j_n)$ admits a feasible schedule, then filling each demand cylinder by cascade is also feasible.
After each demand cylinder, the cascade residual state majorizes the residual state of the original schedule.
\end{theorem}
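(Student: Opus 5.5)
We argue by induction on the position \(k\) in the fixed demand cylinder order, combining Theorem~\ref{thm:single-demand-dominance} with Theorem~\ref{thm:propagation}. Let \(Q_0=P_0\) be the sorted initial supply cylinder state. Let \(Q_k\) be the sorted residual state of the given feasible schedule after completing \(j_1,\ldots,j_k\), and let \(P_k:=F_{j_k}(P_{k-1})\) be the state produced by cascade filling each demand cylinder in turn. The invariant is that \(P_k\ne\bot\) and \(P_k\succeq_{\mathrm{maj}}Q_k\). It holds trivially for \(k=0\), since majorization is reflexive.

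First we recast the original schedule restricted to one demand cylinder as an operator \(G\). Within the filling of \(j_k\), the schedule uses some supply cylinders, each at most once by pressure monotonicity, since a pair cannot be reused after equilibration or completion. These transfers form an ordered sequence of distinct cylinders, and appending the unused cylinders gives a permutation \(\sigma_k\). As noted before Lemma~\ref{lem:cascade-supply-order}, this gives \(Q_k=G_{(j_k,\sigma_k)}(Q_{k-1})\ne\bot\). One subtlety is that the original schedule might contain transfers with \(\Delta p=0\); the model forbids these, and skipping them does not change the states, so the identification is exact.

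For the inductive step, assume \(P_{k-1}\succeq_{\mathrm{maj}}Q_{k-1}\). Theorem~\ref{thm:single-demand-dominance}, applied at state \(Q_{k-1}\) with order \(\sigma_k\), shows that cascade filling from \(Q_{k-1}\) completes \(j_k\) and that \(F_{j_k}(Q_{k-1})\succeq_{\mathrm{maj}}Q_k\). Theorem~\ref{thm:propagation}, applied to \(P_{k-1}\succeq_{\mathrm{maj}}Q_{k-1}\), shows that cascade filling from \(P_{k-1}\) completes \(j_k\) and that \(P_k=F_{j_k}(P_{k-1})\succeq_{\mathrm{maj}}F_{j_k}(Q_{k-1})\). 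Majorization on sorted vectors with equal sums is transitive, since it is defined by prefix-sum inequalities and totals agree by conservation. Transitivity then gives \(P_k\succeq_{\mathrm{maj}}Q_k\), which closes the induction. After \(k=n\), every demand cylinder has been completed by cascade, so the cascade schedule is feasible, and the invariant is exactly the second claim.

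The main obstacle is the bookkeeping, not any new inequality. Both the cascade operator and majorization are defined on sorted states, while the schedule's cylinders carry identities. Because all supply cylinders have equal capacity, only the multiset of pressures matters for future fills, so we may pass to sorted states after each fill. We will check that \(F_j\) and \(G_{(j,\sigma)}\) depend only on that multiset up to relabeling; Lemma~\ref{lem:cascade-supply-order} handles ties. No new estimates are needed beyond the two cited theorems.
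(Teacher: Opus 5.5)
Your proposal is correct and matches the paper's proof. Both argue by induction on the fill index: Theorem~\ref{thm:single-demand-dominance} at \(Q_{k-1}\) gives \(F_{j_k}(Q_{k-1})\succeq_{\mathrm{maj}}Q_k\), Theorem~\ref{thm:propagation} gives success from \(P_{k-1}\) with \(F_{j_k}(P_{k-1})\succeq_{\mathrm{maj}}F_{j_k}(Q_{k-1})\), and transitivity closes the step. Your extra bookkeeping, which views each original fill as \(G_{(j_k,\sigma_k)}\) by appending unused cylinders, is the remark the paper makes just before Lemma~\ref{lem:cascade-supply-order} and uses implicitly.
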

\begin{proof}
Relabel the demand cylinders as \(1,\ldots,n\) in the fixed order and fix a feasible schedule.
Let \(Q_t\) and \(P_t\) be the sorted residual supply cylinder pressure vectors after the first \(t\) fills in the original and cascade schedules, respectively.
Initially, \(P_0=Q_0\).
Suppose the cascade schedule has completed the first \(t-1\) demand cylinders and \(P_{t-1}\succeq_{\mathrm{maj}}Q_{t-1}\).
By Theorem~\ref{thm:single-demand-dominance}, cascading demand cylinder \(t\) from \(Q_{t-1}\) succeeds and leaves a state majorizing \(Q_t\).
Theorem~\ref{thm:propagation} implies success from \(P_{t-1}\) and a residual state majorizing that from \(Q_{t-1}\).
Thus $P_t=F_t(P_{t-1}) \succeq_{\mathrm{maj}}F_t(Q_{t-1}) \succeq_{\mathrm{maj}}Q_t.$
Induction proves feasibility of the cascade schedule and the claimed majorization after every demand cylinder.
\end{proof}

%%%%%%%%%%%%%%%%%%%%%%%%%%%%%%%%%%%%%%%%%%%%%%%%%%%%%%%%%%%%%%%%%%%%
\section{Exact Algorithms and the Lower Bound for Three Supply Cylinders}
\label{sec:frontier-dp}
Throughout this section, supply cylinders have equal capacity. Remove demand cylinders already at their targets and let $D$ be the remaining set, with $n=|D|$.

%%%%%%%%%%%%%%%%%%%%%%%%%%%%%%%%%%%%%%%%%%%%%%%%%%%%%%%%%%%%%%%%%%%%
\subsection{General Frontier Dynamic Programming}
\label{sec:general-frontier-dp}

We use subset dynamic programming for sequencing, following Held and Karp~\cite{heldkarp1962}.
Initially and after each fill, let \(A\subseteq D\) contain the completed demand cylinders and let \(P=(p_1^{(S)},\ldots,p_m^{(S)})\) be the residual supply cylinder state sorted by non-decreasing pressure.
Demand cylinders in $A$ are at their targets and all others at their initial pressures; hence $A$ determines all demand cylinder pressures.
Define the normalized requirement $w_j:=v_j^{(D)}(\widetilde p_j^{(D)}-\bar p_j^{(D)})/v^{(S)}$.
By conservation, completing demand cylinder $j$ reduces the supply cylinder pressure sum by $w_j$; hence every reachable state $(A,P)$ satisfies
\begin{equation}
 \sum_{i=1}^m p_i^{(S)}
 =M(A):=\sum_{i\in\mathcal S}\bar p_i^{(S)}-\sum_{j\in A}w_j.
 \label{eq:residual-sum}
\end{equation}
By~\eqref{eq:residual-sum}, all residual states for the same $A$ have equal sums.
By Theorem~\ref{thm:fixed-order-standardization}, we restrict to cascades. Let $P_0$ be the sorted initial supply cylinder state and $\mathcal R[A]$ the residual states from successful cascade orders of exactly the demand cylinders in $A$.
For the same $A$, Theorem~\ref{thm:propagation} ensures that a majorizing state preserves every feasible remaining cascade demand cylinder order.
As in dominance-based labeling for resource-constrained shortest paths~\cite{pugliese2013,feillet2004}, we discard majorized states and retain the majorization-maximal states as the \emph{frontier}, which may contain several mutually incomparable states.
For a set $\mathcal X$ of states, let $\operatorname{ND}(\mathcal X)$ denote its majorization-maximal states.
Set $\mathcal R[\varnothing]=\mathcal F[\varnothing]=\{P_0\}$ and $\mathcal F[A]:=\operatorname{ND}(\mathcal R[A])$.

\begin{theorem}
\label{thm:frontier-dp}
For THCP with equal-capacity supply cylinders, compute the frontier for each nonempty $A\subseteq D$ as follows.
For each last demand cylinder $j\in A$, cascade from every state in $\mathcal F[A\setminus\{j\}]$, discard failures, and retain the majorization-maximal candidates:
\begin{equation}
 \mathcal C[A]:=\bigcup_{j\in A}\{F_j(P):P\in\mathcal F[A\setminus\{j\}],\ F_j(P)\ne\bot\},\quad
 \mathcal F[A]=\operatorname{ND}(\mathcal C[A]).
 \label{eq:frontier-recurrence}
\end{equation}
Evaluating the recurrence~\eqref{eq:frontier-recurrence} in order of increasing cardinality of $A$ yields the exact frontiers.
Moreover, $\mathcal F[D]\ne\varnothing$ if and only if the THCP instance is feasible.
Predecessors of retained states reconstruct a feasible demand cylinder order.
\end{theorem}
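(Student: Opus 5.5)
We prove by induction on $|A|$ that the recurrence~\eqref{eq:frontier-recurrence} returns $\operatorname{ND}(\mathcal R[A])$. The claim to carry through the induction is that $\mathcal F[A]$ \emph{covers} $\mathcal R[A]$: every $R\in\mathcal R[A]$ is majorized by some state of $\mathcal F[A]$, and every state of $\mathcal F[A]$ lies in $\mathcal R[A]$. Because the states are sorted vectors with common sum $M(A)$ by~\eqref{eq:residual-sum}, majorization is a partial order on them (antisymmetric up to equality of sorted vectors). The candidate set is also finite. Hence $\operatorname{ND}$ of a finite covering subset of $\mathcal R[A]$ equals $\operatorname{ND}(\mathcal R[A])$. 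The base case $A=\varnothing$ is immediate.

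For the inductive step, fix nonempty $A$ and assume the claim for all proper subsets. First, soundness. Each candidate $F_j(P)$ with $P\in\mathcal F[A\setminus\{j\}]\subseteq\mathcal R[A\setminus\{j\}]$ comes from a successful cascade order of $A\setminus\{j\}$ followed by cascading $j$. It is therefore the residual state of a successful cascade order of exactly $A$, so $\mathcal C[A]\subseteq\mathcal R[A]$.

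Second, covering. Take $R\in\mathcal R[A]$, produced by a cascade order whose last demand cylinder is $j$, so that $R=F_j(R')$ for some $R'\in\mathcal R[A\setminus\{j\}]$. By induction there is $P\in\mathcal F[A\setminus\{j\}]$ with $P\succeq_{\mathrm{maj}}R'$. Theorem~\ref{thm:propagation} then gives $F_j(P)\ne\bot$ and $F_j(P)\succeq_{\mathrm{maj}}R$, so $F_j(P)\in\mathcal C[A]$. Every element of the finite set $\mathcal C[A]$ is majorized by some element of $\operatorname{ND}(\mathcal C[A])$, and transitivity then makes $\mathcal F[A]$ cover $R$. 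Combining soundness and covering: any maximal element of $\mathcal R[A]$ is majorized by an element of $\mathcal F[A]\subseteq\mathcal R[A]$ and hence equals it. Conversely, an element of $\mathcal F[A]$ that is strictly majorized by some $R\in\mathcal R[A]$ would be strictly majorized by an element of $\mathcal C[A]$, contradicting its maximality in $\mathcal C[A]$. Thus $\mathcal F[A]=\operatorname{ND}(\mathcal R[A])$.

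For the feasibility equivalence, suppose THCP is feasible. The feasible schedule fixes some demand order, and Theorem~\ref{thm:fixed-order-standardization} converts it into a successful cascade order of $D$, so $\mathcal R[D]\ne\varnothing$. By covering, $\mathcal F[D]\ne\varnothing$. Conversely, a state in $\mathcal F[D]\subseteq\mathcal R[D]$ certifies a successful cascade schedule. For reconstruction, store with each retained state the pair $(j,P)$ that generated it, and follow these pointers back to $\varnothing$. Each pointer names a state retained in $\mathcal F[A\setminus\{j\}]$, so the backtrack yields a demand order whose cascade execution reproduces the chain.

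The main obstacle is the covering step. Theorem~\ref{thm:propagation} only compares cascades from two states related by majorization, so the induction must be phrased as a covering statement rather than as equality with $\operatorname{ND}$. Handling ties also needs care: equal sorted vectors must be identified, so that $\operatorname{ND}$ is well defined and the final maximality argument goes through.
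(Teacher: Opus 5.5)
Your proof is correct and follows the paper's argument. The paper likewise shows $\mathcal C[A]\subseteq\mathcal R[A]$, covers each $Q\in\mathcal R[A]$ by applying Theorem~\ref{thm:propagation} to a frontier state majorizing its predecessor, concludes $\operatorname{ND}(\mathcal C[A])=\operatorname{ND}(\mathcal R[A])$, and obtains the feasibility equivalence from Theorem~\ref{thm:fixed-order-standardization}; you simply make explicit the induction on $|A|$ and the antisymmetry of majorization on sorted equal-sum vectors, which the paper leaves implicit.
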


\begin{proof}
Permutations of $A$ induce all states in $\mathcal R[A]$, making it finite. Candidates arise from demand cylinder orders; hence $\mathcal C[A]\subseteq\mathcal R[A]$.
Choose an order reaching \(Q\in\mathcal R[A]\), with last demand cylinder \(j\) and preceding state \(Q'\).
Let $P\in\mathcal F[A\setminus\{j\}]$ majorize $Q'$; such a state exists by finiteness.
Theorem~\ref{thm:propagation} then implies $F_j(P)\in\mathcal C[A]$ and $F_j(P)\succeq_{\mathrm{maj}}Q$.
Every majorization-maximal state in $\mathcal R[A]$ therefore belongs to $\mathcal C[A]$, and every other candidate is majorized by such a state.
Hence $\operatorname{ND}(\mathcal C[A])=\operatorname{ND}(\mathcal R[A])$.
Theorem~\ref{thm:fixed-order-standardization} establishes decision equivalence; predecessors reconstruct a feasible demand cylinder order.
\end{proof}

%%%%%%%%%%%%%%%%%%%%%%%%%%%%%%%%%%%%%%%%%%%%%%%%%%%%%%%%%%%%%%%%%%%%
Let $W:=\max_{A\subseteq D}|\mathcal F[A]|$ be the maximum number of retained states for any completed demand cylinder set.

\begin{theorem}
\label{thm:frontier-complexity}
For THCP with equal-capacity supply cylinders, generate candidates via~\eqref{eq:frontier-recurrence} and retain the majorization-maximal candidates by all-pairs comparison.
This computes all frontiers using $O(m\log(m+1)+mn^2 2^nW^2)$ rational arithmetic operations and comparisons and stores $O(m2^nW)$ rational values and indices.
For binary rational inputs, the bit complexity is $2^nW^2\operatorname{poly}(L)$ time and $2^nW\operatorname{poly}(L)$ space.
\end{theorem}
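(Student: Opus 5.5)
The plan is to count the work of evaluating recurrence~\eqref{eq:frontier-recurrence} directly, using Theorem~\ref{thm:frontier-dp} for correctness, and then convert the operation count into bit complexity with Lemma~\ref{lem:exact-simulation}.

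\emph{Preprocessing.} We first sort the initial supply state $P_0$, which costs $O(m\log(m+1))$ comparisons. We also precompute $w_j$, $\alpha_j$ and $\beta_j$ for every $j$, which costs $O(n)$ operations and is absorbed into the main term.

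\emph{Candidate generation.} Fix $A$. There are at most $|A|\le n$ choices of the last cylinder $j$, and at most $W$ predecessor states in $\mathcal F[A\setminus\{j\}]$, so $|\mathcal C[A]|\le nW$. Each cascade $F_j(P)$ scans the $m$ sorted cylinders once, using $O(m)$ operations. By Lemma~\ref{lem:cascade-supply-order}, the residual state is already sorted, so no re-sorting is needed. Generation therefore costs $O(mnW)$ per $A$.

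\emph{Filtering.} Majorization between two sorted states of equal sum $M(A)$ (equal by~\eqref{eq:residual-sum}) is tested by comparing $m-1$ prefix sums, which takes $O(m)$ operations. All-pairs comparison of the $nW$ candidates costs $O(m(nW)^2)=O(mn^2W^2)$ per $A$. Summing over the $2^n$ subsets gives $O(m\log(m+1)+mn^22^nW^2)$ operations, since filtering dominates generation.

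\emph{Space.} For each of the $2^n$ subsets we store its frontier: at most $W$ states of $m$ rationals each, plus one predecessor index (last $j$ and predecessor state index) per state. This is $O(m2^nW)$ values. The candidate set is temporary and needs $O(mnW)$ space, which is within the bound when reused across subsets. If one insists on a strictly $O(m2^nW)$ bound, candidates can instead be inserted into the frontier incrementally, which keeps the same time bound.

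\emph{Bit complexity.} This is the main obstacle. We must show that every stored rational has $\operatorname{poly}(L)$ bits even though the DP composes up to $n$ cascades. Each retained state is the residual state of an actual legal schedule: a cascade order of the cylinders in $A$, with at most $mn$ transfers. Lemma~\ref{lem:exact-simulation} then bounds its bit length by $\operatorname{poly}(m,n,L)=\operatorname{poly}(L)$, since $m,n\le L$. Each arithmetic operation and comparison on such numbers therefore costs $\operatorname{poly}(L)$ time. The $m$ and $n^2$ factors are also absorbed into $\operatorname{poly}(L)$. This yields $2^nW^2\operatorname{poly}(L)$ time and $2^nW\operatorname{poly}(L)$ space.
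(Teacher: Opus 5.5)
Your proposal is correct and follows the paper's proof essentially step by step. Both arguments use at most $|A|W$ candidates per subset, $O(m)$ cascades and prefix-sum majorization tests justified by Lemma~\ref{lem:cascade-supply-order}, all-pairs filtering, and bit-length control via Lemma~\ref{lem:exact-simulation}, since every stored state comes from a legal schedule of at most $mn$ transfers. The paper adds only that predecessor indices need $O(\log(nW+1))=O(n\log(n+1))$ bits because $W\le n!$; also, your hedge about temporary candidate storage is unnecessary, since $mnW\le m2^nW$ already.
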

\begin{proof}
Each $A$ has at most $|A|W$ candidates. Cascades preserve sorted order (Lemma~\ref{lem:cascade-supply-order}), making each fill and comparison $O(m)$ and all-pairs frontier extraction $O(m|A|^2W^2)$ operations.
Summing over all subsets using $\sum_{A\subseteq D}|A|^2=n(n+1)2^{n-2}$ proves the time bound, including initial sorting.
Storing at most $2^nW$ frontier states and $nW$ temporary candidates proves the space bound.
Each stored state is reached by a legal schedule of at most $mn$ transfers.
Its pressures therefore have polynomial bit length by Lemma~\ref{lem:exact-simulation}.
Prefix sums and cross-products in feasibility tests and frontier comparisons retain polynomial bit length.
Predecessor indices require $O(\log(nW+1))=O(n\log(n+1))$ bits because $W\le n!$.
\end{proof}

%%%%%%%%%%%%%%%%%%%%%%%%%%%%%%%%%%%%%%%%%%%%%%%%%%%%%%%%%%%%%%%%%%%%
\subsection{Dynamic Programming for Two Supply Cylinders}
\label{sec:two-supply-representative-dp}
For a completed set $A$, conservation fixes the residual total at $M(A)$.
For $0\le x\le M(A)/2$, define $X_A(x):=(x,M(A)-x)$.
For $x,y$ in this interval, $X_A(x)\succeq_{\mathrm{maj}}X_A(y)$ exactly when $x\le y$; hence each nonempty frontier has one state minimizing the lower pressure, and $W=1$.
Let $\ell[A]$ be this minimum, or $\bot$ if the frontier is empty.

For $j\notin A$, let $c_j(A,x)$ be the lower component of $F_j(X_A(x))$, or $\bot$ on failure; two supply updates suffice.
Initialize $\ell[\varnothing]=\min\{\bar p_1^{(S)},\bar p_2^{(S)}\}$ and evaluate, in increasing set size,
\[
 \ell[A]=\min_{\substack{j\in A\\\ell[A\setminus\{j\}]\ne\bot}}
 c_j\bigl(A\setminus\{j\},\ell[A\setminus\{j\}]\bigr).
\]
Discard $\bot$ values and set $\min\varnothing:=\bot$.

\begin{theorem}
\label{thm:two-supply-dp}
For THCP with exactly two equal-capacity supply cylinders and arbitrary positive rational demand cylinder capacities, feasibility can be decided using $O(n2^n)$ rational operations and comparisons.
The algorithm stores $O(2^n)$ rational values and indices.
Its bit complexity is $2^n\operatorname{poly}(L)$ in both time and space.
A feasible schedule can be reconstructed within the same bounds.
\end{theorem}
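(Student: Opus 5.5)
The plan is to specialize Theorem~\ref{thm:frontier-dp} and Theorem~\ref{thm:frontier-complexity} to $m=2$, using the fact (already noted before the statement) that $W=1$. First I justify that the scalar recurrence for $\ell[A]$ computes exactly the frontier. For every completed set $A$, every reachable cascade state has total $M(A)$ by~\eqref{eq:residual-sum}, so it has the form $X_A(x)$ with $0\le x\le M(A)/2$. Within this family, majorization is the total order ``smaller lower component''. Hence $\operatorname{ND}(\mathcal R[A])$ is either empty or the single state $X_A(\ell[A])$, where $\ell[A]$ is the minimum lower component over $\mathcal R[A]$. This minimum exists because $\mathcal R[A]$ is finite.

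The recurrence~\eqref{eq:frontier-recurrence} with singleton frontiers becomes $\mathcal C[A]=\{F_j(X_{A\setminus\{j\}}(\ell[A\setminus\{j\}])) : j\in A\}$ with failures discarded. Taking $\operatorname{ND}$ of a set of two-component states with equal sum amounts to taking the minimum lower component. This is exactly the displayed formula for $\ell[A]$, with $\bot$ handling empty frontiers. By Theorem~\ref{thm:frontier-dp}, the computed values are exact, and $\ell[D]\ne\bot$ if and only if the instance is feasible.

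Next I count the cost. There are $2^n$ sets, and each has at most $|A|\le n$ predecessors. Each evaluation of $c_j$ is a cascade over two sorted cylinders: skip or equilibrate with the lower cylinder, then the higher one, with target stopping. This is $O(1)$ rational operations, and sortedness is preserved by Lemma~\ref{lem:cascade-supply-order}. The value $M(A)$ can be maintained as $M(A\setminus\{j\})-w_j$ in $O(1)$ per set, for instance using the lowest-indexed element of $A$. The total is $O(n2^n)$ operations. Storage is one rational $\ell[A]$ and one predecessor index (of $O(\log n)$ bits) per set, so $O(2^n)$ values. For bit complexity, each stored $\ell[A]$ is a pressure reached by a legal schedule of at most $2n$ transfers. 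By Lemma~\ref{lem:exact-simulation} it has $\operatorname{poly}(L)$ bits, and so do $M(A)$ and the intermediate quantities in one cascade. Each operation therefore costs $\operatorname{poly}(L)$, and $n2^n\operatorname{poly}(L)=2^n\operatorname{poly}(L)$ since $n\le L$.

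For reconstruction, I store for each $A$ the minimizing $j$. Following predecessors from $D$ gives a demand cylinder order whose cascade fills are successful, by the exactness argument above. Replaying the cascades by simulation outputs the explicit transfers, at most $2n$ of them, in polynomial time.

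The main obstacle is the exactness step. I must make sure that keeping only the minimizer never loses a state needed later. Theorem~\ref{thm:propagation} guarantees this: the minimizer majorizes every other reachable state for $A$, so any cascade continuation that succeeds from another state succeeds from it, with a majorizing result. The rest is routine accounting.
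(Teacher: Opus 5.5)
Your proposal is correct and follows the paper's proof: you specialize Theorem~\ref{thm:frontier-dp} using the total majorization order on two-cylinder states with fixed sum $M(A)$ (so $W=1$), count $\sum_A|A|=n2^{n-1}$ candidate evaluations of $O(1)$ operations each, and reconstruct a schedule from the stored last demands. The only cosmetic difference is that you derive the bit bounds directly from Lemma~\ref{lem:exact-simulation}, whereas the paper cites Theorem~\ref{thm:frontier-complexity}, which itself rests on that lemma.
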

\begin{proof}
Theorem~\ref{thm:frontier-dp} and the total order above prove the recurrence and feasibility criterion $\ell[D]\ne\bot$.
Precompute $M(A)=M(A\setminus\{j\})-w_j$ for any $j\in A$, starting from $M(\varnothing)$, in $O(2^n)$ operations.
Each candidate takes $O(1)$ operations; hence $\sum_A|A|=n2^{n-1}$ yields the time bound.
Store $M(A)$, $\ell[A]$, and an attaining last demand per subset.
Following these predecessors and replaying the resulting cascade order reconstructs a schedule in $O(n)$ operations.
The bit bounds follow from Theorem~\ref{thm:frontier-complexity}.
\end{proof}
Corollary~\ref{cor:target-unit-eth} shows that the running-time exponent in Theorem~\ref{thm:two-supply-dp} is optimal up to a constant factor under ETH, even when all capacities are one.

%%%%%%%%%%%%%%%%%%%%%%%%%%%%%%%%%%%%%%%%%%%%%%%%%%%%%%%%%%%%%%%%%%%%
\subsection{An ETH Lower Bound with Three Supply Cylinders}
\label{sec:three-supply-lower-bound}

By Theorem~\ref{thm:fixed-order-standardization}, enumerating all demand cylinder orders and applying cascade filling decides THCP in $n!\operatorname{poly}(L)=2^{O(n\log n)}\operatorname{poly}(L)$ time.
We prove that ETH rules out a $2^{o(n\log n)}\operatorname{poly}(L)$-time algorithm even with three supply cylinders, by reducing sparse 3SAT to THCP via the following arithmetic problem.

\smallskip\noindent\textsc{Exact Permutation Sum} (EPS).
The input consists of row weights $U_1<\cdots<U_k$ and label values $V_1,\ldots,V_k$, all positive integers, and an integer target $T_{\rm src}$.
The question is whether some $\pi\in S_k$ satisfies $\sum_{i=1}^k U_iV_{\pi(i)}=T_{\rm src}$.
A permutation assigns one label to each row, using each label exactly once.
Assigning label $j$ to row $i$ contributes $U_iV_j$ to the sum.

First, Lemma~\ref{lem:eth-numerical-source} compresses a sparse 3SAT formula with $N$ variables into at most $2^{o(N)}$ EPS instances, each with $k=\Theta(N/\log N)$ rows.
The formula is satisfiable if and only if at least one of these instances is a yes-instance.
Then, Lemma~\ref{lem:eth-physical-realization} transforms each EPS instance into a THCP instance with three supply cylinders and $n=3k+1$ demand cylinders.
Because $n\log n=\Theta(N)$, a $2^{o(n\log n)}\operatorname{poly}(L)$-time algorithm for THCP would solve sparse 3SAT in $2^{o(N)}$ time, contradicting ETH.

\begin{lemma}
\label{lem:eth-numerical-source}
For each fixed $c_{\rm sat}>0$, a deterministic algorithm maps a 3CNF formula $F$ with $N$ variables and at most $c_{\rm sat}N$ clauses to at most $2^{o(N)}$ EPS instances in $2^{o(N)}\operatorname{poly}(|F|)$ time.
Each instance has $k=\Theta(N/\log N)$ rows, uniformly in $F$, and each of its integers has $O(N)$ bits.
The formula is satisfiable if and only if some instance is a yes-instance.
Every solution permutation decodes to a satisfying assignment in polynomial time.
\end{lemma}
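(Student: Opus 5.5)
We follow the standard route for $2^{o(n\log n)}$ lower bounds, in the spirit of the permutation problems of Lokshtanov, Marx and Saurabh, but we end at an arithmetic problem instead of a grid-tiling problem. The reduction has four steps.

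\textbf{Step 1: sparsify.} We start from the sparse formula $F$ with $N$ variables and at most $c_{\rm sat}N$ clauses. No sparsification lemma is needed, since the clause bound is given. Choose a block size $b=\lfloor \log_2 k\rfloor$, with $k=\Theta(N/\log N)$ chosen so that $kb\ge N$. Partition the variables into $k$ groups $X_1,\ldots,X_k$ of at most $b$ variables each. An assignment to group $X_i$ is then a number in $\{0,\ldots,2^b-1\}\subseteq\{0,\ldots,k-1\}$, i.e.\ a candidate ``label''.

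\textbf{Step 2: guess the label multiset.} A permutation uses each label exactly once, but group assignments may repeat. To handle this we guess, for each row $i$, its assignment $\alpha_i$, but only up to the multiset of assignments used. The number of multisets of size $k$ over $k$ symbols is at most $\binom{2k}{k}\le 4^k=2^{O(N/\log N)}=2^{o(N)}$. Each multiset becomes a list of $k$ labels $V_1,\ldots,V_k$, so the family of instances has the required size.

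\textbf{Step 3: encode clause consistency by digits.} This is the core of the construction. Label $j$ carries assignment $\alpha(j)$, and assigning label $j$ to row $i$ means $X_i:=\alpha(j)$. We need $\sum_i U_iV_{\pi(i)}$ to equal $T_{\rm src}$ exactly when every clause is satisfied. We use the standard ``one label per row'' checking trick in positional arithmetic with a large base $\beta=2^{\Theta(\log N)}$:
\begin{itemize}
\item Make $V_j$ a number whose base-$\beta$ digits encode $\alpha(j)$ in unary-like form (for instance the value $\alpha(j)$ and $\alpha(j)^2$ in separate blocks).
\item Choose $U_i$ so that the products isolate, in a dedicated digit field for each clause, a quantity determined by the chosen assignments of the rows the clause touches.
\end{itemize}

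The difficulty is that a product $U_iV_j$ is bilinear, so each term sees one row and one label. A clause couples up to three rows. Clauses can therefore be checked only through sums of per-row contributions, with slack variables absorbing the nonlinear part. I would handle this by encoding each clause's satisfaction as a linear inequality $\sum(\text{literal indicators})\ge1$ and converting it to an equality with slack. The slack would be supplied by extra rows carrying free labels drawn from a guessed slack multiset, which is again only $2^{o(N)}$ guesses. Indicators of literals must be linear functions of the label value in row $i$. That holds if label $j$ carries, in each clause field, the precomputed indicator of ``assignment $\alpha(j)$ satisfies literal $\ell$''. But it depends on the row, because the literal depends on which group the row is. To fix this, $U_i$ selects the field of group $i$: make $V_j$ the concatenation of indicator vectors for all groups, and $U_i$ a shift. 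Cross terms then land in distinct fields.

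I expect the main obstacle to be keeping all integers to $O(N)$ bits while preventing carries. Concatenating all groups costs $k\cdot(\text{clauses per group})$ fields. With sparsity this is $O(N)$ fields in total, but each $V_j$ would then have $O(N\log N)$ bits. To get $O(N)$ bits, the shift structure must reuse fields. The plan is to let $V_j$ carry only $O(\text{group-local})$ information and let $U_i$ distribute it, while requiring $U_1<\cdots<U_k$. I would first try assigning to each group a disjoint field window of width $O(b)$ digits, each of $O(\log N)$ bits, giving total bit length $O(kb\log N)$. If that overshoots, I would shrink the base.

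\textbf{Step 4: decoding and correctness.} A solution permutation is decoded by reading $\alpha(\pi(i))$ for each row. This gives a satisfying assignment in polynomial time, and conversely any satisfying assignment yields the multiset guess and a permutation that hits $T_{\rm src}$.
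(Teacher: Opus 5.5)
\textbf{There is a genuine gap in Step 3.} That step is the core of the lemma. With $U_i=\beta^{s_i}$ a single shift and $V_j$ a concatenation of per-group fields $F_1(\alpha(j)),\dots,F_k(\alpha(j))$, the product $U_iV_j$ shifts \emph{all} of $V_j$. Positional multiplication acts as a convolution of digit vectors, not a coordinatewise product. Only $F_i$ lands in the common output window; each $F_g$ with $g\ne i$ lands in the window at offset $g-i$. The content of that window, $\sum_iF_{i+\delta}(\alpha(\pi(i)))$ for $\delta\ne0$, depends on $\pi$.

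Saying that these cross terms land in distinct fields only avoids collisions with the clause counts. EPS asks for the exact equality $\sum_iU_iV_{\pi(i)}=T_{\rm src}$ with $T_{\rm src}$ fixed before $\pi$ is chosen. The garbage differs between permutations that encode satisfying assignments, so no per-instance target works, and guessing the garbage is far too costly.

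Your slack rows have a separate hole. Slack labels and assignment labels share one pool, and nothing forces a slack label onto a slack row or an assignment label onto a group row. Finally, your own estimate $O(kb\log N)$ is $O(N\log N)$ bits, not the stated $O(N)$, and shrinking the base reintroduces carries.

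\textbf{What is missing} is a way to make one bilinear product act like an inner product, and then to restore a fixed target. The paper uses the CRT. It takes pairwise coprime prime powers $q_h$, sets $\Pi=\prod_hq_h$ and $e_h=\Pi/q_h$, and makes every row value and pattern value congruent modulo $\Pi$ to an integer combination of the $e_h$. Since $e_he_{h'}\equiv0\pmod{q_{h''}}$ unless $h=h'=h''$, each product keeps only its $e_h^2$ term modulo $q_h$. Cross terms therefore \emph{vanish} modulo $\Pi$ instead of merely being displaced, and consistency becomes $S\equiv0\pmod\Pi$.

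Three further devices complete the construction:
\begin{itemize}
\item Auxiliary rows $\Pi k_0^\nu$ with digit labels fill the permutation-dependent gap $(T_{\rm raw}-S)/\Pi$ up to the common target $T_{\rm raw}=\Pi(k_0^{s_0}-1)$. These digits are guessed in the inventory alongside the pattern copies.
\item Tags $Ha_i+f_i$ and $Hb_j+g_j$, with target residue $H-1$, force each label onto a row of its own type.
\item Inequalities are avoided entirely. Clause blocks are rows whose patterns range over $(\{0,1\}^3\setminus\{000\})^s$, so satisfaction is built into the pattern set and only bit equalities between variable and literal patterns need testing.
\end{itemize}
The $O(N)$ bit bound comes from grouping occurrences by $(t,u,\pm)$ with binary weights and choosing $q_h>2^{d_h}-1$. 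This gives $\log\Pi\le\sum_hd_h+O(g\log(g+2))=O(N)$.
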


\begin{proof}
For sufficiently large $N$, partition variables into blocks of at most $r=\lfloor\tfrac12\log_2N\rfloor$ variables and clauses into blocks of at most $s=\lfloor r/3\rfloor$ clauses.
A variable pattern lists its block's variable values, and a clause pattern lists literal values satisfying every clause in its block.
There are $\kmain=\Theta(N/\log N)$ blocks.
A choice of one pattern per block is consistent when each variable bit equals its corresponding literal bit, complemented for a negative occurrence.
Appendix~\ref{app:eth-palettes} proves that a consistent choice exists if and only if $F$ is satisfiable.
Group literal occurrences by $h=(t,u,\pm)$, where $t$ is the variable's position in its variable block, $u$ is the literal's position in its clause block, and the sign records polarity; this fixes the bit positions read from each pattern.
In a group of $d_h$ occurrences, weights $1,2,4,\ldots$ encode the two compared bit sequences as binary integers.
The integers differ by at most $2^{d_h}-1$ and are equal exactly when all comparisons agree.

First, combine the tests into one product-sum congruence.
For example, let $h=(1,u,+)$ consist of two occurrences of one variable in distinct clause blocks $A,B$.
Let $x_1$ be the variable bit and $y_A,y_B$ the corresponding bits in $A,B$.
Weights $1,2$ yield 
\[
(x_1-y_A)+2(x_1-y_B)=3x_1-y_A-2y_B.
\]
The block coefficients $3,-1,-2$ multiply the selected bits.

For general groups, choose pairwise coprime prime powers $q_h>2^{d_h}-1$; define $\Pmod=\prod_hq_h$, $e_h=\Pmod/q_h$.
The example contributes $3e_h,-e_h,-2e_h$ to the variable block and clause blocks $A,B$.
Add each occurrence's binary weight times $e_h$ to its variable block and subtract it from its clause block, summing over all groups and all variables or literals in each block.
Reduce totals to least nonnegative residues modulo $\Pmod$, then add $i\Pmod$ for block $i$ to obtain positive, distinct $a_i$ without changing residues.
Independently of placement, sum $e_h$ once per nonempty group reading a $1$ from pattern $p$, complementing negative clause bits; its least nonnegative residue modulo $\Pmod$ is $b_p$.
Let $\tau(i)$ be the matching-type pattern selected by block $i$, and let $S=\sum_i a_i b_{\tau(i)}$.
Modulo $q_h$, each factor retains only its $e_h$-term, yielding $e_h^2$ times the group's binary difference.
The example yields $e_h^2(3x_1-y_A-2y_B)$.
Because $e_h^2$ is invertible modulo $q_h$ and the difference has magnitude below $q_h$, the residue vanishes exactly when that group's comparisons agree.
Pairwise coprimality yields $S\equiv0\pmod{\Pmod}$ exactly when $\tau$ is consistent.
Appendix~\ref{app:eth-occurrence-crt} proves this equivalence and $\log\Pmod=O(N)$.

Next, fix the EPS target before choosing a permutation, although consistent main sums $S$ vary.
From the input, fix a common $\Pmod$-multiple $T_{\rm raw}$ above all main sums and fix auxiliary row weights as $\Pmod$ times powers of $\kmain$.
Consistency makes $(T_{\rm raw}-S)/\Pmod$ an integer; its base-$\kmain$ digits specify auxiliary labels filling the gap.
Conversely, auxiliary products vanish modulo $\Pmod$; every matching-type exact solution therefore has a consistent main choice.
Enumerate copy counts of main patterns and auxiliary digits, leaving placement to the permutation.
Appendix~\ref{app:eth-exact-lifting} constructs $O(\kmain)$ auxiliary rows and proves both directions.
With $O(\kmain)$ permitted symbols and $O(\kmain)$ label copies, there are at most $2^{O(\kmain)}=2^{o(N)}$ inventories, each fixing the label multiset $(b_j)$ before its permutation is chosen.

Finally, require matching row and label types.
Let row and label tags $f_i,g_j$ be $1,2,3$ for variable, clause, and auxiliary types, respectively.
Let $H=1+\sum_i f_i^2$, and replace raw values by $Ha_i+f_i$ and $Hb_j+g_j$.
Define the final target $T_{\rm src}$ from the inventory, independently of the permutation, with remainder $H-1$ modulo $H$.
Each product satisfies $(Ha_i+f_i)(Hb_j+g_j)\equiv f_i g_j\pmod H$.
Row and label type counts agree.
Thus $0\le\sum_i f_i g_{\sigma(i)}\le H-1$ for every permutation $\sigma$, with the maximum attained exactly when all paired types match.
The final equation forces matching types and is equivalent to $\sum_i a_i b_{\sigma(i)}=T_{\rm raw}$ on those placements.
Appendix~\ref{app:eth-type-enforcement} proves these claims and verifies the EPS input conditions and polynomial-time decoding.
Together with Appendix~\ref{app:eth-exact-lifting}, this yields $k=\Theta(N/\log N)$ and $O(N)$ bits per integer.
A satisfying assignment determines an output EPS solution; every output solution decodes to a satisfying assignment of $F$.
\end{proof}

We reduce EPS to THCP using three unit-capacity supplies and $3k+1$ demands.
Let $B_j,A_i,T_i,G$ denote label, slot, clock, and probe demand cylinders, respectively.
The nonprobe demands are called \emph{source demands}.
A \emph{word} lists demand cylinders in their cascade filling order.
For $\pi\in S_k$, define the \emph{canonical word} $w_\pi=B_{\pi(1)}A_1T_1\cdots B_{\pi(k)}A_kT_kG$.
The consecutive triple $B_{\pi(i)}A_iT_i$ forms \emph{block $i$} of $w_\pi$.
Block $i$ assigns label $\pi(i)$ to row $i$.

Let $\varepsilon>0$ be a small encoding parameter.
The \emph{baseline} sets $\varepsilon=0$ while retaining all capacities; its $k$ labels then have identical parameters, with initial pressure $1$ and target pressure $3$, and their order does not affect the supply trajectory.
Appendix~\ref{app:eth-encoding-strategy} specifies the capacities, pressure targets, and encoding parameters.
Table~\ref{tab:eth-demand-types} summarizes the demands in normalized pressures.

\begin{table}[!htb]
\centering\small
\caption{Demand counts, initial pressures, and roles in canonical words.}
\label{tab:eth-demand-types}
\setlength{\tabcolsep}{4.5pt}
\renewcommand{\arraystretch}{1.08}
\begin{tabularx}{\linewidth}{@{}lcc>{\raggedright\arraybackslash}X@{}}
\toprule
Demand & Count & Initial pressure & Role\\
\midrule
Label $B_j$ & $k$ & $1-\varepsilon V_j$ & Encodes $V_j$; label order selects $\pi$.\\
Slot $A_i$ & $k$ & $1$ & Represents row $i$.\\
Clock $T_i$ & $k$ & $0$ & Separates blocks; $T_k$ tests $\sum_i U_iV_{\pi(i)}\le T_{\rm src}$.\\
Probe $G$ & $1$ & $1$ & Last demand; tests $\sum_i U_iV_{\pi(i)}\ge T_{\rm src}$.\\
\bottomrule
\end{tabularx}
\end{table}

\begin{lemma}[Canonical words succeed exactly for EPS solutions]
\label{lem:eth-canonical-computation}
For every $\pi\in S_k$, the canonical word $w_\pi$ succeeds exactly when $\sum_i U_iV_{\pi(i)}=T_{\rm src}$.
\end{lemma}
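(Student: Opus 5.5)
We will simulate the canonical word $w_\pi$ block by block under cascade filling, using the parameters fixed in Appendix~\ref{app:eth-encoding-strategy}. The plan is to show that each block maps the supply state through a fixed affine map of the baseline plus a first-order perturbation in $\varepsilon$. In the perturbation, the label contributes $\varepsilon V_{\pi(i)}$ times a quantity determined by the slot $A_i$, so the accumulated perturbation is $\varepsilon\sum_i U_iV_{\pi(i)}$. The final tests at $T_k$ and $G$ then read off the two inequalities.

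First we analyze the baseline ($\varepsilon=0$). Every label starts at $1$ with target $3$, so the supply trajectory is independent of $\pi$. We verify that at the start of each block the sorted supply state has a fixed pattern, and that each fill uses a fixed set of cylinders in a fixed cascade pattern: which cylinders are skipped, which equilibrate, and which one target-stops. We also verify that each of these events holds with strict inequalities, i.e.\ with a positive margin. Because the cascade operator is piecewise affine (as in the prefix-sum recurrence in the proof of Theorem~\ref{thm:propagation}), strict margins imply that for $\varepsilon$ small enough (polynomially small in the input) the same combinatorial pattern persists. Each block is then an exactly affine map of the incoming state and of the label parameter $V_{\pi(i)}$.

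Next we track the perturbation. Label $B_{\pi(i)}$ needs an extra $r\varepsilon V_{\pi(i)}$ of pressure decrease, and the cascade pattern routes this deficit into a specific supply cylinder. The slot $A_i$ then transfers it, through equilibration with coefficient $\beta$ or $\alpha$, with a multiplier chosen so that the net effect on a designated ``accumulator'' quantity is $\varepsilon U_i V_{\pi(i)}$. The clock $T_i$ resets the other cylinders to their baseline pattern, so blocks do not interact beyond additive accumulation. By linearity of the affine maps in the deviations, we obtain by induction on $i$ that after block $i$ the state equals the baseline state plus $\varepsilon\sum_{i'\le i}U_{i'}V_{\pi(i')}$ times a fixed direction vector.

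Finally, the last clock $T_k$ is a target-stopping test whose slack in the baseline is tuned to equal $\varepsilon T_{\rm src}$ along the accumulator direction, so it completes iff $\sum_iU_iV_{\pi(i)}\le T_{\rm src}$. The probe $G$ is tuned dually, using the conserved total of Equation~\eqref{eq:residual-sum}: a smaller accumulated sum leaves less hydrogen in the high cylinder, so $G$ completes iff the sum is at least $T_{\rm src}$. Since all integers are bounded, choosing $\varepsilon$ below the reciprocal of the maximal possible $\sum_iU_iV_j$ keeps the perturbations inside the margins.

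The main obstacle is the middle step: showing that the slot contributes exactly a multiplicative factor $U_i$, bilinear in the label perturbation, rather than a sum of separate $U_i$ and $V$ terms. We would first try to realize the product through the equilibration coefficient. The slot capacity is chosen as a function of $U_i$, so that when $A_i$ equilibrates against the cylinder carrying the deficit $\varepsilon V$, the resulting deviation is scaled by a capacity-dependent factor. The clock $T_i$ then transfers exactly this scaled deviation to the accumulator. Confirming that the higher-order terms vanish, which requires the maps to be exactly affine rather than only approximately so, is where the careful bookkeeping lies.
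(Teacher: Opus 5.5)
\textbf{There is a genuine gap at exactly the step you flag as the main obstacle.} Your overall shape matches the paper: an exact baseline-plus-perturbation simulation, followed by $T_k$ and $G$ imposing opposite inequalities. But the mechanism you sketch for the bilinear term is not what the construction of Appendix~\ref{app:eth-encoding-strategy} does.

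There is no reset and no additive accumulator. Each clock starts at pressure $0$, so its contact with supply~1 only rescales the carried deviation $\delta X$ by $t_i$. Label, slot and clock together multiply $\delta X$ by $(1-d^2)a_it_i=\lambda_i=(1+dU_i)/(1+dU_{i+1})$. The product $U_iV_{\pi(i)}$ comes from this repeated rescaling. A label injected in block $i$ reaches the end multiplied by the telescoping product $\prod_{h\ge i}\lambda_h=(1+dU_i)/(1+dU_{k+1})$, which is affine in $U_i$. This gives $p_1-p_1^0=-\gamma(S+d\sum_iU_iV_{\pi(i)})$, whose $\pi$-independent part $\gamma S$ is absorbed into the reference state $C$.

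A second design identity, $a_i(1-d^2)(1+2\eta_i)=1$, keeps $3\delta p_1+2\delta p_2=0$ through every block. This is what yields $p_2-C_2=-\tfrac32x_\pi$ and, together with the gas budget defining $C$, the surplus $x_\pi/2$ available to supply~3 and the final clock.

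Your invariant (``baseline plus $\varepsilon\sum_{i'\le i}U_{i'}V_{\pi(i')}$ times a fixed vector'') is false here. Conservation makes the total supply deviation after each completed block equal to the label gas saving $\tfrac{\varepsilon r_0}{2}\sum_{h\le i}V_{\pi(h)}$, which is not proportional to the $U$-weighted sum. The actual deviation has $\delta X_i=-\frac{\varepsilon r_0}{1+dU_{i+1}}\sum_{h\le i}(1+dU_h)V_{\pi(h)}$ and $\delta p_2=-\tfrac32\delta X_i$. Moreover, since the lemma concerns fixed parameters, you cannot ``choose'' a slot capacity to realize your mechanism; you must verify the behaviour of the given ones.

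\textbf{Your ``strict margins, then take $\varepsilon$ small'' step also fails.} The baseline is built with zero slack:
\begin{itemize}
\item each slot target equals its first full arrival;
\item each nonterminal clock target equals its third full arrival;
\item $T_k$ and $G$ are calibrated exactly at $C$.
\end{itemize}
Under encoding the slot changes route: it equilibrates below target at supply~1 and finishes at supply~2. Each nonterminal clock completes only because its arrival gap $\frac{t_i\varepsilon r_0d}{2(1+dU_{i+1})}\sum_{h\le i}(U_{i+1}-U_h)V_{\pi(h)}$ is positive. That positivity uses the strict increase of the $U_i$. The gap is of the same order $\varepsilon r_0d$ as the EPS signal $x_\pi$, so no smallness of $\varepsilon$ buys a margin. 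These signs must be computed exactly along the encoded route.

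Your probe heuristic also points the wrong way. If $\sum_iU_iV_{\pi(i)}<T_{\rm src}$, then $x_\pi>0$ and supply~3 holds more gas, namely $C_3+x_\pi/2$. $G$ fails because supply~2 is short by $\tfrac32x_\pi$. The arrival gap is $-a_G^2(\delta Z+b_G\delta X)=-x_\pi/250$, which is negative since $b_G>\tfrac12$.
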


\begin{proof}For canonical words, label order encodes the EPS sum, and the final clock and probe impose opposite inequalities.
Appendix~\ref{app:eth-canonical-feasibility} proves that every canonical word proceeds until $T_k$ equilibrates with supply 2, whether or not the EPS equality holds.
Fix $\pi$ and, immediately after this equilibration, let $p_1,p_2$ be the actual pressures of supplies 1 and 2, and $p_1^0$ the pressure of supply 1 in the baseline.
The subsequent contact between $T_k$ and supply 3 changes neither $p_1$ nor $p_2$.
Let $d>0$ be the small scale chosen for the fixed EPS input.
Label $B_{\pi(i)}$ starts $\varepsilon V_{\pi(i)}$ below its baseline pressure, inducing a pressure change in supply 1.
Each block scales this change; the capacities make the combined multiplier from block $i$ onward proportional to $1+dU_i$.
Let $S=\sum_jV_j$.
For a common coefficient $\gamma>0$ independent of $\pi$, summing the label contributions yields
\[
 p_1-p_1^0
 =-\gamma\sum_i(1+dU_i)V_{\pi(i)}
 =-\gamma\left(S+d\sum_iU_iV_{\pi(i)}\right).
\]
Because $S$ is independent of the order, only the EPS sum varies with $\pi$.
Appendix~\ref{app:eth-baseline} defines the reference state $C=(C_1,C_2,C_3)$ from the EPS input, without assuming reachability.
If the EPS equality holds, $C$ is the supply state before the probe, and the final clock equilibrates with supply 3 at its target.
Replacing the EPS sum in the pressure formula by $T_{\rm src}$ yields $C_1=p_1^0-\gamma(S+dT_{\rm src})$.
Subtracting this reference pressure yields
\[
 x_\pi:=p_1-C_1
 =\alpha\left(T_{\rm src}-\sum_iU_iV_{\pi(i)}\right),
 \qquad \alpha=\gamma d>0.
\]
Thus the EPS equality holds exactly when $x_\pi=0$.
By the construction in Appendix~\ref{app:eth-encoding-strategy}, the checkpoint state satisfies
\[
 3p_1+2p_2=3C_1+2C_2,
 \qquad p_2-C_2=-\frac32x_\pi.
\]
The second relation follows by substituting $p_1-C_1=x_\pi$.

The final clock \(T_k\) has target pressure \(C_3\).
Appendix~\ref{app:eth-baseline} calibrates $C$ to match the actual total gas of the supplies and unfinished clock, with supply 3 and the clock both at $C_3$ in the reference.
Because supplies 1 and 2 have unit capacity, their gas deviation from the reference is $x_\pi-3x_\pi/2=-x_\pi/2$.
The gas in supply 3 and the unfinished clock therefore differs by $x_\pi/2$ from the amount needed to bring both to $C_3$.
Thus $T_k$ completes exactly when $x_\pi\ge0$.
On completion, transfer stops at $C_3$, leaving the excess gas $x_\pi/2$ in unit-capacity supply 3.
The resulting supply state is $(C_1+x_\pi,C_2-3x_\pi/2,C_3+x_\pi/2)$.

If $T_k$ completes, we execute the probe $G$.
Its first two contacts equilibrate below target by Appendix~\ref{app:eth-common-routes}, leaving completion to the third contact.
For the capacity-$9$ probe, the resulting equilibrium arrival pressure minus its target is $-x_\pi/250$.
Thus $G$ completes exactly when $x_\pi\le0$.
Together with the clock condition $x_\pi\ge0$, this makes $w_\pi$ successful exactly when $x_\pi=0$, equivalently when $\sum_iU_iV_{\pi(i)}=T_{\rm src}$.
\end{proof}

\begin{lemma}[Successful words are canonical]
\label{lem:eth-encoded-normal-form}
Every successful full word consists of the blocks $B_{\pi(i)}A_iT_i$ in index order, followed by $G$, for some $\pi\in S_k$; only the label order can vary.
\end{lemma}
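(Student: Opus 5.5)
We work throughout with cascade words. By Theorem~\ref{thm:fixed-order-standardization}, every feasible schedule can be assumed to fill each demand cylinder by cascade, so a successful full word is a demand cylinder order that succeeds under cascade filling. We then fix the structure of such a word position by position, using pressure and gas-budget invariants of the encoding from Appendix~\ref{app:eth-encoding-strategy}. The intended backbone is a perturbation argument. Because \(\varepsilon\) is small, every word behaves, up to \(O(\varepsilon)\) errors, like the same word in the baseline, where all labels are identical. It therefore suffices to show that, in the baseline with strict slack margins, the only successful words have the pattern \((BAT)^kG\). We must also check that the row index of each slot and clock is forced, and that the margins exceed the \(\varepsilon\)-perturbations.

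The first key step is to show that \(G\) must be last. The probe has initial pressure \(1\) and capacity \(9\). Completing it before all clocks are done would leave too little gas for the clocks, which start at \(0\). We would derive this from conservation \eqref{eq:residual-sum}: the total residual after the probe would fall below the amount needed to raise the remaining clocks to their targets. The second key step is to show that clocks separate blocks. A clock starts at pressure \(0\), so it drains the lowest supply to equilibrium. Right after \(T_i\), supply~1 (the lowest) sits at a level below \(1-\varepsilon V_j\). No label or slot can then draw from it, and the label and slot demands needed for the next stage become fillable only in the designed order.

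Next we pin down the content of each block. Between consecutive clocks, exactly one label and one slot are completed, with the label first. We would argue by a counting and threshold argument. Each slot \(A_i\) has a target tuned so that it is fillable only when the supply state lies in a narrow window. That window is reached after exactly \(i-1\) completed blocks plus one label, so the slot is usable only at that point. Similarly, each clock \(T_i\) has its target calibrated to the state after \(B\,A_i\), which forces its index. This would be an induction on \(i\). The invariant is that after \(i\) blocks, the supply state is within \(O(\varepsilon)\) of the baseline checkpoint state \(C^{(i)}\). Every deviation, whether an extra label, a missing slot, a wrong slot index, or a swap of \(A\) and \(B\), either makes some demand fail (target unreachable) or moves the state outside a window by a constant margin. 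Since all demands are eventually required, that displacement is fatal.

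The main obstacle is the exhaustive case analysis of deviations, in particular the claim that a wrong slot index leaves a detectable constant-size error rather than an \(O(\varepsilon)\) one. To handle it, we would first consider the baseline (\(\varepsilon=0\)), tabulate for each demand type the set of checkpoint states from which its cascade completes, and show that these windows are pairwise separated by margins independent of \(\varepsilon\). We would then choose \(\varepsilon\) below half the smallest margin, using Lemma~\ref{lem:tail-compensation}-type monotonicity to bound the perturbation effect on each cascade.
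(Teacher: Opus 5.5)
You share the paper's top-level plan (compare with the $\varepsilon=0$ baseline, then bound the encoding error), but the mechanism you propose for forcing the order does not exist in this construction, so the window induction cannot be carried out. Slot and clock capacities are $\Theta(d)$ and label capacities are $r_0=\Theta(d^2)$, so during any source prefix the supplies stay at $(2,4,6)+O(d)$.

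A clock therefore does not drain supply~1 below label pressure. It multiplies $p_1$ by $t_i=1-d\omega_i+O(d^2)$, and the next label is designed to equilibrate with supply~1. Labels and slots also complete from every position: when supply~1 falls short of their targets near $3$ and $2$, they finish at supply~2, which is near $4$. Consequently no label or slot misplacement makes any demand fail before the end, and there are no $\varepsilon$-independent margins. A wrong slot--clock order shifts the endpoint only by $2d^2\sum_i\omega_i|\Delta_i|$, and a misplaced label only by $d^3\sum_i\omega_i(|N_i-i|+\ell_i)$. Since $\varepsilon=d^2$ is fixed, an $O(\varepsilon)$ perturbation bound would swamp the $d^3$ term. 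You need the sharper $O(\varepsilon r_0)=O(d^4)$ bound, which the paper gets from nonexpansiveness of the reduced $(X,Z)$ maps.

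Your argument that $G$ is last also fails as stated. The remaining clocks need only $O(d)$ gas, so the total-gas budget is not the obstruction. The obstruction is pressure: after $G$, all supplies tend to values below $1.9$, under every remaining source target.

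The missing idea is a one-sided potential. Your claim that any displacement is fatal is false. $T_k$ forces only $\delta Z\le 0$ and $G$ forces only $\delta Z+b_G\delta X\le 0$, which is a cone containing many states away from $C$. The paper uses $E=3p_1+2p_2$. Because $b_G>1/2$, success implies $E(p)\le E(C)$. Conversely, every noncanonical source order with clocks in index order has $E(p)>E(C)$, by the positive $d^2$ or $d^3$ certificates above. These are obtained by folding the remaining clock maps into the potential and summing the jumps of the low demands.

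Clock index order is forced separately, by the supply-3 budget $W(J)=6-2d\sum_{j\in J}\omega_j+O(d^2)$ compared with $q_{T,i}$. This works because low demands never touch supply~3; it is not a calibration to the state after $B A_i$, as you suggest. Without this potential and the sign property of the certificates, your case analysis has no way to turn a small but correctly oriented displacement into a contradiction.
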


\begin{proof}
Fix a successful full word; Appendix~\ref{app:eth-physical-order} shows that $G$ is last and the clocks occur in index order.
Let $p=(p_1,p_2,p_3)$ be the supply state immediately before $G$, and use the reference state $C$ from Lemma~\ref{lem:eth-canonical-computation}.
Let $E(p)=3p_1+2p_2$, and let $E_0$ be the algebraic baseline endpoint potential for the same source order, defined even if baseline filling fails; its canonical value is $E(C)$.

Success requires $E(p)\le E(C)$, whereas a noncanonical source order will yield $E(p)=E_0+O(d^4)>E(C)$ at the chosen scale.
Because $C$ is calibrated to the total supply gas after completing these same source demands, unit supply capacities yield $p_1+p_2+p_3=C_1+C_2+C_3$.
Completion of $T_k$ leaves supply 3 at pressure at least $C_3$.
Later labels and slots leave it unchanged; hence $p_3\ge C_3$.
The probe's first two contacts equilibrate below target.
These conditions imply $E(p)\le E(C)$ by Appendix~\ref{app:eth-shared-terminal}.

We first show that deleting all labels leaves $A_1T_1\cdots A_kT_k$.
Each slot $A_j$ and clock $T_j$ has a positive construction weight $\omega_j$.
In the canonical order, both $A_j$ and $T_j$ precede $A_i$ for each $j<i$, contributing total weight $2\sum_{j<i}\omega_j$.
Let $\Delta_i$ be the actual preceding slot--clock weight minus this canonical value.
This mismatch affects the supply-1 pressure when the slot is processed.
After summing over slots, the corrections caused by switching to supply 2 and the remaining terms combine through $2(\Delta_i)_+-\Delta_i=|\Delta_i|$.
For the construction scale $d>0$, Appendix~\ref{app:eth-endpoint-certificates} proves 
\[
E_0-E(C)=2d^2\sum_i\omega_i|\Delta_i|+O(d^3).
\]
Because all weights are positive, the coefficient of $d^2$ vanishes exactly when every slot has its canonical preceding weight.
The canonical preceding totals increase strictly with the slot index, forcing slots into index order.
Before $A_i$, the preceding slots then contribute $\sum_{j<i}\omega_j$, leaving the same required weight for clocks.
Because clocks are ordered and their weights are positive, exactly $T_1,\ldots,T_{i-1}$ precede $A_i$.
Thus the coefficient vanishes exactly for $A_1T_1\cdots A_kT_k$ and is positive otherwise.
For this slot--clock order, we next show that one label must immediately precede each pair $A_iT_i$.
Let $N_i$ count labels before $A_i$ and $\ell_i$ those between $A_i$ and $T_i$.
Label-placement errors appear at the next order: 
\[
E_0-E(C)=d^3\sum_i\omega_i\bigl(|N_i-i|+\ell_i\bigr)+O(d^4).
\]
Because all weights are positive, the nonnegative coefficient vanishes exactly when $N_i=i$ and $\ell_i=0$ for every $i$.
These conditions place one label immediately before each pair, with none inside or after; every other placement has a positive $d^3$ coefficient.

Finally, we verify that label encoding preserves the positive leading term for every noncanonical source order with clocks in index order.
With $\varepsilon=d^2$, label pressure shifts and capacities are both $O(d^2)$, producing $O(d^4)$ supply-state errors that subsequent updates do not amplify.
Appendix~\ref{app:eth-prefix-estimate} bounds the endpoint error uniformly over source orders for the fixed EPS input, leaving the positive $d^2$ or $d^3$ leading term unchanged.
Appendix~\ref{app:eth-stability} selects one polynomial-bit rational $d$ preserving these positive signs.
A noncanonical successful word would therefore satisfy $E(p)>E(C)$, contradicting the terminal bound.
\end{proof}

\begin{lemma}
\label{lem:eth-physical-realization}
Every EPS instance reduces in polynomial time to an equivalent THCP instance with three unit-capacity supply cylinders, $3k+1$ demand cylinders, positive integer pressures, and positive rational demand cylinder capacities.
Its bit length is polynomial.
\end{lemma}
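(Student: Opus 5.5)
The proof assembles the three preceding lemmas with the construction in Appendix~\ref{app:eth-encoding-strategy}. There are three things to show: that the construction is well defined, that it is equivalent to the EPS instance, and that it has polynomial size.

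First, for a given EPS instance $(U_1,\ldots,U_k;V_1,\ldots,V_k;T_{\rm src})$ I would fix the parameters. I would choose the scale $d$ as the polynomial-bit rational from Appendix~\ref{app:eth-stability} and set $\varepsilon=d^2$. The reference state $C$ comes from Appendix~\ref{app:eth-baseline}. The three unit-capacity supplies and the $3k+1$ demands $B_j,A_i,T_i,G$ then receive the capacities, initial pressures and targets of Table~\ref{tab:eth-demand-types} and the appendix.

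Second, I would argue equivalence with the cascade normal form. By Theorem~\ref{thm:fixed-order-standardization}, the THCP instance is feasible if and only if some demand order (word) succeeds under cascade filling.
\begin{itemize}
\item If $\pi$ solves EPS, Lemma~\ref{lem:eth-canonical-computation} says that $w_\pi$ succeeds.
\item Conversely, suppose some word succeeds. Lemma~\ref{lem:eth-encoded-normal-form} says it equals $w_\pi$ for some $\pi$, and Lemma~\ref{lem:eth-canonical-computation} then gives $\sum_iU_iV_{\pi(i)}=T_{\rm src}$.
\end{itemize}
This direction also decodes a solution by reading the label order.

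Third, I would meet the format requirements. The statement asks for positive integer pressures, while the construction uses normalized rational pressures such as $1-\varepsilon V_j$ and $C_3$. I would multiply every pressure by a common positive integer $\Lambda$, the least common denominator of all initial and target pressures. Update~\eqref{eq:charging-update} is homogeneous of degree one in pressures and independent of scaling, so every legal transfer and every target stop corresponds exactly under this scaling, and feasibility is unchanged. Positivity needs some care: the clocks start at pressure $0$. I would therefore shift all pressures upward by a constant $\kappa>0$ before scaling. Update~\eqref{eq:charging-update} depends only on pressure differences, so it is translation invariant, and the shifted-then-scaled pressures are positive integers. The label pressures $1-\varepsilon V_j$ are already positive because $\varepsilon$ is tiny, but the uniform shift covers every case anyway.

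The main obstacle is the size bound. The number $d$ has polynomially many bits in the EPS input length by Appendix~\ref{app:eth-stability}. Each capacity and pressure is a fixed rational expression in $d$, the $U_i$, the $V_j$ and $T_{\rm src}$ of polynomial degree and polynomially many terms; $C$ in particular is an explicit closed form of this kind. Hence each has polynomial bit length, and so do $\Lambda$ and the scaled integers. Since there are $O(k)$ such numbers, the total bit length $L$ and the running time of the reduction are polynomial. The only step I would verify carefully is that $C$ is computed from the explicit formula in Appendix~\ref{app:eth-baseline}, not by simulating cascades. Even simulation would be polynomial by Lemma~\ref{lem:exact-simulation}, since it means simulating one canonical word of $O(k)$ fills.
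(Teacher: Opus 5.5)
Your proposal is correct and follows the paper's proof. The equivalence uses Lemma~\ref{lem:eth-canonical-computation} in one direction, and Theorem~\ref{thm:fixed-order-standardization} followed by Lemmas~\ref{lem:eth-encoded-normal-form} and~\ref{lem:eth-canonical-computation} in the other; the output conditions come from a common pressure translation (the paper adds $1$) and a common-denominator scaling, with polynomial bit length inherited from the choice of $d$ in Appendix~\ref{app:eth-stability}. Your conclusion about $C$ is right, but $C$ is explicit only relative to the $k$-step recursion~\eqref{eq:eth-clock-recursion}, whose polynomial bit length the paper justifies in its ``Polynomial-size local data'' paragraph.
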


\begin{proof}
An EPS solution yields a successful canonical word by Lemma~\ref{lem:eth-canonical-computation}.
Conversely, Theorem~\ref{thm:fixed-order-standardization} converts any feasible schedule into a successful cascade word.
Lemma~\ref{lem:eth-encoded-normal-form} makes that word canonical, and Lemma~\ref{lem:eth-canonical-computation} recovers the EPS equality.
Appendix~\ref{app:eth-output} verifies polynomial construction time and all output conditions, including bit length.
\end{proof}

\begin{theorem}
\label{thm:three-supply-eth}
Assume ETH. Then THCP with exactly three unit-capacity supply cylinders admits no deterministic $2^{o(n\log n)}\operatorname{poly}(L)$-time algorithm, even when all initial and target pressures are positive integers and demand cylinder capacities are arbitrary positive rationals.
\end{theorem}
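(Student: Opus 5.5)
The proof composes the two reductions already established: Lemma~\ref{lem:eth-numerical-source} from sparse 3SAT to EPS, and Lemma~\ref{lem:eth-physical-realization} from EPS to THCP. It then checks that the parameters turn a hypothetical fast algorithm into a subexponential 3SAT algorithm.

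\textbf{Sparsification.} First, invoke the sparsification lemma in its standard ETH form. Under ETH there are constants $c_{\rm sat}>0$ and $\delta>0$ such that 3SAT on formulas with $N$ variables and at most $c_{\rm sat}N$ clauses has no $2^{\delta N}$-time algorithm. In particular, it has no $2^{o(N)}$-time algorithm. This is the form of ETH already used, via \cite[Theorem~8]{kpw2019}, in Corollary~\ref{cor:target-unit-eth}.

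\textbf{Pipeline.} Suppose, for contradiction, that a deterministic algorithm $\mathcal A$ decides THCP with three unit-capacity supplies in time $2^{g(n)}\operatorname{poly}(L)$, where $g(n)=o(n\log n)$. Given a sparse formula $F$, I will proceed as follows.
\begin{enumerate}
\item Apply Lemma~\ref{lem:eth-numerical-source} to produce at most $2^{o(N)}$ EPS instances, in $2^{o(N)}\operatorname{poly}(|F|)$ time. Each instance has $k=\Theta(N/\log N)$ rows and integers of $O(N)$ bits, so its size is $\operatorname{poly}(N)$.
\item Map each EPS instance through Lemma~\ref{lem:eth-physical-realization} to an equivalent THCP instance. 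The resulting instance has three unit-capacity supplies, $n=3k+1$ demands, positive integer pressures and positive rational capacities, and $L=\operatorname{poly}(N)$.
\item Run $\mathcal A$ on each THCP instance and accept if any run accepts.
\end{enumerate}
Correctness follows from the equivalences in the two lemmas: $F$ is satisfiable if and only if some EPS instance is a yes-instance, which holds if and only if the corresponding THCP instance is feasible.

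\textbf{Running time.} Since $n=\Theta(N/\log N)$, we have $\log n=\Theta(\log N)$ and hence $n\log n=\Theta(N)$. Therefore $g(n)=o(n\log n)=o(N)$. Some care is needed because $g$ is only an abstract $o(\cdot)$ function. Write $g(n)\le \epsilon(n)\, n\log n$ with $\epsilon(n)\to 0$. Because $n\to\infty$ as $N\to\infty$, uniformly in $F$ by the uniform bound on $k$, we get $\epsilon(n)\, n\log n\le \epsilon'(N)\,N$ with $\epsilon'(N)\to 0$.

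The total time is then
\[
2^{o(N)}\operatorname{poly}(|F|)+2^{o(N)}\cdot\operatorname{poly}(N)\cdot 2^{o(N)}\operatorname{poly}(N)=2^{o(N)}\operatorname{poly}(|F|).
\]
For sparse formulas $|F|=O(N\log N)$, so this is $2^{o(N)}$ and contradicts ETH. The reduction is entirely deterministic, which is why the lower bound is stated for deterministic algorithms.

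\textbf{Main obstacle.} Given the lemmas, the only delicate point is the asymptotic bookkeeping. The step that actually needs checking is that $k=\Theta(N/\log N)$ holds with constants uniform in $F$. Without this, a lower bound of the form $n\log n=\Omega(N)$ would fail for some formulas. I will take this uniformity directly from the statement of Lemma~\ref{lem:eth-numerical-source}, and handle small $N$ by brute force. The instance conditions in the theorem (three unit-capacity supplies, positive integer pressures, rational capacities) are exactly those delivered by Lemma~\ref{lem:eth-physical-realization}.
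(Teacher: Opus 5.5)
Your proposal is correct and follows the paper's proof: it composes Lemma~\ref{lem:eth-numerical-source} with Lemma~\ref{lem:eth-physical-realization} and uses $n=3k+1$ with $k=\Theta(N/\log N)$ uniformly, so that $n\log n=\Theta(N)$ and $\operatorname{poly}(L)=2^{o(N)}$, which turns a fast algorithm into a $2^{o(N)}$ algorithm for sparse 3SAT. The only difference is packaging: the paper applies deterministic sparsification~\cite[Corollary~1]{ipz2001} inside the proof, with $\rho=\delta/2$, paying for $2^{\rho N}$ sparse formulas and concluding $\rho N+o(N)\le\delta N$, whereas you fold sparsification into the sparse-3SAT form of ETH with a fixed $c_{\rm sat}$; that form is what sparsification yields, rather than literally the $2^{o(q)}$ statement used in Corollary~\ref{cor:target-unit-eth}.
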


\begin{proof}
Under ETH, there is a constant $\delta>0$ for which 3SAT on $N$ variables cannot be decided in $O(2^{\delta N})\operatorname{poly}(|F|)$ time.
Let $\rho=\delta/2$.
Deterministic sparsification~\cite[Corollary~1]{ipz2001} maps the input to at most $2^{\rho N}$ formulas of sparsity $c_{\rm sat}(\rho)$.
Lemma~\ref{lem:eth-numerical-source} produces an OR of $2^{o(N)}$ EPS instances per sparse formula, each with $O(kN)$ input bits, in $2^{o(N)}\operatorname{poly}(|F|)$ time.
Apply Lemma~\ref{lem:eth-physical-realization} to every resulting EPS instance.
Each EPS instance has $k=\Theta(N/\log N)$ rows and produces $n=3k+1$ demand cylinders with input length $L=N^{O(1)}$.

Thus $n\log n=\Theta(N)$ and $\operatorname{poly}(L)=2^{o(N)}$, and the resulting OR is equivalent to the input formula.
A $2^{o(n\log n)}\operatorname{poly}(L)$ algorithm would decide 3SAT in $2^{\rho N+o(N)}\operatorname{poly}(|F|)$ time, including generation.
All reduction constants are fixed by the choice of $\rho$.
For sufficiently large $N$, $\rho N+o(N)\le\delta N$, contradicting the choice of $\delta$.
\end{proof}

%%%%%%%%%%%%%%%%%%%%%%%%%%%%%%%%%%%%%%%%%%%%%%%%%%%%%%%%%%%%%%%%%%%%

\section{Conclusion}
Our results establish the usefulness and limits of majorization for pressure-driven charging.
For equal-capacity supply cylinders, conservation fixes each completed set's residual pressure total, ruling out componentwise dominance between distinct states.
Theorems~\ref{thm:propagation}--\ref{thm:frontier-dp} justify restricting to cascades and discarding majorized states in subset dynamic programming.
With two supply cylinders, majorization totally orders each completed set's states, and Theorem~\ref{thm:two-supply-dp} establishes a \(2^n\operatorname{poly}(L)\)-time algorithm using one representative per set.

Despite preserved majorization, Theorem~\ref{thm:three-supply-eth} rules out deterministic \(2^{o(n\log n)}\operatorname{poly}(L)\)-time algorithms for three supply cylinders under ETH.
With demand cylinder order enumeration and Corollary~\ref{cor:target-unit-eth}, these bounds establish optimal deterministic exponents \(\Theta(n)\) and \(\Theta(n\log n)\) for two and three equal-capacity supply cylinders under ETH, allowing arbitrary positive rational demand cylinder capacities.
Thus, constant-dimensional residual pressure states and preserved majorization do not ensure deterministic single-exponential exact algorithms under ETH.
In particular, Theorems~\ref{thm:frontier-complexity} and~\ref{thm:three-supply-eth} imply that the frontier width \(W\) cannot be bounded by \(2^{O(n)}\) in the worst case for three equal-capacity supply cylinders.
An open question is whether unit demand cylinder capacities permit a single-exponential algorithm with three equal-capacity supply cylinders.
Resolving this question would clarify the role of varying demand cylinder capacities in the exponent gap.

\subparagraph*{AI use.}
OpenAI Codex assisted manuscript editing and proof development, including reductions.
The authors verified all arguments and retain full responsibility.

\clearpage
\bibliography{references}
\clearpage
\appendix

%%%%%%%%%%%%%%%%%%%%%%%%%%%%%%%%%%%%%%%%%%%%%%%%%%%%%%%%%%%%%%%%%%%%
\section{From Sparse 3SAT to Exact Permutation Sum}
\label{app:numerical-source}

\subsection{Consistency and Satisfiability}
\label{app:eth-palettes}
We prove that the local patterns in the main proof admit a consistent choice if and only if $F$ is satisfiable.
Output no instances if $F$ has an empty clause; otherwise, pad short clauses to length three by repeating literals.
Handle finitely many small input sizes by direct decision and fixed YES or NO instances.
For all remaining inputs, use the block sizes $r,s$ and partition in the main proof.
Let $\Ncl$ be the clause count, $k_v=\lceil N/r\rceil$ the variable block count, and $k_c=\lceil\Ncl/s\rceil$ the clause block count.
Define $\kmain=k_v+k_c$ and index the main rows by $i=1,\ldots,\kmain$; here $r\ge3$, $s\ge1$, and $\kmain\ge2$.
All unspecified logarithms have base two.

Define the variable and clause pattern sets by $\mathcal V=\{0,1\}^r$ and $\mathcal C=(\{0,1\}^3\setminus\{000\})^s$, respectively.
Patterns retain their variable or clause type even when their bit strings coincide.
Let $\tau$ select a pattern from $\mathcal V$ on each variable row and from $\mathcal C$ on each clause row.
Unused positions impose no consistency constraints; when extending an assignment to full patterns, fill unused variable positions with $0$ and unused clause triples with $111$.

If $\tau$ is consistent in the sense of the main proof, its variable patterns define a global assignment, and its clause patterns record the actual literal values.
Because every clause triple in $\mathcal C$ contains a $1$, this assignment satisfies $F$.
Conversely, every satisfying assignment determines variable patterns in $\mathcal V$ and clause patterns in $\mathcal C$; completing unused positions as above yields a consistent choice $\tau$.

\subsection{Consistency Test and Size Bounds}
\label{app:eth-occurrence-crt}
\label{app:eth-typed-congruence}
We first prove that the construction in the main proof tests consistency exactly.
For each group $h$ of $d_h$ occurrences, fix an ordering $\nu(e)=0,\ldots,d_h-1$ specifying their binary weights.
For a matching-type choice $\tau$, let $S=\sum_{i=1}^{\kmain}a_i b_{\tau(i)}$.
Let $x_e,y_e$ be its variable and literal bits at occurrence $e$, complementing $y_e$ for a negative literal.
Fix a group $h$.
Modulo $q_h$, all other cofactors and the row shifts $i\Pmod$ vanish.
Because $h$ fixes the bit position on each side, expanding the main proof's products yields
\begin{align}
 S&\equiv e_h^2\sum_{e\in h}2^{\nu(e)}(x_e-y_e)\pmod{q_h},
 \\
 S&\equiv0\pmod{\Pmod}
 \quad\Longleftrightarrow\quad\tau\text{ is consistent}.
 \label{eq:eth-modular-source}
\end{align}
To justify the second line, the weighted difference is the difference of two binary integers in $[0,2^{d_h}-1]$.
Its magnitude is below $q_h$; uniqueness of binary representation implies that it is zero exactly when all corresponding bits agree.
Because $\gcd(e_h,q_h)=1$, $e_h^2$ is invertible modulo $q_h$; the first congruence thus has zero remainder exactly when every comparison in $h$ agrees.
Pairwise coprimality then combines these conditions over all groups.
In particular, errors in different groups cannot cancel.
If there are no groups, define $\Pmod=1$, $a_i=i$, and $b_p=0$; consistency is automatic.

We then prove the size and construction-time bounds.
Let $g$ count the nonempty groups.
For $g>0$, take the first $g$ odd primes $p_h$ and choose $q_h$ as the least power of $p_h$ exceeding $2^{d_h}-1$.
We have $g\le6rs\le2r^2=O(\log^2N)$ and $\sum_hd_h=3\Ncl=O(N)$.
The bound $p_h=O((g+1)\log(g+2))$~\cite[Corollary to Theorem~3]{rosserschoenfeld1962} permits polynomial-time sieving.
Because $q_h\le p_h(2^{d_h}-1)$, we have $\log\Pmod\le\sum_hd_h+O(g\log(g+2))=O(N)$.
The main proof's residue choices satisfy $i\Pmod\le a_i<(i+1)\Pmod$ and $0\le b_p<\Pmod$; hence all main values have $O(N)$ bits.
These bounds permit polynomial-time prime-power selection and construction of all occurrence weights, row values, and pattern values.

\subsection{Exact Lifting and Inventory Enumeration}
\label{app:eth-exact-lifting}
\label{app:eth-source-budget}
Let $\kmain=k_v+k_c\ge2$ be the main row count.

The ranges in Appendix~\ref{app:eth-occurrence-crt} imply $0\le S=\sum_{i=1}^{\kmain}a_i b_{\tau(i)}<\kmain(\kmain+1)\Pmod^2$ for every matching-type main choice $\tau$, without assuming consistency.
Let $s_0$ be the least integer such that $\kmain^{s_0}>\kmain(\kmain+1)\Pmod$, found by repeated integer multiplication.
Define one auxiliary row per base-$\kmain$ digit position as in Table~\ref{tab:eth-auxiliary-digits}, with permitted labels $0,\ldots,\kmain-1$.
The common target $T_{\rm raw}=\Pmod(\kmain^{s_0}-1)$ is the maximum auxiliary sum, attained by selecting $\kmain-1$ at every position.
The auxiliary rows, permitted labels, and target are fixed before any main choice.
\begin{table}[!ht]
\centering\small
\caption{Auxiliary rows for base-$\kmain$ completion, with $0\le\nu<s_0$ and $0\le d_\nu<\kmain$.}
\label{tab:eth-auxiliary-digits}
\begin{tabular}{@{}lcccc@{}}
\toprule
Digit position $\nu$ & $0$ & $1$ & $\cdots$ & $s_0-1$\\
\midrule
Row value $a_{\kmain+\nu+1}$ & $\Pmod$ & $\Pmod\kmain$ & $\cdots$ & $\Pmod\kmain^{s_0-1}$\\
Selected label $d_\nu$ & $d_0$ & $d_1$ & $\cdots$ & $d_{s_0-1}$\\
Row contribution & $\Pmod d_0$ & $\Pmod\kmain d_1$ & $\cdots$ & $\Pmod\kmain^{s_0-1}d_{s_0-1}$\\
\bottomrule
\end{tabular}
\end{table}

\subparagraph*{Correctness of the lifting.}
Integrality implies $\kmain^{s_0}-1\ge\kmain(\kmain+1)\Pmod$; hence $T_{\rm raw}\ge\kmain(\kmain+1)\Pmod^2>S$ for every such choice.
For a consistent main choice, equivalence~\eqref{eq:eth-modular-source} implies that $S$ is a multiple of $\Pmod$.
Thus $(T_{\rm raw}-S)/\Pmod$ is an integer in $[0,\kmain^{s_0})$.
Its base-$\kmain$ expansion, padded with leading zeros, selects labels $d_\nu$ in Table~\ref{tab:eth-auxiliary-digits} with $T_{\rm raw}-S=\Pmod\sum_{\nu=0}^{s_0-1}\kmain^\nu d_\nu$.
Thus the auxiliary rows complete the main sum to the fixed target; only the selected digits depend on the main choice.

We prove that a consistent main choice exists if and only if some enumerated candidate label multiset admits a matching-type exact solution.
Each candidate, called an inventory, specifies the number of copies of each permitted symbol, with $k_v$ variable-pattern copies from $\mathcal V$, $k_c$ clause-pattern copies from $\mathcal C$, and $s_0$ auxiliary-digit copies from $0,\ldots,\kmain-1$.
Each inventory fixes $k=\kmain+s_0$ labels before a permutation chooses their positions.
The rows and $T_{\rm raw}$ are common to all inventories.
Let $b_j$ be the numerical value of label copy $j$: its pattern value from Appendix~\ref{app:eth-occurrence-crt} for a main label, or its digit value for an auxiliary label.
Retain the type and original symbol with each copy index, even when numerical values coincide.

A consistent main choice and its completion determine one of these inventories and a matching-type permutation $\sigma$ satisfying
\begin{equation}
 \sum_{i=1}^k a_i b_{\sigma(i)}=T_{\rm raw}.
 \label{eq:eth-typed-exact}
\end{equation}

Conversely, reducing a matching-type solution of~\eqref{eq:eth-typed-exact} modulo $\Pmod$ removes the auxiliary products and target, because each is a multiple of $\Pmod$.
This leaves $S\equiv0\pmod{\Pmod}$, and equivalence~\eqref{eq:eth-modular-source} establishes consistency of the main patterns.

\subparagraph*{Size and enumeration bounds.}
We bound the row count, inventory count, and integer bit length.
\begin{enumerate}[(i)]
\item Row count. Because $\kmain=\Theta(N/\log N)$ and Appendix~\ref{app:eth-occurrence-crt} proves $\log\Pmod=O(N)$, we have $s_0=O(1+\log_{\kmain}\Pmod)=O(\kmain)$ and $k=\Theta(N/\log N)$.
\item Number of inventories. The inequalities $3s\le r$, $N\ge2^{2r}$, and $r\le2^{r-1}$ imply $|\mathcal V|+|\mathcal C|\le2^{r+1}\le N/r\le k_v\le\kmain$.
Including the $\kmain$ auxiliary digits, there are at most $2\kmain$ typed label symbols.
Encode each multiset by one star per copy and one separator between successive permitted symbols.
The combined encoding has fixed length at most $k+|\mathcal V|+|\mathcal C|+\kmain\le s_0+3\kmain$, permitting at most $2^{s_0+3\kmain}=2^{O(\kmain)}=2^{o(N)}$ inventories.
Enumerate the three multisets recursively with polynomial delay.
\item Bit length and construction time. Minimality of $s_0$ implies $\kmain^{s_0}\le\kmain^2(\kmain+1)\Pmod$; hence all raw values and $T_{\rm raw}$ have $O(N)$ bits.
Appendix~\ref{app:eth-type-enforcement} preserves the row and bit bounds with polynomial-time construction per inventory, yielding $2^{o(N)}\operatorname{poly}(|F|)$ total time.
\end{enumerate}

\subsection{Positive Type Tags and Decoding}
\label{app:eth-type-enforcement}
\label{app:eth-strict-rows}
For the transformed rows $A_i=Ha_i+f_i$ and labels $B_j=Hb_j+g_j$ from Lemma~\ref{lem:eth-numerical-source}, define the final target for each inventory by
\[
 T_{\rm src}=H^2T_{\rm raw}
   +H\sum_i f_i a_i+H\sum_j g_j b_j+H-1.
\]
The label sum counts this inventory's copies; the target therefore depends only on copy counts and is fixed before any permutation is chosen.

The inventory construction makes the row and label tag multisets equal.
Thus $H-1-\sum_i f_i g_{\sigma(i)}=\tfrac12\sum_i(f_i-g_{\sigma(i)})^2\ge0$ for any permutation $\sigma$.
Because the tags are positive, the tag score lies in $[0,H-1]$, reaching its maximum exactly when every paired tag agrees.
If $\sum_i A_iB_{\sigma(i)}=T_{\rm src}$, reduction modulo $H$ yields $\sum_i f_i g_{\sigma(i)}\equiv H-1\pmod H$.
Within this range, the congruence forces the maximum score itself, hence matching types.

Matching types imply $\sum_i a_i g_{\sigma(i)}=\sum_i a_i f_i$ and $\sum_i f_i b_{\sigma(i)}=\sum_j g_j b_j$.
Expanding the products and subtracting the chosen target therefore yields
\[
 \sum_i A_iB_{\sigma(i)}-T_{\rm src}
 =H^2\left(\sum_i a_i b_{\sigma(i)}-T_{\rm raw}\right).
\]
Thus a permutation satisfies the final equation if and only if it preserves types and satisfies the raw equation.
Appendix~\ref{app:eth-exact-lifting} consequently proves consistency of its main patterns.

We finally verify the EPS input conditions and polynomial-time decoding.
All final entries are positive, including labels with zero raw value.
Main rows remain distinct because their raw values differ by at least one and $H>3$.
Auxiliary rows are distinct because $\kmain\ge2$; their tag $3$ separates them modulo $H$ from main rows with tags $1,2$, even if raw values coincide.
Sort the transformed rows once to obtain $U_1<\cdots<U_k$, set $V_j=B_j$, and retain the map to original row indices.
Because $H\le9k+1$, the raw bounds imply $O(N)$ bits for all final entries and $T_{\rm src}$, with polynomial-time construction per inventory.
For any EPS solution, undo sorting and read the retained main symbols, even when different symbols have the same numerical value.
Appendix~\ref{app:eth-palettes} proves that their variable bits form a satisfying assignment; decoding takes polynomial time.

\section{Construction and Canonical Computation}
\label{app:eth-parameters}

\subsection{Parameters and Demand Cylinder Construction}
\label{app:eth-encoding-strategy}
\label{app:eth-baseline}
\label{app:eth-low-clock}
\label{app:eth-high-targets}
\label{app:eth-probe-construction}
Fix an EPS input and let $0<d<1/10000$.
The baseline and encoded instances share capacities and use $\varepsilon=0$ and $\varepsilon=d^2$, respectively.

All pressures below are normalized, and the three unit-capacity supplies start at $(2,4,6)$ in both instances.
Let $X=p_1$, $Z=p_1+p_2$, and $E=X+2Z=3p_1+2p_2$.
We determine the demand capacities from the desired block response in~\eqref{eq:eth-packet-recurrence}: label perturbations accumulate in $X$ as the required row-weighted sum, while zero deviation of $E$ from the baseline is preserved.
We then specify slot and nonterminal clock targets through a formal baseline trajectory, and calibrate the final clock and probe at a reference state corresponding to the EPS equality.
Physical feasibility is not assumed in this design.
Appendix~\ref{app:eth-perturbation} proves that baseline canonical words succeed and that encoded canonical words proceed through the final clock's first two contacts, where the designed response applies.

\subparagraph*{Row multipliers.}
Let $U_{k+1}=U_k+1$ be a formal endpoint, which adds no EPS row or demand.
For $1\le i\le k$, define the desired row multipliers by
\[
 \lambda_i=\frac{1+dU_i}{1+dU_{i+1}},\qquad
 \prod_{h=i}^k\lambda_h=\frac{1+dU_i}{1+dU_{k+1}}.
\]
The product identity follows by cancellation and is affine in $U_i$ with a common denominator.
Thus accumulating label perturbations through the remaining blocks produces the required row-weighted sum.

\subparagraph*{Label capacity and pressures.}
The label parameters control both the perturbation at the first equilibration and the change in total gas requirement.
Let $r_0$ be the common label capacity, chosen to retain coefficient $1-d^2$ at equilibration with a unit supply:
\[
 r_0=\frac{d^2}{1-d^2},\qquad \frac1{1+r_0}=1-d^2.
\]
For label $B_j$, set the initial and target pressures to $1-\varepsilon V_j$ and $3-\tfrac32\varepsilon V_j$, respectively.
The target exceeds the initial pressure by $2-\varepsilon V_j/2>0$ for sufficiently small $d>0$.

The completed label fill requires
\begin{equation}
 r_0\left[\left(3-\frac32\varepsilon V_j\right)-(1-\varepsilon V_j)\right]
 =2r_0-\frac{\varepsilon r_0V_j}{2}.
 \label{eq:eth-label-requirement}
\end{equation}

\subparagraph*{Slot and clock capacity design.}
We choose the slot and clock coefficients to produce the row multiplier and cancel the label contribution to the $E$-deviation.
Slots start at pressure $1$, and clocks start at pressure $0$.
Let $a_i,t_i$ be the slot and clock coefficients to be chosen, with $b_i=1-a_i$ and $\eta_i=1-t_i$; their capacities are $b_i/a_i$ and $\eta_i/t_i$.
Along the intended route, the label and slot equilibrate with supply 1 and complete at supply 1 or 2; the clock equilibrates with supplies 1 and 2.
This is a calculation along the intended route, without assuming that the constructed instance follows it.
Let $\delta X,\delta Z$ be the deviations from the baseline immediately before block $i$, whose label is $B_j$.
Figure~\ref{fig:eth-block-response} shows the local deviation maps along the intended route.
The label and slot draw only from supplies 1 and 2, decreasing their total gas amount $Z$ by the demand requirements.
Thus~\eqref{eq:eth-label-requirement} adds $\varepsilon r_0V_j/2$ to the baseline $Z$-update, while the slot requirement is unchanged.

\par\smallskip
\noindent\begin{minipage}{\linewidth}
\centering
\begin{tikzpicture}[x=1cm,y=1cm,font=\small,
  stage/.style={draw,rounded corners=1pt,minimum height=1.95cm,inner sep=0pt},
  flow/.style={->,semithick}]
  \node[stage,minimum width=5.4cm] (label) at (0,0) {};
  \node at (0,.65) {Label $B_j$};
  \node at (0,-.21) {
    $\begin{aligned}
      \delta X&\mapsto(1-d^2)\delta X-\varepsilon d^2V_j,\\[5pt]
      \delta Z&\mapsto\delta Z+\frac{\varepsilon r_0V_j}{2}.
    \end{aligned}$};
  \node[stage,minimum width=2.7cm] (slot) at (4.45,0) {};
  \node at (4.45,.65) {Slot $A_i$};
  \node at (4.45,-.21) {
    $\begin{aligned}
      \delta X&\mapsto a_i\delta X,\\[5pt]
      \delta Z&\mapsto\delta Z.
    \end{aligned}$};
  \node[stage,minimum width=4.2cm] (clock) at (8.3,0) {};
  \node at (8.3,.65) {Clock $T_i$};
  \node at (8.3,-.21) {
    $\begin{aligned}
      \delta X&\mapsto t_i\delta X,\\[5pt]
      \delta Z&\mapsto t_i\delta Z+t_i\eta_i\delta X.
    \end{aligned}$};
  \draw[flow] (label.east) -- (slot.west);
  \draw[flow] (slot.east) -- (clock.west);
\end{tikzpicture}
\captionsetup{hypcap=false}
\captionof{figure}{Each box acts simultaneously on its incoming deviations from the baseline at that stage; the clock map ends after its first two equilibrations.}
\label{fig:eth-block-response}
\end{minipage}
\par\smallskip

Let $\delta X^+,\delta E^+$ be the deviations in $X,E$ after the clock's first two equilibrations, with $\delta E=\delta X+2\delta Z$ evaluated at the block entrance.
Composing the maps yields the required row multiplier and $\delta E^+=t_i\delta E$ when
\[
 a_i(1-d^2)t_i=\lambda_i,\qquad a_i(1-d^2)(1+2\eta_i)=1.
\]
The first identity fixes the $X$-multiplier, and the second makes the incoming deviations contribute $t_i\delta E$ to $\delta E^+$.
The second also implies $a_id^2(1+2\eta_i)=r_0$, cancelling the label term in $\delta E^+$.
Using $\eta_i=1-t_i$, these identities determine
\[
 a_i=\frac{1+2\lambda_i}{3(1-d^2)},\qquad b_i=1-a_i,\qquad
 t_i=\frac{3\lambda_i}{1+2\lambda_i},\qquad \eta_i=1-t_i.
\]
The resulting block response is
\begin{equation}
 \delta X^+=\lambda_i(\delta X-\varepsilon r_0V_j),\qquad
 \delta E^+=t_i\delta E.
 \label{eq:eth-packet-recurrence}
\end{equation}
In particular, a zero incoming $E$-deviation remains zero.
Let $\omega_i=(U_{i+1}-U_i)/3>0$.
To check capacity positivity, expand at the fixed EPS input as $d\downarrow0$:
\[
 \lambda_i=1-3d\omega_i+O(d^2),\qquad
 \eta_i=d\omega_i+O(d^2),\qquad b_i=2d\omega_i+O(d^2).
\]
Because $\omega_i>0$, sufficiently small $d>0$ satisfies $0<a_i,t_i<1$.
Thus $b_i/a_i$ and $\eta_i/t_i$ are positive capacities.

\subparagraph*{Formal baseline targets.}
With the capacities fixed, we define a formal baseline trajectory to specify the slot and nonterminal clock targets without assuming physical feasibility.
The \emph{full arrival} at a contact is the demand pressure after equilibration without target stopping.
Slot and nonterminal clock targets are their baseline full arrivals at supplies 1 and 3, respectively, and remain unchanged under encoding.
Let $(x_1,Y_1,W_1)=(1,4,6)$, representing the intended supply state $(1+x_i,Y_i,W_i)$ before block $i$.
After the label's first equilibration, supply 1 has pressure $1+(1-d^2)x_i$.
Set the slot target to $q_{A,i}=1+a_i(1-d^2)x_i,$ its first full arrival, leaving no baseline slot requirement for supply 2.
The label removes $2r_0-d^2x_i$ from supply 2.
The intended clock equilibrations then define
\begin{equation}
 \begin{aligned}
 x_{i+1}=\lambda_i x_i-\eta_i, \quad
 Y_{i+1}=t_i(Y_i-2r_0+d^2x_i)+\eta_i(1+x_{i+1}),\quad
 W_{i+1}=t_iW_i+\eta_iY_{i+1}.
 \end{aligned}
 \label{eq:eth-clock-recursion}
\end{equation}
These recursions define rational functions independent of $\varepsilon$.
For $i<k$, set $q_{T,i}=W_{i+1}$, the intended third full arrival.

\subparagraph*{EPS reference state and final clock target.}
We now calibrate the final clock at a reference supply state $C$ corresponding to the EPS equality.
Let $S=\sum_jV_j$.
Starting with zero deviation, the local response and telescoping row multipliers predict a final $X$-deviation of
\[
 -\frac{\varepsilon r_0}{1+dU_{k+1}}
       \left(S+d\sum_iU_iV_{\pi(i)}\right)
\]
along the intended route.
Replacing the weighted sum by $T_{\rm src}$ defines the target readout
\[
 \theta=-\frac{\varepsilon r_0(S+dT_{\rm src})}{1+dU_{k+1}}=O(d^4)
 \quad\text{at }\varepsilon=d^2.
\]
The target $X$-readout and baseline $E$-value fix the first two coordinates of $C=(C_1,C_2,C_3)$.
In the reference gas balance, supply 3 and the final clock are both at $C_3$, with combined capacity $1+\eta_k/t_k=1/t_k$.
By~\eqref{eq:eth-label-requirement}, all other source requirements equal their baseline values except for the total label saving $\varepsilon r_0S/2$.
Thus define $C$ by
\[
 \begin{aligned}
 C_1&=1+x_{k+1}+\theta,\\
 3C_1+2C_2&=3(1+x_{k+1})+2Y_{k+1},\\
 C_1+C_2+\frac{C_3}{t_k}
   &=1+x_{k+1}+Y_{k+1}+\frac{W_{k+1}}{t_k}
     +\frac{\varepsilon r_0S}{2}.
 \end{aligned}
\]
These equations uniquely determine $C$.
Explicitly,
\[
 C_2=Y_{k+1}-\frac32\theta,\qquad
 C_3=W_{k+1}+\frac{t_k}{2}(\theta+\varepsilon r_0S).
\]
No reachability assumption is used in this definition.
Set $q_{T,k}=C_3$.
The potential equation fixes $E(C)$ to its baseline value.
Subtracting the final clock's requirement $(\eta_k/t_k)C_3$ from the combined gas equation fixes the completed-source supply total to $C_1+C_2+C_3$, independently of source order.
At $d=0$, the formal trajectory is $(1+x_i,Y_i,W_i)=(2,4,6)$.
Its rational functions and the reference coordinates are regular at zero.
At $\varepsilon=0$, $C=(1+x_{k+1},Y_{k+1},W_{k+1})$; at $\varepsilon=d^2$, all three coordinates differ from these baseline values by $O(d^4)$.
Slot and clock targets tend to $2>1$ and $6>0$ and exceed their initial pressures for small enough $d>0$.

\subparagraph*{Probe parameters.}
We calibrate the probe target to its full third arrival from $C$.
Set $a_G=1/10$ and $b_G=9/10$; the probe has capacity $9$ and starts at pressure $1$.
Define $q_G=a_G(C_3+b_GC_2+b_G^2C_1)+b_G^3$, the probe's full third arrival from $C$.
This target tends to $1.851>1$; thus it exceeds the initial pressure for sufficiently small $d>0$.
All demand parameters are now specified in Table~\ref{tab:eth-demand-construction}.

\par\smallskip
\noindent\begin{minipage}{\linewidth}
\centering\small
\captionsetup{hypcap=false}
\captionof{table}{Demand parameters before the final pressure translation and scaling.}
\label{tab:eth-demand-construction}
\setlength{\tabcolsep}{5pt}
\renewcommand{\arraystretch}{1.15}
\begin{tabularx}{\linewidth}{@{}lcc>{\centering\arraybackslash}X@{}}
\toprule
Demand cylinder & Capacity & Initial pressure & Target pressure\\
\midrule
Label $B$ ($\varepsilon=0$) & $r_0$ & $1$ & $3$\\
Label $B_j^\varepsilon$ ($\varepsilon>0$) & $r_0$ & $1-\varepsilon V_j$ & $3-\tfrac32\varepsilon V_j$\\
Slot $A_i$ & $b_i/a_i$ & $1$ & $q_{A,i}$\\
Clock $T_i$ & $\eta_i/t_i$ & $0$ & $q_{T,i}$\\
Probe $G$ & $9$ & $1$ & $q_G$\\
\bottomrule
\end{tabularx}
\end{minipage}
\par\smallskip
Call labels and slots \emph{low demands}, and a repetition-free sequence of source demands a \emph{source prefix}.

\subsection{Common Contact Routes and the Terminal Bound}
\label{app:eth-common-routes}
For both instances, fix an EPS input and sufficiently small $d>0$.
The \emph{arrival gap} is the full-equilibration pressure minus the target; a demand completes at a contact exactly when this gap is nonnegative.
Source capacities and total requirements are $O(d)$ because label, slot, and clock targets tend to $3$, $2$, and $6$.
Withdrawal during any successful source prefix or partial source fill is bounded by this total; hence $(p_1,p_2,p_3)=(2,4,6)+O(d)$ uniformly.
These separated limits establish the source routes in Table~\ref{tab:encoded-transfer-branches}, with positive flow at every used contact.
The probe's first two full arrivals tend to $1.1$ and $1.39$, below its limiting target $1.851$, with positive flow.
Completion at supply 3 remains a separate arrival-gap test.

\par\smallskip
\noindent\begin{minipage}{\linewidth}
\centering\small
\captionsetup{hypcap=false}
\captionof{table}{Contact routes after a successful source prefix.}
\label{tab:encoded-transfer-branches}
\begin{tabularx}{\linewidth}{@{}l>{\raggedright\arraybackslash}X>{\raggedright\arraybackslash}Xl@{}}
\toprule
Demand cylinder & Cylinder 1 & Cylinder 2 & Cylinder 3\\
\midrule
Label $B_j$ & Equilibration & Target stopping & Unused\\
Slot $A_i$ & Equilibration or target stopping & Target stopping if unfinished & Unused\\
Clock $T_i$ or probe $G$ & Equilibration below target & Equilibration below target & Target stopping on success\\
\bottomrule
\end{tabularx}
\end{minipage}
\par\smallskip

\phantomsection
\label{app:eth-shared-terminal}
Let $\delta X=p_1-C_1$ and $\delta Z=p_1+p_2-C_1-C_2$ for a completed source state $p$ with the same total supply pressure as $C$.
Conservation implies $p_3-C_3=-\delta Z$.
The probe's full third arrival weights the pressure deviations of supplies 1, 2, and 3 by $a_Gb_G^2$, $a_Gb_G$, and $a_G$, respectively.
After the probe's first two equilibrations below target, subtracting $q_G$ from its full third arrival yields $-a_G^2(\delta Z+b_G\delta X)$.
Thus $p_3\ge C_3$ and probe completion require $\delta Z\le0$ and $\delta Z+b_G\delta X\le0$, whose intersection appears in Figure~\ref{fig:eth-terminal-region}.
Because $b_G>1/2$,
\[
 E(p)-E(C)=\delta X+2\delta Z
 =\left(2-\frac1{b_G}\right)\delta Z+\frac1{b_G}(\delta Z+b_G\delta X)\le0.
\]

\par\smallskip
\noindent\begin{minipage}{\linewidth}
\centering
\begin{tikzpicture}[x=1.1cm,y=.85cm,font=\small]
  \fill[black!12] (-2.4,-2.35) -- (2.4,-2.35) -- (2.4,-2.16) -- (0,0) -- (-2.4,0) -- cycle;
  \draw[->,thin] (-2.55,0) -- (2.62,0) node[anchor=west] {$\delta X$};
  \draw[->,thin] (0,-2.49) -- (0,1.32) node[anchor=south] {$\delta Z$};
  \draw[semithick] (-2.4,0) -- (2.4,0);
  \draw[semithick] (-1.3,1.17) -- (2.4,-2.16);
  \draw[dashed,semithick] (-2.3,1.15) -- (2.4,-1.2);
  \fill (0,0) circle[radius=1.7pt];
  \node[anchor=north east,fill=white,inner sep=1pt] at (-.08,-.08) {$C$};
  \node at (-1.22,-1.34) {$E(p)<E(C)$};
  \node at (1.13,.81) {$E(p)>E(C)$};
  \draw[semithick] (3.3,.68) -- (3.8,.68);
  \node[anchor=west,align=left] at (3.95,.68) {Condition $p_3\ge C_3$\\[2pt]$\delta Z\le0$};
  \draw[semithick] (3.3,-.44) -- (3.8,-.44);
  \node[anchor=west,align=left] at (3.95,-.44) {Probe condition\\[2pt]$\delta Z+b_G\delta X\le0$};
  \draw[dashed,semithick] (3.3,-1.56) -- (3.8,-1.56);
  \node[anchor=west,align=left] at (3.95,-1.56) {Potential boundary\\[2pt]$\delta X+2\delta Z=0$};
\end{tikzpicture}
\captionsetup{hypcap=false}
\captionof{figure}{Necessary terminal conditions for $b_G=9/10$; the shaded region meets $E(p)=E(C)$ only at $C$.}
\label{fig:eth-terminal-region}
\end{minipage}
\par\smallskip

\subsection{Canonical Feasibility and Endpoint Response}
\label{app:eth-perturbation}
\phantomsection
\subparagraph*{Canonical feasibility.}
\label{app:eth-canonical-feasibility}
We first show that every baseline canonical word succeeds and reaches $C$ before the probe, and that every encoded canonical word proceeds through the first two contacts of $T_k$.
Fix an EPS input and let $d>0$ be sufficiently small.

Set $\varepsilon=0$ and induct on canonical blocks from the initial state.
Starting from the formal state before block $i$, the label equilibrates with supply 1 and removes $2r_0-d^2x_i$ from supply 2.
The slot reaches $q_{A,i}$ exactly at its first equilibration.
The clock's first two equilibrations yield $(1+x_{i+1},Y_{i+1})$, and its third full arrival is $W_{i+1}=q_{T,i}$, including $i=k$ because $C_3=W_{k+1}$.
Thus every baseline block completes at the next formal state, which proves that the formal trajectory is physically realized.
The final state is $C$, and the calibrated probe's full third arrival equals $q_G$.
The common routes therefore imply its completion.

Set $\varepsilon=d^2$ and let $\delta X_i,\delta Z_i,\delta E_i$ be deviations from the baseline after the first two contacts of $T_i$, with $\delta X_0=\delta Z_0=\delta E_0=0$.
The third contact changes neither $X$ nor $Z$.
Suppose the preceding blocks have completed with $\delta X_{i-1}\le0$ and $\delta E_{i-1}=0$, and consider label $B_{\pi(i)}$.
The label follows its common route.
Its output supply-1 pressure differs from the baseline slot threshold $1+(1-d^2)x_i$ by $(1-d^2)\delta X_{i-1}-\varepsilon d^2V_{\pi(i)}<0$.
Thus the slot equilibrates with supply 1 below target and completes at supply 2.
The clock's first two contacts also follow the common route.
The local response~\eqref{eq:eth-packet-recurrence} now applies to these physical contacts and yields
\[
 \delta X_i=\lambda_i(\delta X_{i-1}-\varepsilon r_0V_{\pi(i)})<0,
 \qquad \delta E_i=0.
\]
The telescoping multipliers imply the exact prefix response
\begin{equation}
 \delta X_i=-\frac{\varepsilon r_0}{1+dU_{i+1}}
                  \sum_{h\le i}(1+dU_h)V_{\pi(h)},\qquad
 \delta Z_i=-\frac{\delta X_i}{2}.
 \label{eq:eth-canonical-prefix-response}
\end{equation}
For $i<k$, the clock targets retain their baseline values, whereas~\eqref{eq:eth-label-requirement} shows that the completed labels save $(\varepsilon r_0/2)\sum_{h\le i}V_{\pi(h)}$ in total gas.
After the first two contacts of $T_i$, the remaining gas difference belongs to supply 3 and the unfinished clock, whose combined capacity is $1/t_i$.
Their full third arrival $h_{3,i}$ therefore satisfies
\[
 \begin{aligned}
 h_{3,i}-q_{T,i}
 =\frac{t_i}{2}\left(\varepsilon r_0\sum_{h\le i}V_{\pi(h)}+\delta X_i\right)
 =\frac{t_i\varepsilon r_0d}{2(1+dU_{i+1})}
                  \sum_{h\le i}(U_{i+1}-U_h)V_{\pi(h)}>0.
 \end{aligned}
\]
Every summand is positive because the row weights increase strictly and the label values are positive.
This proves completion of each nonterminal clock and closes the induction through the first two contacts of $T_k$, without assuming its completion.
If $T_k$ completes, the common routes apply to the subsequent probe.

\phantomsection
\subparagraph*{Canonical endpoint response.}
\label{app:eth-endpoint-algebra}
With $\varepsilon=d^2$ and sufficiently small $d>0$, equation~\eqref{eq:eth-canonical-prefix-response} and the definition of $\theta$ yield, after the first two contacts of $T_k$,
\[
 \begin{aligned}
 p_1-C_1
 =\delta X_k-\theta
 =\frac{\varepsilon r_0d}{1+dU_{k+1}}
       \left(T_{\rm src}-\sum_iU_iV_{\pi(i)}\right)
 =:x_\pi.
 \end{aligned}
\]
Here $\alpha=\varepsilon r_0d/(1+dU_{k+1})>0$ is common to all permutations.
The local response preserves zero $E$-deviation from the baseline, and $C$ has the baseline potential.
Hence $3(p_1-C_1)+2(p_2-C_2)=0$, implying $p_2-C_2=-3x_\pi/2$.
The final clock's third contact changes neither of these pressures.

\section{Order Certificates and Completion of the Construction}
\label{app:eth-clock}
A \emph{source word} lists all $3k$ source demands.
All asymptotic constants below may depend on the fixed EPS input, and the bounds are uniform over words and prefixes.

\subsection{Physical Order and the Terminal Bound}
\label{app:eth-physical-order}
Let $E(p)=3p_1+2p_2$.
Every successful full encoded word satisfies three necessary conditions:
\begin{enumerate}[(i)]
\item The probe $G$ is the last demand.
\item The clocks occur in index order.
\item The supply state $p$ immediately before $G$ satisfies $E(p)\le E(C)$.
\end{enumerate}
Appendix~\ref{app:eth-endpoint-certificates} uses this bound to exclude noncanonical source orders with clocks in index order.
Fix an EPS input and set $\varepsilon=d^2$, with $d>0$ sufficiently small.

\subparagraph*{(i) Probe last.} The probe $G$ must be last because completing it lowers all supply pressures below every remaining source target.
By the common source-prefix estimate and routes in Appendix~\ref{app:eth-common-routes}, a successful probe leaves the first two supplies at pressures tending to $1.1$ and $1.39$ as $d\downarrow0$.
Target stopping leaves the third at a pressure tending to $6-9(1.851-1.39)=1.851$.
All three limits are below $1.9$.
Every remaining source demand has initial pressure tending to at most one and target tending to at least two.
For sufficiently small $d$, these targets exceed all residual supply pressures and all source initial pressures; none can finish afterward.

\subparagraph*{(ii) Clock order.} The clocks must occur in index order.
The target of $T_i$ accounts for withdrawals by $T_1,\ldots,T_i$; any additional clock consumes pressure needed for $T_i$.
A successful clock $T_j$ withdraws $2\omega_jd+O(d^2)$ from supply 3: its capacity is $\omega_jd+O(d^2)$, and its target and pressure after the second equilibration tend to $6$ and $4$.
Low demands leave this supply unchanged, and it starts at $6$.
Let $W(J)$ be the pressure of supply 3 when the set of completed clocks is $J$.
Then
\[
 W(J)=6-2d\sum_{j\in J}\omega_j+O(d^2).
\]
Baseline calibration fixes $q_{T,i}=6-2d\sum_{j\le i}\omega_j+O(d^2)$.
The final-clock target change is $O(d^4)$; thus this expansion also holds under encoding.
Every successful $T_i$ leaves supply 3 at least $q_{T,i}$.
At the first departure from index order, let $T_i$ be the next unused clock.
At this departure, a clock $T_j$ with $j>i$ completes before $T_i$.
When $T_i$ eventually completes, let $J$ be the set of all clocks completed by that moment.
The set $J$ contains $\{1,\ldots,i\}$ and at least the additional index $j$.
Consequently
\[
 W(J)-q_{T,i}=-2d\sum_{j\in J\setminus\{1,\ldots,i\}}\omega_j+O(d^2)<0.
\]
Because the additional clocks have positive weights, their negative contribution of order $d$ dominates the $O(d^2)$ error for sufficiently small $d$.
This violates the requirement that supply 3 have pressure $\ge q_{T,i}$ when $T_i$ completes.

\subparagraph*{(iii) Terminal bound.} The bound follows by combining total-pressure conservation with the feasibility of $T_k$ and $G$.
Let $p=(p_1,p_2,p_3)$ be the supply state immediately before $G$.
The gas-budget identity in Appendix~\ref{app:eth-baseline} fixes the total supply pressure exactly to that of $C$.
Completion of $T_k$ leaves supply 3 at least $q_{T,k}=C_3$, and later low demands leave it unchanged; hence $p_3\ge C_3$.
Let $\delta X=p_1-C_1$ and $\delta Z=p_1+p_2-C_1-C_2$.
Conservation and $p_3\ge C_3$ imply $\delta Z=-(p_3-C_3)\le0$.
The probe's first two contacts are the equilibrations established above.
Its success therefore implies $\delta Z+b_G\delta X\le0$ by the arrival-gap identity in Appendix~\ref{app:eth-shared-terminal}.
Because $b_G>1/2$, the potential difference is a positive linear combination of these two nonpositive quantities:
\[
 E(p)-E(C)
 =\left(2-\frac1{b_G}\right)\delta Z
  +\frac1{b_G}(\delta Z+b_G\delta X)\le0.
\]

\subsection{Baseline Order Penalties and the Encoding Error}
\label{app:eth-endpoint-certificates}
We show that the encoded endpoint potential of every noncanonical source order with clocks in index order exceeds $E(C)$ for sufficiently small $d>0$.
We first detect departures from the canonical slot--clock order through a positive $d^2$ term in the baseline endpoint difference.
If this coefficient vanishes, we detect departures from the canonical label placement through a positive $d^3$ term.
We then bound the encoding error uniformly by $O(d^4)$, which preserves the positive leading term in either case.

\subparagraph*{Baseline and encoded comparison maps.}
We define algebraic comparison maps to track the endpoint potential in both instances.
The baseline below is an algebraic comparison program for the same source order; its physical execution need not succeed.
Its canonical endpoint potential is $E(C)$ even though $C$ denotes the desired encoded endpoint.
Let $X=p_1$, $Z=p_1+p_2$, and $E=X+2Z$.
Define $H_i=1+(1-d^2)x_i$ and $\slotreq i=b_i(H_i-1)$.
Here $H_i$ is the incoming supply-1 pressure whose first slot arrival equals the target, and $\slotreq i$ is the slot's total requirement.
With $(u)_+=\max\{u,0\}$, the baseline reduced maps are
\begin{align*}
 B:\quad &(X,Z)\mapsto((1-d^2)X+d^2,\ Z-2r_0),\\
 A_i:\quad &(X,Z)\mapsto(X-\slotreq i+b_i(H_i-X)_+,\ Z-\slotreq i),\\
 T_i:\quad &(X,Z)\mapsto(t_iX,\ t_iZ+t_i\eta_iX).
\end{align*}
Only the label map on $(X,Z)$ changes under encoding: $B_j^\varepsilon(X,Z)=B(X,Z)-\varepsilon V_j\,(d^2,-r_0/2).$
Both algebraic programs remain defined for every repetition-free source prefix, even when physical filling fails.
Every update branch is the identity at $d=0$; hence every algebraic source prefix has $(X,Z)=(2,6)+O(d)$, uniformly at the fixed input.
Along successful encoded source prefixes, the encoded program agrees with physical filling by Appendix~\ref{app:eth-common-routes}.
\begin{itemize}
\item For $X\ge H_i$, supply 1 alone fills slot $i$.
\item For $X<H_i$, the positive-part correction is the portion of its requirement supplied by supply 2 instead.
\end{itemize}
Equilibration with the label subtracts $d^2\varepsilon V_j$ from $X$, whereas its total requirement $r_0(2-\varepsilon V_j/2)$ adds $r_0\varepsilon V_j/2$ to $Z$ relative to baseline.
At the fixed EPS input, the label pressure shifts are $O(d^2)$ and $r_0=O(d^2)$; hence each encoded label changes the supply-state map by $O(d^4)$.
For a complete source word $w$, let $E_0(w)$ and $E_\varepsilon(w)$ denote the endpoint potentials of these baseline and encoded programs.
The following calculation concerns $E_0(w)$, regardless of baseline physical feasibility.
The remainders are uniform over words at the fixed input.

\subparagraph*{Accounting for the remaining clocks.}
We account for the remaining clocks first to compare order differences through low-demand contributions alone.
To isolate these contributions, we incorporate all future clock maps into the potential, making its value unchanged across each clock contact.
Fix a complete source word $w$ whose clocks occur in index order.
After the first $j$ clocks, use the linear potential $E\circ T_k\circ\cdots\circ T_{j+1}$.
At a clock contact, the value before its reduced map equals the value after that map with the clock removed from the suffix.
The exact remaining-clock map is
\[
 (E\circ T_k\circ\cdots\circ T_{j+1})(X,Z)
 =\left(\prod_{h=j+1}^k t_h\right)
   \left[\left(1+2\sum_{h=j+1}^k\eta_h\right)X+2Z\right].
\]
Indeed, composing the maps $(X,Z)\mapsto t_h(X,Z+\eta_hX)$ multiplies their scalar factors and adds their shear coefficients $\eta_h$.
For $j=k$, the empty product is one and the empty sum is zero, recovering $E$; for $j=0$, the formula includes every clock.
All words start with the same value, and the value at the endpoint is $E$.
A low-demand \emph{jump} is this potential's value after its update minus its value before. Endpoint differences equal differences of summed jumps, using each word's pending clocks and slot branches.

Let $\rho_i=\sum_{r<i}\omega_r$.
For a low demand $D$, let $P_D$ and $Q_D$ be the sums of construction weights $\omega_j$ over preceding slots $A_j$ and clocks $T_j$, respectively.
Expanding the exact suffix formula with $t_h=1-d\omega_h+O(d^2)$ and $\eta_h=d\omega_h+O(d^2)$ yields coefficients $(1,2)+d(\sum_h\omega_h-Q_D)(1,-2)+O(d^2)$ at $D$.
For slot and label supply-1 inputs $X_i^-$ and $X_B$, set $\delta_i=b_i(H_i-X_i^-)_+$.
Their respective $(X,Z)$ changes are $(-\slotreq i+\delta_i,-\slotreq i)$ and $(-d^2(X_B-1),-2r_0)$.

\subparagraph*{Slot and clock order.}
We derive~\eqref{eq:eth-dual-certificate} and show that its $d^2$ coefficient vanishes exactly when the slot--clock projection is $A_1T_1\cdots A_kT_k$.
The canonical recursion~\eqref{eq:eth-clock-recursion} yields $H_i=2-4d\rho_i+O(d^2)$.
Every slot or clock lowers $X$ by $2d$ times its weight, up to $O(d^2)$, whereas labels change $X$ by $O(d^2)$.
Let $\Delta_i=P_{A_i}+Q_{A_i}-2\rho_i$.
This quantity measures the excess or deficit in the total weight of slots and clocks preceding $A_i$ relative to the canonical value $2\rho_i$.
Thus
\[
 H_i-X_i^-=2d\Delta_i+O(d^2),\qquad
 \delta_i=4d^2\omega_i(\Delta_i)_++O(d^3).
\]
The positive-part estimate includes zero by Lipschitz continuity.
The slot requirement $\slotreq i=2d\omega_i+O(d^2)$ is position-independent, and its canonical correction is zero.
Subtracting its canonical jump therefore leaves $\delta_i+2d^2\omega_i(\rho_i-Q_{A_i})+O(d^3)$.
Labels' common leading jumps $-d^2-4r_0$ cancel.
In $\sum_i\omega_iP_{A_i}$, each unordered slot pair contributes its weight product once.
Thus $\sum_i\omega_i(P_{A_i}-\rho_i)=0$.
Hence $\sum_i\omega_i(\rho_i-Q_{A_i})=-\sum_i\omega_i\Delta_i$.
Summing the slot and label contributions and applying this identity with $2x_+-x=|x|$ yields
\begin{equation}
 \begin{aligned}
 E_0(w)-E(C)
 =2d^2\sum_i\omega_i
   \bigl[2(\Delta_i)_++\rho_i-Q_{A_i}\bigr]+O(d^3)
 =2d^2\sum_i\omega_i|\Delta_i|+O(d^3).
 \end{aligned}
 \label{eq:eth-dual-certificate}
\end{equation}
The positive-part correction penalizes excess preceding weight; the linear term supplies the matching penalty for a deficit.
The totals $P_{A_i}+Q_{A_i}$ increase strictly in slot execution order.
If every $\Delta_i=0$, they equal the strictly increasing values $2\rho_i$, forcing slot index order.
Then $P_{A_i}=\rho_i$ and $Q_{A_i}=\rho_i$.
Ordered clocks and positive weights force exactly $A_1T_1\cdots A_kT_k$.

\subparagraph*{Label placement.}
With the projection after deleting labels fixed to $A_1T_1\cdots A_kT_k$, we derive~\eqref{eq:eth-label-placement-certificate} and show that its $d^3$ coefficient vanishes exactly when one label immediately precedes each pair $A_iT_i$.
Let $N_i$ count labels before $A_i$ and $\ell_i$ those between $A_i$ and $T_i$.
Each label lowers $X$ by $d^2+O(d^3)$.
At corresponding nonlabel events, the incoming $X$ values in the baseline executions of $w$ and a canonical word differ by $O(d^2)$; a common slot or clock changes this difference only by $O(d^3)$, because the slot map minus the identity is $b_i$-Lipschitz and the clock slope is $1+O(d)$.
Comparison with the canonical input $H_i$ yields
\[
 H_i-X_i^-=(N_i-i)d^2+O(d^3),\qquad
 \delta_i=2d^3\omega_i(N_i-i)_++O(d^4).
\]
Each slot now has the same pending clocks as in the canonical word; hence its jump difference is $(1+O(d))\delta_i=\delta_i+O(d^4)$.
For a label, $X_B-1=1-2d(P_B+Q_B)+O(d^2)$; hence its jump is
\[
 -d^2-4r_0+3d^3\sum_h\omega_h+d^3(2P_B-Q_B)+O(d^4).
\]
The first three terms cancel in the canonical subtraction.
Slot $i$ precedes $k-N_i$ labels, and clock $i$ precedes $k-N_i-\ell_i$ labels.
Hence $\sum_B(2P_B-Q_B)=\sum_i\omega_i(k-N_i+\ell_i)$.
Subtracting canonical values and adding the slot jumps, with $2x_+-x=|x|$, proves
\begin{equation}
 E_0(w)-E(C)=d^3\sum_i\omega_i\bigl(|N_i-i|+\ell_i\bigr)+O(d^4).
 \label{eq:eth-label-placement-certificate}
\end{equation}
Its coefficient vanishes exactly when $N_i=i$ and $\ell_i=0$ for every $i$: one label immediately before each pair and none inside or after the pairs.
Figure~\ref{fig:eth-order-examples} illustrates the two penalties.

\par\smallskip
\noindent\begin{minipage}{\linewidth}
\centering
\begin{tikzpicture}[x=.66cm,y=.66cm,font=\small]
 \node[anchor=east] at (-.6,0) {Canonical};
 \node[anchor=east] at (-.6,-1) {Slot $A_2$ too early};
 \node[anchor=east] at (-.6,-2) {Extra label before $A_1$};
 \foreach \x/\s in {0/B,1/A_1,2/T_1,3/B,4/A_2,5/T_2}
   \node[draw,rounded corners=1pt,minimum width=.55cm,minimum height=.42cm] at (\x,0) {$\s$};
 \foreach \x/\s in {0/B,1/A_1,2/B,3/A_2,4/T_1,5/T_2}
   \node[draw,rounded corners=1pt,minimum width=.55cm,minimum height=.42cm] at (\x,-1) {$\s$};
 \foreach \x/\s in {0/B,1/B,2/A_1,3/T_1,4/A_2,5/T_2}
   \node[draw,rounded corners=1pt,minimum width=.55cm,minimum height=.42cm] at (\x,-2) {$\s$};
 \draw[->] (5.7,0) -- (6.5,0);
 \draw[->] (5.7,-1) -- (6.5,-1);
 \draw[->] (5.7,-2) -- (6.5,-2);
 \node[anchor=west] at (6.7,0) {$E_0(w)-E(C)=0$};
 \node[anchor=west] at (6.7,-1) {$2d^2\omega_1\omega_2+O(d^3)$};
 \node[anchor=west] at (6.7,-2) {$d^3\omega_1+O(d^4)$};
\end{tikzpicture}
\captionsetup{hypcap=false}
\captionof{figure}{Two-block examples in the baseline reduced program; each $B$ denotes a label.
In the second row, $\Delta_2=-\omega_1$.
In the third, $N_1-1=1$ and all other label-position deviations vanish.}
\label{fig:eth-order-examples}
\end{minipage}
\par\smallskip
\subparagraph*{Encoding error and order exclusion.}
\phantomsection
\label{app:eth-prefix-estimate}
We show that encoding changes every prefix state by only $O(d^4)$ and hence preserves the positive leading penalty for each noncanonical order.
For corresponding encoded and baseline iterates, with $d^2\le1/2$, $\varepsilon=d^2$, and $S=\sum_jV_j$,
\begin{equation}
 \|(X,Z)_\varepsilon-(X,Z)_0\|_\infty
 \le \varepsilon d^2S=O(d^4).
 \label{eq:eth-prefix-bound}
\end{equation}
Because $E=X+2Z$, this implies $|E_\varepsilon-E_0|\le3\varepsilon d^2S=O(d^4)$.
These estimates are uniform over words and prefixes, with constants depending on the fixed EPS input.
For sufficiently small $d$, Appendix~\ref{app:eth-encoding-strategy} has $0<a_i,t_i,1-d^2<1$.
Each $(X,Z)$ map is nonexpansive in the infinity norm.
\begin{itemize}
\item The label's linear part is $\operatorname{diag}(1-d^2,1)$.
\item The slot's $X$-slopes are $a_i$ and $1$, its $Z$-update is a translation independent of $X$, and continuity at the breakpoint extends the bound across branches.
\item The clock's $(X,Z)$ linear part has nonnegative coefficients and row sums $t_i<1$ and $t_i(1+\eta_i)=1-\eta_i^2\le1$.
\end{itemize}
The initial difference is zero.
Each encoded label adds $-\varepsilon V_j(d^2,-r_0/2)$, of norm at most $\varepsilon d^2V_j$ because $r_0/2\le d^2$.
Induction proves~\eqref{eq:eth-prefix-bound}, including when slot branches differ; $\varepsilon=d^2$ yields the stated order.
For the encoded reduced endpoint, \eqref{eq:eth-prefix-bound} adds only $O(d^4)$ to each certificate.
The first $d^2$ coefficient, and the second $d^3$ coefficient on the required slot--clock projection, are therefore unchanged.
Every noncanonical source order with clocks in index order therefore has a positive leading coefficient in one of these certificates; hence $E_\varepsilon(w)>E(C)$ for sufficiently small $d$.

\subsection{Uniform Validation and an Effective Scale Choice}
\label{app:eth-stability}
\label{app:eth-effective-scale}
We replace the preceding sufficiently-small-$d$ conditions by one rational scale of polynomial bit length, computable in polynomial time, that preserves all required physical and algebraic comparison signs.
Use $\varepsilon=0$ for baseline parameters and $\varepsilon=d^2$ for encoded parameters in both computations.

\subparagraph*{Sign preservation principle.}
For an integer polynomial $F$, let $\|F\|_1$ be the sum of its absolute coefficients.
Suppose $F(d)=d^r(a_r+a_{r+1}d+\cdots)$ with $a_r\ne0$ and $\|F\|_1\le2^{B_*}$ for an integer $B_*\ge14$.
Because the coefficients are integers, $|a_r|\ge1$.
At $d=2^{-B_*-3}$, the higher-degree terms satisfy
\[
 \left|\sum_{j>r}a_jd^{j-r}\right|
 \le d\|F\|_1\le d2^{B_*}=\frac18<|a_r|.
\]
Thus $F(d)$ retains the sign of $a_r$, including when $F$ vanishes at zero.
Applying this principle separately to every nonzero comparison numerator and denominator preserves quotient signs and ensures nonzero denominators, without requiring positive denominator signs; identically zero comparisons stay zero.
It remains to compute a polynomially bounded $B_*$ from the input that controls all words and branches simultaneously, without enumerating them.

\subparagraph*{Physical and algebraic computations.}
We express both computations as affine updates and list the comparisons whose signs must be preserved.

For physical filling use homogeneous coordinates $(p_1,p_2,p_3,h,1)$, initially $(2,4,6,0,1)$.
The final coordinate $1$ allows the constant terms of affine updates to be represented by matrix multiplication.
For each demand of capacity $r$, initial pressure $z$, and target $q$, reset $h=z$ and scan supplies $i=1,2,3$, stopping when $h=q$.
Skip a contact if $p_i\le h$; otherwise compare $e=(p_i+rh)/(1+r)$ with $q$ and update
\[
 (p_i,h)\longmapsto
 \begin{cases}
  (e,e),&e<q,\\
  (p_i-r(q-h),q),&e\ge q.
 \end{cases}
\]
The second branch completes the demand; if $h<q$ after supply 3, stop the physical computation with failure.
Lemma~\ref{lem:cascade-supply-order} preserves the initial supply index order between fills.
\begin{itemize}
\item For $e=q$, both updates yield $(q,q)$.
\item For $p_i=h$, equilibration agrees with skipping.
\end{itemize}
Each reset and contact update is affine with rational-function coefficients in $d$; computing $e$ is a comparison row, not another state update.
Thus every physical word uses at most $K=4(3k+1)$ updates.

Reuse the baseline and encoded maps of Appendix~\ref{app:eth-endpoint-certificates} for algebraic comparison.
Embed $(X,Z,1)$ as $(X,Z,0,0,1)$, initially $(2,6,0,0,1)$.
These algebraic computations use at most $3k\le K$ updates, branch on $H_i-X$, and end with the comparison $X+2Z-E(C)$.
They continue through every source word, even after physical failure; agreement with physical filling is used only on successful source prefixes.
Their slot branches coincide at $X=H_i$.
\begin{itemize}
\item Dynamic comparisons. Include all physical contact, route, and post-probe exhaustion comparisons used in Appendices~\ref{app:eth-common-routes} and~\ref{app:eth-physical-order}, and both algebraic terminal comparisons.
\item Static comparisons. Include comparisons for positive capacities, targets above initial pressures, initial supply order, positive pressures after adding one, $0<a_i,t_i,1-d^2,\lambda_i<1$, and $d^2\le1/2$.
\end{itemize}

\subparagraph*{Polynomial-size local data.}
The triangle inequality and coefficient convolution imply $\|F+G\|_1\le\|F\|_1+\|G\|_1$ and $\|FG\|_1\le\|F\|_1\|G\|_1$.
Substituting successive recursion coordinates into equation~\eqref{eq:eth-clock-recursion} makes calibration $k$ affine updates in fixed dimension, with constant-degree primitive rational coefficients of polynomial coefficient bit length.
The product of their $O(k)$ primitive denominators has polynomial degree and logarithmic norm.
After clearing this denominator, the fixed-dimensional matrix product has polynomial degree and logarithmic coefficient norm: its length is $k$, and each entry sums at most $5^{k+1}$ products of $k+2$ entries.
The explicit coordinate formulas for $C$ in Appendix~\ref{app:eth-baseline} define one further affine map with primitive rational coefficients.
Each target, local update, and comparison uses a bounded number of rational operations on the calibrated coordinates and primitive data.
Consequently, all local data admit expanded integer-polynomial quotients of polynomial degree and coefficient bit length, computable in polynomial time.

\subparagraph*{Uniform bounds over all words.}
Clear component denominators separately in each local matrix, initial column, and comparison row, obtaining one nonzero integer-polynomial denominator per array.
Compute $D\ge1$ as the maximum degree of these numerator components and denominators, and $B\ge1$ by rounding up the logarithm of their maximum coefficient norm.
The local list is polynomial in size and includes both parameter choices.
Every word and branch sequence uses arrays from this same list, and every computation has at most $K$ updates.
Thus the same $D,B,K$ bound all resulting comparison degrees and coefficient norms without enumerating words or branch sequences.
After any $\ell\le K$ updates, a comparison is $P/Q$: the numerator sums at most $5^{\ell+1}$ products of $\ell+2$ components, and the denominator multiplies the corresponding array denominators.
The norm product inequality therefore proves
\[
 \begin{aligned}
 \deg P,\deg Q&\le(K+2)D,\quad
 \|P\|_1\le5^{K+1}2^{(K+2)B},\quad
 \|Q\|_1\le2^{(K+2)B}.
 \end{aligned}
\]

\subparagraph*{Choice of a common scale.}
\phantomsection
\label{app:eth-scale-choice}
We now apply the sign preservation principle to the uniform coefficient bounds.
To ensure this simultaneously for all comparison numerators and denominators, choose
\begin{equation}
 B_*=\max\{14,(K+2)B+3(K+1)\},\qquad
 d=2^{-B_*-3},\quad \varepsilon=d^2.
 \label{eq:eth-scale-choice}
\end{equation}
Keep $\varepsilon=0$ in the baseline.
The expanded local data make $D,B,B_*$ polynomially bounded and computable in polynomial time.
Both polynomial norms are at most $2^{B_*}$, and $B_*\ge14$ implies $d<1/10000$.
The principle preserves the limiting signs of every nonzero numerator and denominator, including denominators vanishing at zero; identically zero comparisons stay zero.
In particular, denominators are nonzero at the chosen scale; their signs need not be positive.
Induction along each computation preserves every comparison and branch, including stopping or failure in the physical computation and all later algebraic updates.
Equalities preserve the coincident updates above.

\subparagraph*{Validation of the construction.}
For every noncanonical source word with clocks in index order, Appendix~\ref{app:eth-endpoint-certificates} proves that $E_\varepsilon(w)-E(C)$ has a positive leading term.
This algebraic endpoint comparison is included above and therefore remains positive at the chosen scale, independently of physical feasibility.
For a successful full encoded word, its source prefix has the same algebraic and physical endpoint; a noncanonical prefix therefore contradicts the terminal bound in Appendix~\ref{app:eth-physical-order}.
The preceding appendices establish the required route, completion, and static signs as $d\downarrow0$; the same scale preserves them and the canonical terminal equalities and tests of Lemma~\ref{lem:eth-canonical-computation}.

\subsection{Output Normalization and Size}
\label{app:eth-output}
At the scale in~\eqref{eq:eth-scale-choice}, evaluating a nonzero polynomial $F$ uses $O(\log(1+\|F\|_1)+(\deg F)(B_*+3))$ bits; hence the calibrated parameters are polynomial-bit rationals computable in polynomial time.
Add one to every initial and target pressure to make them positive, then multiply them by the product of their denominators; leave capacities unchanged.
Common translation preserves pressure differences and transfer amounts, and positive scaling preserves every completion test while scaling transferred amounts.
Thus both operations preserve successful words.
There are $O(k)$ pressures, each of polynomial bit length; the bit length of their denominator product is at most the sum of those lengths.
Thus the normalization is computable in polynomial time.
The resulting instance satisfies the output conditions of Lemma~\ref{lem:eth-physical-realization}.

\end{document}